\documentclass{lmcs} %%% last changed 2014-08-20
\pdfoutput=1

\usepackage[utf8]{inputenc}

% LMCS Layouting Macros
\usepackage{lastpage}
\lmcsdoi{17}{3}{22}
\lmcsheading{}{\pageref{LastPage}}{}{}%
{Dec.~01,~2020}{Sep.~02,~2021}{}

%% mandatory lists of keywords 
\keywords{Shapley value, query answering, conjunctive queries, aggregate queries}

%% read in additional TeX-packages or personal macros here:
%% e.g. \usepackage{tikz}

%%\input{myMacros.tex}
%% define non-standard environments BEYOND the ones already supplied 
%% here, for example
\theoremstyle{plain} %\crefname{satz}{Satz}{S\"atze}
%% Do NOT replace the proclamation environments lready provided by
%% your own.

%% due to the dependence on amsart.cls, \begin{document} has to occur
%% BEFORE the title and author information:

\usepackage{amssymb}
\usepackage{cite}
\usepackage{calc}
\usepackage{tikz}
\usepackage{multicol}
\usetikzlibrary{positioning,chains,fit,shapes,calc}
\usepackage{subcaption}
%\captionsetup{compatibility=false}

\newcommand{\revone}[1]{{\color{black}#1}}
\newcommand{\revtwo}[1]{{\color{black}#1}}
\newcommand{\revthree}[1]{{\color{black}#1}}

\usepackage{algorithm,algorithmicx}
\usepackage[noend]{algpseudocode}
\usepackage{xspace}
\usepackage{color, colortbl}

\usepackage{ifmtarg}% http://ctan.org/pkg/ifmtarg
\makeatletter
\newcommand{\toprule}{\hrule height.8pt depth0pt \kern2pt} % Caption top horizontal rule+skip
\newcommand{\midrule}{\kern2pt\hrule\kern2pt} % Caption bottom (or mid) horizontal rule+skip
\newcommand{\bottomrule}{\kern2pt\hrule\relax}% Algorithm bottom rule
\newcommand{\algcaption}[2][]{%
  \refstepcounter{algorithm}%
  \@ifmtarg{#1}
    {\addcontentsline{loa}{figure}{\protect\numberline{\thealgorithm}{\ignorespaces #2}}}
    {\addcontentsline{loa}{figure}{\protect\numberline{\thealgorithm}{\ignorespaces #1}}}%
  \toprule
  \textbf{\fname@algorithm~\thealgorithm}\ #2\par % Caption
  \midrule
}
\makeatother

\def\e#1{\emph{#1}}
\def\scs{\mathbf{S}}
\def\ar{\mathit{ar}}
\def\const{\mathsf{Const}}
\def\pall{\mathord{\mathcal{P}}}

\def\ra{\rightarrow}

\def\reals{\mathbb{R}}

\newcommand{\db}{\mathrm{DB}}

\def\exo{_{\mathsf{x}}}
\def\endo{_{\mathsf{n}}}

\def\shapley{\mathrm{Shapley}}

\newcommand*{\MyDef}{\mathrm{def}}
\newcommand*{\eqdefU}{\ensuremath{\mathop{\overset{\MyDef}{=}}}}% Unscaled version
\newcommand*{\eqdef}{\mathrel{\overset{\MyDef}{\resizebox{\widthof{\eqdefU}}{\heightof{=}}{=}}}}

\def\set#1{\mathord{\{#1\}}}

\def\tup#1{\vec{#1}}
\def\dl{\mathrel{{:}{\text{-}}}}

\def\qcnt{\mathsf{count}}
\def\qsum{\mathsf{sum}}

\def\qmin{\mathsf{min}}
\def\qmax{\mathsf{max}}

\def\is{\mathsf{IS}}
\def\ds{\mathsf{NIS}}

\def\cqrst{\mathsf{q_{\mathsf{RST}}}}

\def\rst{\mathsf{RST}}
\def\var{\mathsf{Vars}}
\def\at{\mathsf{Atoms}}

\def\Pr#1{\mathord{\mathrm{Pr}\left[#1\right]}}

\def\val#1{\mathtt{#1}}
\def\rel#1{\textsc{#1}}
\def\att#1{\textit{#1}}

\newcommand{\algname}[1]{{\sf #1}}
\def\myrulewidth{3.20in}

\def\therule{\rule{\myrulewidth}{0.2pt}}

\newenvironment{insidecode}[3]
{
%\small
\begin{tabular}{p{\myrulewidth}}
%\toprule
\multicolumn{1}{c}{\rule{0mm}{3mm}{\bf #3} $\algname{#1}(\mbox{#2})$\vspace{-0.6em}}\\
\therule\vskip-0.8em\therule
\vspace{0em}
\begin{algorithmic}[1]}
{\end{algorithmic}
\vskip-0.3em\therule
\end{tabular}}

\newenvironment{subroutine}
{\begin{algorithm}\floatname{algorithm}{Subroutine}}
{\end{algorithm}}

\newcounter{subroutine}

\newenvironment{malgorithm}[2]
{
%\begin{algorithm}[#1]\floatname{algorithm}{Algorithm}
\algcaption{#1}\label{#2}
\begin{algorithmic}[1]
}
{
\end{algorithmic}
%\end{algorithm}
\bottomrule
}

				% 1 alias counter
\newtheorem{lemma}[thm]{Lemma}

\def\dblcurlyl{\{\!\!\{}
\def\dblcurlyr{\}\!\!\}}

\def\angs#1{\mathord{\langle #1 \rangle}}
\def\bracs#1{\mathord{[#1]}}

\newcommand{\eat}[1]{}

\def\auf{f^{\mathrm{a}}}
\def\instf{f^{\mathrm{i}}}
\def\pubf{f^{\mathrm{p}}}
\def\citf{f^{\mathrm{c}}}

\newenvironment{repeatresult}[2]
{\vskip0.5em\par\textbf{#1 #2.}\em}
{\vskip1em}

\def\SAT{CntSat\xspace}

\def\fpsharpp{\mathrm{FP}^{\mathrm{\#P}}}
\def\sharpp{\mathrm{\#P}}
\def\result{\mathrm{result}}

\begin{document}

\title{The Shapley Value of Tuples in Query Answering}

\author[E.~Livshits]{Ester Livshits\rsuper{a}}	%required
%\thanks{thanks 1, optional.}	%optional

\author[L.~Bertossi]{Leopoldo Bertossi\rsuper{b}}	%optional

\author[B.~Kimelfeld]{Benny Kimelfeld\rsuper{a}}	%optional

\author[M.~Sebag]{Moshe Sebag\rsuper{a}}	%optional

\address{\lsuper{a}Technion, Haifa, Israel}	%required
\email{esterliv@cs.technion.ac.il, bennyk@cs.technion.ac.il, moshesebag@campus.technion.ac.il}  %optional
\address{\lsuper{b}Univ.~Adolfo Iba\~nez and Millenium Inst. Foundations of Data (IMFD), Chile }	%optional
\email{leopoldo.bertossi@uai.cl}  %optional
%\thanks{Member of RelationalAI's Academic Network.}	%optional
%% etc.

%% required for running head on odd and even pages, use suitable
%% abbreviations in case of long titles and many authors:

%%%%%%%%%%%%%%%%%%%%%%%%%%%%%%%%%%%%%%%%%%%%%%%%%%%%%%%%%%%%%%%%%%%%%%%%%%%

%% the abstract has to PRECEDE the command \maketitle:
%% be sure not to issue the \maketitle command twice!

\begin{abstract}
  \noindent  We investigate the application of the Shapley value to quantifying
  the contribution of a tuple to a query answer. The Shapley value is
  a widely known numerical measure in cooperative game theory and in
  many applications of game theory for assessing the contribution of a
  player to a coalition game. It has been established already in the
  1950s, and is theoretically justified by being the very single
  wealth-distribution measure that satisfies some natural
  axioms. While this value has been investigated in several areas, it
  received little attention in data management. We study this measure
  in the context of conjunctive and aggregate queries by defining
  corresponding coalition games. We provide algorithmic and
  complexity-theoretic results on the computation of Shapley-based
  contributions to query answers; and for the hard cases we present
  approximation algorithms.
\end{abstract}

\maketitle

%% start the paper here:

\section{Introduction}
The Shapley value is named after Lloyd Shapley who introduced the
value in a seminal 1952 article~\cite{shapley:book1952}. He considered
a \e{cooperative game} that is played by a set $A$ of players and is
defined by a \e{wealth function} $v$ that assigns, to each coalition
$S\subseteq A$, the wealth $v(S)$. For instance, in our running
example the players are researchers, and $v(S)$ is the total number of
citations of papers with an author in $S$. As another example, $A$
might be a set of politicians, and $v(S)$ the number of votes that a
poll assigns to the party that consists of the candidates in $S$. The
question is how to distribute the wealth $v(A)$ among the players, or
from a different perspective, how to quantify the contribution of each
player to the overall wealth. For example, the removal of a researcher
$r$ may have zero impact on the overall number of citations, since
each paper has co-authors from $A$. Does it mean that $r$ has no
contribution at all?  What if the removal in turns of \e{every}
individual author has no impact?  Shapley considered distribution
functions that satisfy a few axioms of good behavior. Intuitively,
the axioms state that the function should be invariant under
isomorphism, the sum over all players should be equal to the total
wealth, and the contribution to a sum of wealths is equal to the sum
of separate contributions.  Quite remarkably, Shapley has established
that there is a \e{single} such function, and this function has become
known as the \e{Shapley value}.

The Shapley value is informally defined as follows. Assume that we
select players one by one, randomly and without replacement, starting
with the empty set. Whenever we select the player $p$, its addition to
the set $S$ of players selected so far may cause a change in wealth
from $v(S)$ to $v(S\cup\set{p})$. The Shapley value of $p$ is the
expectation of change that $p$ causes in this probabilistic process.

The Shapley value has been applied in various areas and fields beyond
cooperative game theory
(e.g.,~\cite{aumann2003endogenous,DBLP:conf/stacs/AzizK14}), such as
bargaining foundations in economics~\cite{gul1989bargaining}, takeover
corporate rights in law~\cite{nenova2003value}, pollution
responsibility in environmental
management~\cite{petrosjan2003time,liao2015case}, influence
measurement in social network analysis~\cite{narayanam2011shapley},
and utilization of multiple Internet Service Providers (ISPs) in
networks~\cite{ma2010internet}. Closest to database manegement is the
application of the Shapley value to attributing a level of
\e{inconsistency} to a statement in an inconsistent knowledge
base~\cite{DBLP:journals/ai/HunterK10,DBLP:conf/ijcai/YunVCB18}; the
idea is natural: as wealth, adopt a measure of inconsistency for a set
of logical sentences~\cite{DBLP:journals/jiis/GrantH06}, and then
associate to each sentence its Shapley value. {\color{black}The Shapley value has also become relevant in machine learning, as the SHAP approach to providing numerical scores to feature values of entities under classification~\cite{NIPS2017_8a20a862,Lundberg2020}.}

In this article, we apply the Shapley value to quantifying the
contribution of database facts (tuples) to query results. As in
previous work on quantification of contribution of
facts~\cite{DBLP:journals/pvldb/MeliouGMS11,DBLP:conf/tapp/SalimiBSB16},
we view the database as consisting of two types of facts:
\e{endogenous} facts and \e{exogenous} facts. Exogenous facts are
taken as given (e.g., inherited from external sources) without
questioning, and are beyond experimentation with hypothetical or
counterfactual scenarios. On the other hand, we may have control over
the endogenous facts, and these are the facts for which we reason
about existence and marginal contribution.
%The contribution to the result of a query is measured
%over the endogenous facts, whereas the exogenous facts, which take no
%part of the contribution, represent unquestionable background
%truth. 
Our focus is on queries that can be viewed as mapping databases to
numbers. These include Boolean queries (mapping databases to zero and
one) and aggregate queries (e.g., count the number of tuples in a
multiway join). As a cooperative game, the endogenous facts take the
role of the players, and the result of the query is the wealth. The
core computational problem for a query is then: given a database and
an endogenous fact, compute the Shapley value of the fact.

We study the complexity of computing the Shapley value for Conjunctive
Queries (CQs) and aggregate functions over CQs.  Our main results are
as follows.  We first establish a dichotomy in data complexity for the
class of Boolean CQs without self-joins. Interestingly, our dichotomy
is the same as that of query inference in tuple-independent probabilistic databases~\cite{DBLP:conf/vldb/DalviS04}: if the CQ
is hierarchical, then the problem is solvable in polynomial time, and
otherwise, it is $\fpsharpp$-complete (i.e., complete for the
intractable class of polynomial-time algorithms with an oracle to,
e.g., a counter of the satisfying assignments of a propositional
formula).  The proof, however, is more challenging than that of Dalvi
and Suciu~\cite{DBLP:conf/vldb/DalviS04}, as the Shapley value
involves coefficients that do not seem to easily factor out.  Since
the Shapley value is a probabilistic expectation, we show how to use
the linearity of expectation to extend the dichotomy to arbitrary summations
over CQs without self-joins. For non-hierarchical queries (and, in
fact, all unions of CQs), we show that both Boolean and summation
versions are efficiently approximable (i.e., have a multiplicative
FPRAS) via Monte Carlo sampling.

The general conclusion is that computing the exact Shapley value is
notoriously hard, but the picture is optimistic if approximation is
allowed under strong guarantees of error boundedness. Our results
immediately generalize to non-Boolean CQs and group-by operators,
where the goal is to compute the Shapley value of a fact to each
tuple in the answer of a query. For aggregate functions other than
summation (where we cannot apply the linearity of expectation), the
picture is far less complete, and remains for future
investigation. Nevertheless, we give some positive preliminary results
about special cases of the minimum and maximum aggregate functions.

Various formal measures have been proposed for quantifying the
contribution of a fact $f$ to a query answer.  Meliou et
al.~\cite{DBLP:journals/pvldb/MeliouGMS11} adopted the quantity of
\e{responsibility} that is inversely proportional to the minimal
number of endogenous facts that should be removed to make $f$
counterfactual (i.e., removing $f$ transitions the answer from true to
false). This measure adopts earlier notions of formal causality by
Halpern and Pearl~\cite{DBLP:conf/uai/HalpernP01}. This measure,
however, is fundamentally designed for non-numerical queries, and it
is not at all clear whether it can incorporate the numerical
contribution of a fact (e.g., recognizing that some facts contribute
more than others due to high numerical attributes).  Salimi et
al.~\cite{DBLP:conf/tapp/SalimiBSB16} proposed the \e{causal effect}:
assuming endogenous facts are randomly removed independently and
uniformly, what is the difference in the expected query answer between
assuming the presence and the absence of $f$? Interestingly, as we
show here, this value is the same as the \e{Banzhaf power index} that
has also been studied in the context of wealth distribution in
cooperative games~\cite{10.2307/3689345}, and is different from the
Shapley value~\cite[Chapter 5]{roth1988shapley}. {\color{black} We also show that,
with the exception of a multiplicative approximation, 
our complexity results
extend to  the 
causal-effect measure. While the
justification to measuring fact contribution using one measure over
the other is yet to be established, we believe that the suitability of
the Shapley value is backed by the aforementioned theoretical
justification as well as its massive adoption in a plethora of
fields.

This article is the full version of a conference publication~\cite{DBLP:conf/icdt/LivshitsBKS20}.
We have added all of the
proofs and intermediate results that were excluded from the 
paper. In particular, we have added the full proof of our main result---the dichotomy in the complexity of computing the Shapley value for Boolean CQs (Theorem~\ref{thm:bcq-dichotomy}), as well as the proofs of our results for aggregate queries over CQs (Theorem~\ref{thm:agg-hard} and Proposition~\ref{prop:max}). We also provide a complexity analysis for the computation of the causal-effect measure in Section~\ref{sec:banzhaf} and, consequently, confirm a conjecture we made in the conference publication, stating that our complexity results are also applicable to the causal-effect measure (and Banzhaf power index).
%\benny{Ester - please check whether it is "power index" is lowercase or capital.
%Also, why do we refer to this extension as "Banzhaf" here and "causal effect" in the previous paragraph? We should use a consistent terminology.
%}
}

The remainder of the article is organized as follows. In the next
section, we give preliminary concepts, definitions and notation.  In
Section~\ref{sec:shapley}, we present the Shapley value to measure the
contribution of a fact to a query answer, along with illustrating
examples. In Section~\ref{sec:complexity}, we study the complexity of
calculating the Shapley value. Finally, we discuss past contribution
measures in Section~\ref{sec:measures} and conclude in
Section~\ref{sec:conclusions}.

\section{Preliminaries}

\label{sec:preliminaries}
%We first present our notation, definitions of the main concepts, and
%background material.
\paragraph*{Databases} A (relational) \e{schema} $\scs$ is a
collection of \e{relation symbols} with each relation symbol $R$ in
$\scs$ having an associated arity that we denote by $\ar(R)$.  We
assume a countably infinite set $\const$ of \e{constants} that are
used as database values. If $\tup c=(c_1,\dots,c_k)$ is a tuple of
constants and $i\in\set{1,\dots,k}$, then we use $\tup c[i]$ to refer
to the constant $c_i$. A \e{relation} $r$ is a set of tuples of
constants, each having the same arity (length) that we denote by
$\ar(r)$.  A \e{database} $D$ (over the schema $\scs$) associates with
each relation symbol $R$ a finite relation
$R^D$, such that $\ar(R)=\ar(R^D)$. We denote by $\db(\scs)$ the set
of all databases over the schema $\scs$.  Notationally, we identify a
database $D$ with its finite set of \e{facts} $R(c_1,\dots,c_k)$,
stating that the relation $R^D$ over the $k$-ary relation symbol $R$
contains the tuple $(c_1,\dots,c_k)\in\const^k$.  In particular, two
databases $D$ and $D'$ over $\scs$ satisfy $D\subseteq D'$ if and only
if $R^D\subseteq R^{D'}$ for all relation symbols $R$ of $\scs$.

Following previous work on the explanations and responsibility of facts to
query
answers~\cite{DBLP:conf/mud/MeliouGMS10,DBLP:journals/debu/MeliouGHKMS10},
we view the database as consisting of two types of facts:
\e{exogenous} facts and \e{endogenous} facts. Exogenous facts
represent a context of information that is taken for granted and
assumed not to claim any contribution or responsibility to the result
of a query. Our concern is about \e{the role of the endogenous facts}
in establishing the result of the query. In notation, we denote by
$D\exo$ and $D\endo$ the subsets of $D$ that consist of the exogenous
and endogenous facts, respectively.  Hence, in our notation we have
that $D=D\exo\cup D\endo$.

{
\definecolor{Gray}{gray}{0.9}
\def\emprow{\multicolumn{3}{l}{}}
\begin{figure}[t]
%\small
%\renewcommand{\arraystretch}{1.1}
\centering
\begin{subfigure}[b]{0.24\linewidth}
\begin{tabular}{r|c|c|} 
\cline{1-3}
\rowcolor{Gray}
\multicolumn{3}{l}{\rel{Author} (endo)}\\\cline{2-3}
 & $\att{name}$ & $\att{affil}$ \\\cline{2-3}
$\auf_1$ & $\val{Alice}$ & $\val{UCLA}$\\
$\auf_2$ & $\val{Bob}$ & $\val{NYU}$\\
$\auf_3$ & $\val{Cathy}$ & $\val{UCSD}$\\
$\auf_4$ & $\val{David}$ & $\val{MIT}$\\
$\auf_5$ & $\val{Ellen}$ & $\val{UCSD}$\\
\cline{2-3}
\emprow
\end{tabular}
\end{subfigure}
\begin{subfigure}[b]{0.23\linewidth}
\begin{tabular}{c|c|c|} 
\cline{1-3}
\rowcolor{Gray}
\multicolumn{3}{l}{\rel{Inst} (exo)}\\\cline{2-3}
& $\att{name}$ & $\att{state}$ \\\cline{2-3}
$\instf_1$ & $\val{UCLA}$ & $\val{CA}$\\
$\instf_2$ & $\val{UCSD}$ & $\val{CA}$\\
$\instf_3$ & $\val{NYU}$ & $\val{NY}$\\
$\instf_4$ & $\val{MIT}$ & $\val{MA}$\\
\cline{2-3}
\emprow\\
\emprow
\end{tabular}
\end{subfigure}
\begin{subfigure}[b]{0.24\linewidth}
\begin{tabular}{c|c|c|} 
\cline{1-3}
\rowcolor{Gray}
\multicolumn{3}{l}{\rel{Pub} (exo)}\\\cline{2-3}
& $\att{author}$ & $\att{pub}$ \\\cline{2-3}
$\pubf_1$ & $\val{Alice}$ & $\val{A}$\\
$\pubf_2$ & $\val{Alice}$& $\val{B}$\\
$\pubf_3$ & $\val{Bob}$ & $\val{C}$\\
$\pubf_4$ & $\val{Cathy}$ & $\val{C}$\\
$\pubf_5$ & $\val{Cathy}$ & $\val{D}$\\
$\pubf_6$ & $\val{David}$ & $\val{C}$\\
\cline{2-3}
\end{tabular}
\end{subfigure}
\,
\begin{subfigure}[b]{0.24\linewidth}
\begin{tabular}{c|c|c|} 
\cline{1-3}
\rowcolor{Gray}
\multicolumn{3}{l}{\rel{Citations} (exo)}\\\cline{2-3}
& $\att{paper}$ & $\att{cits}$ \\\cline{2-3}
$\citf_1$ & $\val{A}$ & $\val{18}$\\
$\citf_2$ & $\val{B}$ & $\val{2}$\\
$\citf_2$ & $\val{C}$ & $\val{8}$\\
$\citf_3$ & $\val{D}$ & $\val{12}$\\
\cline{2-3}
\emprow\\
\emprow
\end{tabular}
\end{subfigure}
\caption{\label{fig:DB} The database of the running example. }
\end{figure}
}

\begin{exa}\label{example:db}
  Figure~\ref{fig:DB} depicts the database $D$ of our running example
  from the domain of academic publications. The relation $\rel{Author}$
  stores authors along with their affiliations, which are stored with
  their states in $\rel{Inst}$. The relation $\rel{Pub}$ associates
  authors with their publications, and $\rel{Citations}$ stores the
  number of citations for each paper\footnote{\revone{This example is used for illustrative purposes only and does not express any suggestion of a way to rank researchers.}}. For example, publication
  $\val{C}$ has $\val{8}$ citations and it is written jointly by
  $\val{Bob}$ from $\val{NYU}$ of $\val{NY}$ state,
  $\val{Cathy}$ from $\val{UCSD}$ of $\val{CA}$ state, and $\val{David}$ from $\val{MIT}$ of $\val{MA}$ state. All
  $\rel{Author}$ facts are endogenous, and all remaining facts are
  exogenous. Hence, $D\endo=\set{\auf_1,\auf_2,\auf_3,\auf_4,\auf_5}$ and $D\exo$
  consists of all $f^x_j$ for
  $x\in\set{\mathrm{i},\mathrm{p},\mathrm{c}}$ and relevant $j$.\qed
\end{exa}

\paragraph*{Relational and conjunctive queries}
Let $\scs$ be a schema.  A \e{relational query} is a function that
maps databases to relations. More formally, a relational query $q$ of
arity $k$ is a function $q: \db(\scs)\ra \pall(\const^{k})$ (where $\pall(\const^k)$ is the power set of $\const^k$ that consists
of all subsets of $\const^k$) that maps every
database over $\scs$ to a finite relation $q(D)$ of arity $k$. We
denote the arity of $q$ by $\ar(q)$.  Each tuple $\tup c$ in $q(D)$ is
an \e{answer} to $q$ on $D$.  If the arity of $q$ is zero, then we say
that $q$ is a \e{Boolean} query; in this case, $D\models q$ denotes
that $q(D)$ consists of the empty tuple $()$, while $D\not\models q$
denotes that $q(D)$ is empty.

Our analysis will focus on the special case of \e{Conjunctive Queries}
(CQs). A CQ over the schema $\scs$ is a relational query definable
by a first-order formula of the form $\exists {y}_1\cdots \exists
{y}_m \theta(\tup{x}, {y}_1, \ldots, {y}_m)$, where
$\theta$ is a conjunction of atomic formulas of the form $R(\tup t)$ with variables among
those in $\tup{x}, {y}_1, \ldots, {y}_m$.  In the remainder of
the article, a CQ $q$ will be written shortly as a logic rule, that is, an
expression of the form
$$
q(\tup x) \dl R_1(\tup{t}_1),\dots,R_n(\tup{t}_n)
$$
where each $R_i$ is a relation symbol of $\scs$, each $\tup{t}_i$ is a
tuple of variables and constants with the same arity as $R_i$, and
$\tup{x}$ is a tuple of $k$ variables from
$\tup{t}_1,\dots,\tup{t}_n$.  We call $q(\tup x)$ the \e{head} of $q$,
and $R_1(\tup{t}_1),\dots,R_n(\tup{t}_n)$ the \e{body} of $q$. Each
$R_i(\tup t_i)$ is an \e{atom} of $q$. The variables occurring in the
head are called the \e{head variables}, and we make the standard
safety assumption that every head variable occurs at least once in the
body. The variables occurring in the body but not in the head are
existentially quantified, and are called the \e{existential
  variables}.  The answers to $q$ on a database $D$ are the tuples
$\tup c$ that are obtained by projecting all homomorphisms
from $q$ to $D$ onto the variables of $\tup x$, and replacing each variable with the constant it is
mapped to. A homomorphism from $q$ to $D$ is a mapping of the variables in $q$ to the constants of $D$, such that every atom in $q$ is mapped to a fact in $D$.

A \e{self-join} in a CQ $q$ is a pair of distinct atoms over the same
relation symbol. For example, in the query
$q() \dl R(x,y),S(x),R(y,z)$, the first and third atoms constitute a
self-join.  We say that $q$ is \e{self-join-free} if it has no self-joins, or in other words, every relation symbol occurs at most once in
the body.

Let $q$ be a CQ. For a variable $y$ of $q$, let $A_y$ be the set of
atoms $R_i(\tup{t}_i)$ of $q$ that contain $y$ (that is, $y$ occurs in
$\tup{t}_i$). We say that $q$ is \e{hierarchical} if for all
existential variables $y$ and $y'$ it holds that
$A_y\subseteq A_{y'}$, or $A_{y'}\subseteq A_y$, or
$A_y\cap A_{y'}=\emptyset$~\cite{DBLP:journals/cacm/DalviRS09}.  
%Note
%that this definition is the natural adaptation of the definition of
%Dalvi and Suciu from Boolean CQs (where all variables are existential)
%to general CQs.  
For example, every CQ with at most two atoms is hierarchical. The
smallest non-hierarchical CQ is the following.
\begin{equation}\label{eq:cqrst}
\cqrst() \dl R(x), S(x,y), T(y)
\end{equation}
On the other hand, the query $q(x)\dl R(x), S(x,y), T(y)$, which has a
single existential variable $y$, is hierarchical.

% \paragraph*{Counterfactual causes} We now recall the definition
% of when a fact is a \e{counterfactual cause} for a Boolean query.
Let $q$ be a Boolean query and $D$ a
database, both over the same schema, and let $f\in D\endo$ be an
endogenous fact. We say that $f$ is a \e{counterfactual cause} (\e{for
  $q$
  w.r.t.~$D$})~\cite{DBLP:journals/debu/MeliouGHKMS10,DBLP:journals/pvldb/MeliouGMS11}
if the removal of $f$ causes $q$ to become false; that is, $D\models
q$ and $D\setminus\set{f}\not\models q$.

\begin{exa}\label{example:cq}
We will use the following queries in our examples.
\begin{align*}
  q_1()\dl & \,\, \rel{Author}(x,y),\rel{Pub}(x,z) \\
  q_2()\dl & \,\, \rel{Author}(x,y),\rel{Pub}(x,z),\rel{Citations}(z,w)\\
  q_3(z,w)\dl & \,\, \rel{Author}(x,y),\rel{Pub}(x,z),\rel{Citations}(z,w)\\
  q_4(z,w)\dl & \,\, \rel{Author}(x,y),\rel{Pub}(x,z),\rel{Citations}(z,w), \rel{Inst}(y,\val{CA})
\end{align*}
Note that $q_1$ and $q_2$ are Boolean, whereas $q_3$ and $q_4$ are
not. Also note that $q_1$ and $q_3$ are hierarchical, and $q_2$ and
$q_4$ are not.  Considering the database $D$ of Figure~\ref{fig:DB},
none of the $\rel{Author}$ facts is a counterfactual cause for $q_1$,
since the query remains true even if the fact is removed. The same
applies to $q_2$. However, the fact $\auf_1$ is a counterfactual cause
for the Boolean CQ $q_1'()\dl
\rel{Author}(x,\val{UCLA}),\rel{Pub}(x,z)$, asking whether there is a
publication with an author from UCLA, since $D$ satisfies $q_1'$, but
if we remove Alice from the database, the query $q_1'$ is not longer satisfied, as no other
author from UCLA exists.  \qed
\end{exa}

\paragraph*{Numerical and aggregate-relational queries}
A \e{numerical query} $\alpha$ is a function that maps databases to
numbers.  More formally, a numerical query $\alpha$ is a function
$\alpha:\db(\scs)\ra \reals$ that maps every database $D$ over $\scs$
to a real number $\alpha(D)$. 
%For example,
%

A special form of a numerical query $\alpha$ is what we refer to as an
\e{aggregate-relational query}: a $k$-ary relational query $q$
followed by an aggregate function $\gamma:\pall(\const^k)\ra\reals$
that maps the resulting relation $q(D)$ into a single number
$\gamma(q(D))$. We denote this aggregate-relational query as
$\gamma\bracs{q}$; hence, $\gamma\bracs{q}(D)\eqdef\gamma(q(D))$.

Special cases of aggregate-relational queries include the functions of
the form $\gamma=F\angs{\varphi}$ that transform every tuple $\tup c$ into
a number $\varphi(\tup c)$ via a \e{feature function}
$\varphi:\const^k\ra\reals$, and then contract the resulting bag of
numbers into a single number (hence, $F$ is a numerical function on bags of numbers). Formally, we define
$F\angs{\varphi}\bracs{q}(D)\eqdef F(\dblcurlyl\varphi(\tup c)\mid \tup c\in
q(D)\dblcurlyr)$ where $\dblcurlyl\cdot\dblcurlyr$ is used for bag
notation. For example, if we assume that the $i$th attribute of
$q(D)$ takes a numerical value, then $\varphi$ can simply copy this number (i.e.,
$\varphi(\tup c)=\tup c[i]$); we denote this $\varphi$ by $[i]$. As another
example, $\varphi$ can be the product of two attributes:
$\varphi=[i]\cdot[j]$. We later refer to the following
aggregate-relational queries.
\begin{align*}
\qsum\angs{\varphi}\bracs{q}(D)& \eqdef \sum_{\tup c\in q(D)}\varphi(\tup c)\\
\qmax\angs{\varphi}\bracs{q}(D)& \eqdef 
\begin{cases}
\max\set{\varphi(\tup c)\mid \tup c\in q(D)} & \mbox{if $q(D)\neq\emptyset$;}\\
0 & \mbox{if $q(D)=\emptyset$.}\\
\end{cases}
\end{align*}
Other popular examples include the minimum (defined analogously to maximum), average and median over
the feature values. A special case of $\qsum\angs{\varphi}\bracs{q}$ is
$\qcnt\bracs{q}$ that counts the number of answers for $q$. That is,
$\qcnt\bracs{q}$ is $\qsum\angs{\mathbf{1}}\bracs{q}$, where ``$\mathbf{1}$''
is the feature function that maps every $k$-tuple to the
number $1$.  A special case of $\qcnt\bracs{q}$ is when $q$ is Boolean;
in this case, we may abuse the notation and identify $\qcnt\bracs{q}$
with $q$ itself. Put differently, we view $q$ as the numerical query
$\alpha$ defined by $\alpha(D)=1$ if $D\models q$ and $\alpha(D)=0$ if
$D\not\models q$.

\begin{exa}\label{example:agg}
  Following are examples of aggregate-relational queries over the
  relational queries of Example~\ref{example:cq}.
\begin{itemize}
\item  $\alpha_1 \eqdef \qsum\angs{[2]}\bracs{q_3}$ calculates the
  total number of citations of all published papers  with an author in the database.
\item  $\alpha_2 \eqdef \qcnt\bracs{q_3}$ counts the
  papers in $\rel{Citations}$ with an author in the database.
\item $\alpha_3 \eqdef \qsum\angs{[2]}\bracs{q_4}$ calculates the total
  number of citations of papers by Californians.
\item $\alpha_4 \eqdef \qmax\angs{[2]}\bracs{q_3}$ calculates the
  number of citations for the most cited paper.
\end{itemize}
For $D$ of Figure~\ref{fig:DB} we have 
$\alpha_1(D)=\val{40}$, 
$\alpha_2(D)=\val{4}$,
$\alpha_3(D)=\val{40}$ and
$\alpha_4(D)=\val{18}$.\qed
\end{exa}

In terms of presentation, when we mention general functions $\gamma$
and $\varphi$, we make the implicit assumption that they are computable in
polynomial time with respect to the representation of their input.
Also, observe that our modeling of an aggregate-relational query does
not allow for \e{grouping}, since a database is mapped to a single
number.  This is done for simplicity of presentation, and all concepts
and results of this article generalize to grouping as in traditional
modeling (e.g.,~\cite{DBLP:journals/jacm/CohenNS07}). This is
explained in the next section.

\paragraph*{Shapley value} Let $A$ be a finite set of \e{players}.
A \e{cooperative game} is a function $v:\pall(A)\ra \reals$, such that
$v(\emptyset)=0$.  The value $v(S)$ represents a value, such as
wealth, jointly obtained by $S$ when the players of $S$ cooperate. The
\e{Shapley value}~\cite{shapley:book1952} measures the share of each
individual player $a\in A$ in the gain of $A$ for the cooperative game
$v$. Intuitively, the gain of $a$ is as follows. Suppose that we form
a team by taking the players one by one, randomly and uniformly
without replacement; while doing so, we record the change of $v$ due to the addition of $a$ as the random
contribution of $a$. Then
the Shapley value of $a$ is the expectation of the random
contribution.
\begin{equation}\label{eq:shapley-generic}
\shapley(A,v,a)\eqdef \frac{1}{|A|!}\sum_{\sigma\in \Pi_A}\big(v(\sigma_a\cup\set{a})-v(\sigma_a)\big)
\end{equation}
where $\Pi_A$ is the set of all possible permutations over the players in $A$, and for each permutation $\sigma$ we denote by $\sigma_a$ the set of players that appear before $a$ in the permutation.

An alternative formula for the Shapley value is the following.
\begin{equation}\label{eq:shapley-AB}
   \shapley(A,v,a)\eqdef \sum_{B\subseteq
  A\setminus\set{a}}\hskip-0.5em\frac{|B|!\cdot (|A|-|B|-1)!}{|A|!}
\Big(v(B\cup\set{a})-v(B)\Big)
\end{equation}
Note that $|B|!\cdot (|A|-|B|-1)!$ is the number of permutations over
$A$ such that all players in $B$ come first, then $a$, and then all
remaining players. For further reading, we refer the reader to the
book by Roth~\cite{roth1988shapley}.

\section{Shapley Value of Database Facts}\label{sec:shapley}
Let $\alpha$ be a numerical query over a schema $\scs$, and let $D$ be
a database over $\scs$. We wish to quantify the contribution of every
endogenous fact to the result $\alpha(D)$. For that, we view $\alpha$
as a cooperative game over $D\endo$, where the value of every subset
$E$ of $D\endo$ is $\alpha(E\cup D\exo)$.

\begin{defi}[Shapley Value of Facts]\label{def:shapley}
  Let $\scs$ be a schema, $\alpha$ a numerical query, $D$ a database,
  and $f$ an endogenous fact of $D$. The \e{Shapley value} of $f$ for
  $\alpha$, denoted $\shapley(D,\alpha,f)$, is the value
  $\shapley(A,v,a)$ as given in~\eqref{eq:shapley-generic}, where:
\begin{itemize}
\item $A=D\endo$;
\item $v(E)=\alpha(E\cup D\exo)-\alpha(D\exo)$ for all $E\subseteq A$;
\item $a=f$.
\end{itemize}
That is, $\shapley(D,\alpha,f)$ is the Shapley value of $f$ in the
cooperative game that has the endogenous facts as the set of players
and values each team by the quantity it adds to $\alpha$.
\end{defi}

The choice of $v$ is natural in that the first term collects answers where endogenous facts may interact with exogenous facts, but we remove those answers that come only from exogenous facts. As a special case, if $q$ is a Boolean query, then $\shapley(D,q,f)$
is the same as the value $\shapley(D,\qcnt[q],f)$. In this case, the
corresponding cooperative game takes the values $0$ and $1$, and the
Shapley value then coincides with the \e{Shapley-Shubik
  index}~\cite{RePEc:cup:apsrev:v:48:y:1954:i:03:p:787-792_00}.  Some
fundamental properties of the Shapley value~\cite{shapley:book1952} are reflected here as
follows:
\begin{itemize}
\item $\shapley(D,a\cdot\alpha+b\cdot\beta,f)=a\cdot\shapley(D,\alpha,f)+b\cdot\shapley(D,\beta,f)$.
\item $\alpha(D)=\alpha(D\exo)+\sum_{f\in D\endo}\shapley(D,\alpha,f)$.
\end{itemize}

\begin{rem}
  Note that $\shapley(D,\alpha,f)$ is defined for a general numerical
  query $\alpha$. The definition is immediately extendible to queries
  with \e{grouping} (producing tuples of database constants and
  numbers~\cite{DBLP:journals/jacm/CohenNS07}), where we would measure
  the responsibility of $f$ for an answer tuple $\tup a$ and write
  something like $\shapley(D,\alpha,\tup a,f)$. In that case, we treat
  every group as a separate numerical query.  We believe that focusing
  on numerical queries (without grouping) allows us to keep the
  presentation considerably simpler while, at the same time, retaining
  the fundamental challenges.  \qed
\end{rem}

In the remainder of this section, we illustrate the Shapley value on 
our running example.

\begin{exa}\label{ex:bcq-ptime}
  We begin with a Boolean CQ, and specifically $q_1$ from
  Example~\ref{example:cq}. Recall that the endogenous facts
  correspond to the authors.  As Ellen has no publications, her
  addition to any $D\exo\cup E$ where $E\subseteq D\endo$ does not
  change the satisfaction of $q_1$. Hence, its Shapley value is zero:
  $\shapley(D,q_1,\auf_5)=0$. The fact $\auf_1$ changes the query
  result if it is either the first fact in the permutation, or it is the second fact after $\auf_5$. There are $4!$ permutations that satisfy
  the first condition, and $3!$ permutations that satisfy the second.
  The contribution of $\auf_1$ to the query result is one in each of
  these permutations, and zero otherwise. Therefore, we have
  $\shapley(D,q_1,\auf_1)=\frac{4!+3!}{120}=\frac14$. The same argument
  applies to $\auf_2$, $\auf_3$ and $\auf_4$, and so,
  $\shapley(D,q_1,\auf_2)=\shapley(D,q_1,\auf_3)=\shapley(D,q_1,\auf_4)=\frac14$.  We get the
  same numbers for $q_2$, since every paper is mentioned in the
  $\rel{Citations}$ relation. Note that the value of the query $q_1$ on the database is $1$, and it holds that $\sum_{i=1}^5 \shapley(D,q_1,\auf_i)=4\cdot\frac14+0=1$; hence, the second fundamental property of the Shapley value mentioned above is satisfied.
   
  While Alice, Bob, Cathy and David have the same Shapley value for $q_1$,
  things change if we consider the relation $\rel{pub}$ endogenous as
  well: the Shapley value of Alice and Cathy will be higher than Bob's and David's
  values, since they have more publications. Specifically, the fact
  $\auf_1$, for example, will change the query result if and only if
  at least one of $\pubf_1$ or $\pubf_2$ appears earlier in the
  permutation, and no pair among $\set{\auf_2,\pubf_3}$,
  $\set{\auf_3,\pubf_4}$, $\set{\auf_3,\pubf_5}$, and $\set{\auf_4,\pubf_6}$ appears earlier
  than $\auf_1$. By rigorous counting, we can show that there are: $2$
  such sets of size one, $17$ such sets of size two, $56$ such sets of size three, $90$ such sets of size four, $73$ such sets of size five, $28$ such sets of size six, and $4$ such sets of size seven.  Therefore,
  the Shapley value of $\auf_1$ is:
  \begin{align*}
    \shapley(D,q_1,\auf_1)&=2\cdot\frac{(11-2)!1!}{11!}+17\cdot\frac{(11-3)!2!}{11!}+56\cdot\frac{(11-4)!3!}{11!}+90\cdot\frac{(11-5)!4!}{11!}\\
    &+73\cdot\frac{(11-6)!5!}{11!}+28\cdot\frac{(11-7)!6!}{11!}+4\cdot\frac{(11-8)!7!}{11!}=\frac{442}{2520}
  \end{align*}
  We can similarly compute the Shapley value for the rest of the
  authors, concluding that $\shapley(D,q_1,\auf_2)=\shapley(D,q_1,\auf_4)=\frac{241}{2520}$
  and $\shapley(D,q_1,\auf_3)=\frac{442}{2520}$. Hence, the Shapley
  value is the same for Alice and Cathy, who have two publications
  each, and lower for Bob and David, that have only one publication.  
\qed
\end{exa}

The following example, taken from Salimi et
al.~\cite{DBLP:conf/tapp/SalimiBSB16}, illustrates the Shapley value
on (Boolean) graph reachability.
\begin{exa} \label{ex:datalogq} Consider the following database
  $G$ defined via the relation symbol $\rel{Edge}/2$.
\begin{center}
\vskip-0.5em
\scalebox{1.2}{\begin{picture}(0,0)%
\includegraphics{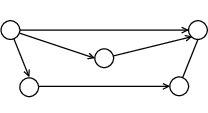}%
\end{picture}%
\setlength{\unitlength}{3947sp}%
\begingroup\makeatletter\ifx\SetFigFont\undefined%
\gdef\SetFigFont#1#2#3#4#5{%
  \reset@font\fontsize{#1}{#2pt}%
  \fontfamily{#3}\fontseries{#4}\fontshape{#5}%
  \selectfont}%
\fi\endgroup%
\begin{picture}(1666,886)(68,-425)
\put(1276, 14){\makebox(0,0)[lb]{\smash{{\SetFigFont{9}{10.8}{\familydefault}{\mddefault}{\updefault}{\color[rgb]{0,0,0}$e_3$}%
}}}}
\put(901,314){\makebox(0,0)[b]{\smash{{\SetFigFont{9}{10.8}{\familydefault}{\mddefault}{\updefault}{\color[rgb]{0,0,0}$e_1$}%
}}}}
\put(901,-361){\makebox(0,0)[b]{\smash{{\SetFigFont{9}{10.8}{\familydefault}{\mddefault}{\updefault}{\color[rgb]{0,0,0}$e_5$}%
}}}}
\put(226,-61){\makebox(0,0)[rb]{\smash{{\SetFigFont{9}{10.8}{\familydefault}{\mddefault}{\updefault}{\color[rgb]{0,0,0}$e_4$}%
}}}}
\put(1651,-61){\makebox(0,0)[lb]{\smash{{\SetFigFont{9}{10.8}{\familydefault}{\mddefault}{\updefault}{\color[rgb]{0,0,0}$e_6$}%
}}}}
\put(601, 14){\makebox(0,0)[rb]{\smash{{\SetFigFont{9}{10.8}{\familydefault}{\mddefault}{\updefault}{\color[rgb]{0,0,0}$e_2$}%
}}}}
\put(901,-24){\makebox(0,0)[b]{\smash{{\SetFigFont{9}{10.8}{\familydefault}{\mddefault}{\updefault}{\color[rgb]{0,0,0}$c$}%
}}}}
\put(301,-256){\makebox(0,0)[b]{\smash{{\SetFigFont{9}{10.8}{\familydefault}{\mddefault}{\updefault}{\color[rgb]{0,0,0}$d$}%
}}}}
\put(1501,-249){\makebox(0,0)[b]{\smash{{\SetFigFont{9}{10.8}{\familydefault}{\mddefault}{\updefault}{\color[rgb]{0,0,0}$e$}%
}}}}
\put(1651,201){\makebox(0,0)[b]{\smash{{\SetFigFont{9}{10.8}{\familydefault}{\mddefault}{\updefault}{\color[rgb]{0,0,0}$b$}%
}}}}
\put(151,201){\makebox(0,0)[b]{\smash{{\SetFigFont{9}{10.8}{\familydefault}{\mddefault}{\updefault}{\color[rgb]{0,0,0}$a$}%
}}}}
\end{picture}%
}
\vskip-0.5em
\end{center}
Here, we assume that all edges $e_i$ are endogenous facts.  Let
$p_{ab}$ be the Boolean query (definable in, e.g., Datalog) that
determines whether there is a path from $a$ to $b$.  Let us calculate
$\shapley(G,p_{ab},e_i)$ for different edges $e_i$.  Intuitively, we
expect $e_1$ to have the highest value since it provides a direct path
from $a$ to $b$, while $e_2$ contributes to a path only in the
presence of $e_3$, and $e_4$ enables a path only in the presence of
both $e_5$ and $e_6$. We show that, indeed, it holds that
$\shapley(G,p_{ab},e_1) > \shapley(G,p_{ab},e_2) >
\shapley(G,p_{ab},e_4)$.

{\color{black}To illustrate the calculation, observe that there are $2^5$ subsets of
$G$ that do not contain $e_1$, and among them, the subsets that
satisfy $p_{ab}$ are the supersets of $\set{e_2,e_3}$ and
$\set{e_4,e_5,e_6}$. Hence, we have that:  
\begin{eqnarray*}
\shapley(G,p_{ab},e_1) &=& \frac{(6 - 0 - 1)! \times 0!}{6!} + 5 \times\frac{(6 - 1 - 1)!\times 1!}{6!} +\\ && \left({{5}\choose{2}}-1\right)\times \frac{(6-2-1)!2!}{6!} + \left({{5}\choose{3}}-4\right)\times \frac{(6-3-1)!3!}{6!}\\
&=& \frac{1}{6} + 5\times\frac{1}{30} + (10 -1)\times \frac{1}{60} + (10 -4) \times \frac{1}{60} = \frac{35}{60}
\end{eqnarray*}

Similarly, there are $2^5$ subsets of
$G$ that do not contain $e_2$, and among them, the subsets that
satisfy $p_{ab}$ are the supersets of $\set{e_1}$ and
$\set{e_4,e_5,e_6}$. Then, 
\begin{eqnarray*}
\shapley(G,p_{ab},e_2) &=& \frac{(6-1-1)! 1!}{6!} + 3\times \frac{(6-2-1)! 2!}{6!} + 3\times \frac{(6-3-1)! 3!}{6!} = \frac{8}{60}
\end{eqnarray*}
A similar reasoning shows that  
$\shapley(G,p_{ab},e_3)=\frac{8}{60}$. Finally, 
among the $2^5$ subsets of $G$ that do not contain $e_4$, those that satisfy $p_{ab}$ are the supersets of $\set{e_1}$ and
$\set{e_2,e_3}$, and it holds that:
\begin{eqnarray*}
\shapley(G,p_{ab},e_4) &=&  \frac{(6-2-1)! 2!}{6!} + 2\times \frac{(6-3-1)! 3!}{6!} = \frac{3}{60}
\end{eqnarray*}
Similarly, $\shapley(G,p_{ab},e_5)=\shapley(G,p_{ab},e_6)=\frac{3}{60}$.}
\qed\end{exa}

Lastly, we consider aggregate functions over conjunctive queries.

\begin{exa}\label{example:count}
  We consider the queries $\alpha_1$, $\alpha_2$, and $\alpha_4$ from
  Example~\ref{example:agg}. Ellen has no publications; hence,
  $\shapley(D,\alpha_j,\auf_5)=0$ for $j\in\set{1,2,4}$. The
  contribution of $\auf_1$ is the same in every permutation ($20$ for
  $\alpha_1$ and $2$ for $\alpha_2$) since Alice is the single author
  of two published papers that have a total of $20$ citations. Hence,
  $\shapley(D,\alpha_1,\auf_1)=20$ and
  $\shapley(D,\alpha_2,\auf_1)=2$. The total number of citations of
  Cathy's papers is also $20$; however, Bob and David are her
  coauthors on paper C. Hence, if the fact $\auf_3$ appears before
  $\auf_2$ and $\auf_4$ in a permutation, its contribution to the query
  result is $20$ for $\alpha_1$ and $2$ for $\alpha_2$, while if
  $\auf_3$ appears after at least one of $\auf_2$ or $\auf_4$ in a
  permutation, its contribution is $12$ for $\alpha_1$ and $1$ for
  $\alpha_2$. Clearly, $\auf_2$ appears before both $\auf_3$ and
  $\auf_4$ in one-third of the permutations. Thus, we have that
  $\shapley(D,\alpha_1,\auf_3)=\frac{1}{3}\cdot 20+\frac{2}{3}\cdot
  12=\frac{44}{3}$ and $\shapley(D,\alpha_2,\auf_3)=\frac{1}{3}\cdot
  2+\frac{2}{3}\cdot 1=\frac{4}{3}$. Using similar computations we
  obtain that
  $\shapley(D,\alpha_1,\auf_2)=\shapley(D,\alpha_1,\auf_4)=\frac83$
  and
  $\shapley(D,\alpha_2,\auf_2)=\shapley(D,\alpha_2,\auf_4)=\frac13$.

We conclude that the Shapley value of Alice, who is the single author of two papers with a total of $20$ citations, is higher than the Shapley value of Cathy who also has two papers with a total of $20$ citations, but shares one paper with other authors. Bob and David have the same Shapley value, since they share a single paper, and this value is the lowest among the four, as they have the lowest number of papers and citations.

Finally, consider $\alpha_4$. The contribution of $\auf_1$ in this
case depends on the maximum value before adding $\auf_1$ in the
permutation (which can be $0$, $8$ or $12$). For example, if $\auf_1$
is the first fact in the permutation, its contribution is $18$ since
$\alpha_4(\emptyset)=0$. If $\auf_1$ appears after $\auf_3$, then its
contribution is $6$, since $\alpha_4(S)=12$ whenever $\auf_3\in S$. We
have that $\shapley(D,\alpha_4,\auf_1)=10$,
$\shapley(D,\alpha_4,\auf_2)=\shapley(D,\alpha_4,\auf_4)=2$ and
$\shapley(D,\alpha_4,\auf_3)=4$ (we omit the computations here). We
see that the Shapley value of $\auf_1$ is much higher than the rest,
since Alice significantly increases the maximum value when added to
any prefix. If the number of citations of paper $\val{C}$ increases to
$16$, then $\shapley(D,\alpha_4,\auf_1)=6$, hence lower. This is
because the next highest value is closer; hence, the contribution of
$\auf_1$ diminishes.
\qed
\end{exa}

\section{Complexity Results}\label{sec:complexity}
In this section, we give complexity results on the computation of the
Shapley value of facts. We begin with exact evaluation for Boolean CQs
(Section~\ref{sec:bcq}), then move on to exact evaluation on
aggregate-relational queries (Section~\ref{sec:agg}), and finally
discuss approximate evaluation (Section~\ref{sec:approx}).  In the
first two parts we restrict the discussion to CQs without self-joins,
and leave the problems open in the presence of self-joins. However,
the approximate treatment in the third part covers the general class
of CQs (and beyond).

\subsection{Boolean Conjunctive Queries}\label{sec:bcq}
We investigate the problem of computing the (exact)
Shapley value w.r.t.~a Boolean CQ without self-joins. Our main result
in this section is a full classification of (i.e., a dichotomy in) the
data complexity of the problem. As we show, the classification
criterion is the same as that of query evaluation over
tuple-independent probabilistic
databases~\cite{DBLP:conf/vldb/DalviS04}: hierarchical CQs without
self-joins are tractable, and non-hierarchical ones are intractable.

\begin{thm}\label{thm:bcq-dichotomy}
  Let $q$ be a Boolean CQ without self-joins. If $q$ is hierarchical,
  then computing $\shapley(D,q,f)$ can be done in polynomial time, given $D$
  and $f$ as input. Otherwise, the problem is $\fpsharpp$-complete.
\end{thm}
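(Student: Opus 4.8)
The plan is to split the statement into a tractability part and a hardness part, mirroring the Dalvi--Suciu dichotomy but adapting the arguments to cope with the Shapley coefficients, which, unlike probabilities, do not factor multiplicatively across independent query components.

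For the tractability direction, I would first normalize the problem: using the second fundamental property (that $\alpha(D)$ is the sum of $\alpha(D\exo)$ and the Shapley values of the endogenous facts), together with linearity, it suffices to compute, for a fixed endogenous fact $f$, a weighted count of the subsets $E\subseteq D\endo\setminus\set{f}$ with $E\cup D\exo\not\models q$ but $E\cup\set{f}\cup D\exo\models q$, broken down by $|E|$. Reformulating \eqref{eq:shapley-AB}, the key quantity is, for each $k$, the number of $k$-subsets $B$ of the other endogenous facts with $v(B\cup\set f)-v(B)=1$; equivalently I would count $k$-subsets that \emph{do} satisfy $q$ when $f$ is added but \emph{do not} without it. I would compute this by inclusion--exclusion: count $k$-subsets $B$ with $B\cup\set f\cup D\exo\models q$ minus those with $B\cup D\exo\models q$. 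So the heart of the matter is: given a self-join-free CQ $q$, a database $D$, and a target size $k$, count the endogenous subsets of size $k$ that satisfy $q$ (together with the exogenous facts). For hierarchical CQs I would do this recursively on the structure exploited by Dalvi--Suciu: pick the variable (or the ``root'' of a connected component) whose atom-set contains all others; the satisfying subsets decompose across the disjoint ``branches'' hanging off the distinct values of that variable, and because the branches touch disjoint sets of facts, a size-indexed count composes by \emph{polynomial convolution} (the size-$k$ count for a union of independent parts is the convolution of the size-indexed counts of the parts). Tracking a polynomial of degree $|D\endo|$ in a formal size-variable keeps everything polynomial-time; the base cases (single atom, or a query with no existential variables) are direct. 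I would then plug the resulting size-indexed counts into the Shapley coefficients $\frac{|B|!(|A|-|B|-1)!}{|A|!}$ and sum.

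For the hardness direction, I would show $\fpsharpp$-completeness for every non-hierarchical self-join-free CQ. Membership in $\fpsharpp$ is routine: \eqref{eq:shapley-AB} is a polynomially bounded sum of rational terms each requiring a $\sharpp$ count of satisfying subsets of a given size, which a $\sharpp$ oracle delivers. For the lower bound, the standard step is to reduce from the smallest obstruction, $\cqrst()\dl R(x),S(x,y),T(y)$, and then lift to arbitrary non-hierarchical queries by the usual syntactic manipulation (a non-hierarchical $q$ has two existential variables $y,y'$ with incomparable, overlapping atom-sets, which one specializes/collapses to recover the pattern of $\cqrst$ while keeping the query self-join-free). The core is therefore a reduction establishing $\fpsharpp$-hardness of computing $\shapley(D,\cqrst,f)$. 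I would reduce from counting independent sets (or another $\sharpp$-hard counting problem on graphs / bipartite structures): given a graph, build a database whose $S$-facts are the edges (endogenous), with $R$- and $T$-facts encoding the two endpoints, so that a subset $E$ of $S$-facts ``satisfies'' $\cqrst$ iff $E$ is nonempty after some deletion pattern --- arranged so that the number of $k$-subsets avoiding the query equals the number of size-$k$ independent sets (or a closely related quantity). Since $\shapley(D,\cqrst,f)$ is a known fixed linear combination of these size-indexed counts, I would recover the individual counts by varying $f$ (or by padding with dummy endogenous facts to shift the size parameter) and solving the resulting linear system --- this Vandermonde-style interpolation is exactly where the Shapley coefficients, being nonzero and ``generic'' enough across sizes, let one invert.

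The main obstacle I anticipate is precisely the non-factoring of the Shapley coefficients noted in the introduction. In the probabilistic-database proof, the probability of a disjunction of independent events multiplies cleanly; here the analogous ``independent'' decomposition only gives a clean composition at the level of \emph{size-indexed counts} (via convolution), and the Shapley weights $\frac{|B|!(|A|-|B|-1)!}{|A|!}$ must be applied only at the very end, after all counts are assembled. Getting the recursion to carry enough information --- a full polynomial in the size parameter rather than a single number --- without blowing up, and making sure the inclusion--exclusion for ``$f$ flips the answer'' interacts correctly with this size bookkeeping, is the delicate part of the tractable side. On the hardness side, the corresponding difficulty is ensuring that the reduction produces a database in which the target counting quantity is extractable from the single number $\shapley(D,\cqrst,f)$; this is what forces the interpolation argument, and verifying that the relevant coefficient matrix is nonsingular is the crux there.
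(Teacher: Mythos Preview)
Your overall architecture matches the paper's proof almost exactly: on the tractable side, reduce the Shapley value to size-indexed counts $|\Sat(D,q,k)|$ of endogenous $k$-subsets satisfying $q$, compute these recursively along the hierarchical structure (root variable / disjoint components), combine by convolution in the size parameter, and apply the Shapley coefficients only at the end; on the hardness side, prove $\fpsharpp$-hardness for $\cqrst$ via a counting-independent-sets reduction that sets up a full-rank linear system in the size-stratified counts $|\is(g,k)|$, then lift to arbitrary non-hierarchical self-join-free CQs by the standard syntactic embedding.

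Two concrete points need fixing in your hardness sketch. First, your encoding for $\cqrst$ has the endogenous/exogenous roles inverted: if the $S$-facts (edges) are endogenous and $R,T$ (vertices) exogenous, then any nonempty edge set already satisfies $\cqrst$, so the size-indexed ``avoiding'' counts are trivial and carry no information about independent sets. The construction that works --- and that the paper uses --- makes the vertex facts $R(v),T(u)$ endogenous and the edge facts $S(v,u)$ exogenous; then a subset $E$ of vertices fails $\cqrst$ iff $E$ is an independent set of the bipartite graph, which is exactly the link you want. Second, the linear system you obtain is not Vandermonde. The paper pads with $j$ fresh $T$-facts connected only to the distinguished fact $f$ (for $j=1,\dots,n+1$), together with an auxiliary instance $D_0$ to isolate a correction term, and the resulting coefficient matrix has entries $k!\,(n+j-k)!$; after a column scaling and reversal this becomes the matrix $\big((i+j+1)!\big)_{i,j}$, whose nonsingularity is a classical determinant identity (Bacher). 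Your ``generic enough to invert'' intuition is correct, but you should not expect a Vandermonde argument to go through directly.
 Once these two details are corrected, your plan is the paper's proof.
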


Recall that $\fpsharpp$ is the class of functions computable in
polynomial time with an oracle to a problem in $\sharpp$ (e.g.,
counting the number of satisfying assignments of a propositional
formula).  This complexity class is considered intractable, and is
known to be above the polynomial hierarchy (Toda's
theorem~\cite{DBLP:journals/siamcomp/Toda91}).

\begin{exa}
  Consider the query $q_1$ from Example~\ref{example:cq}. This query
  is hierarchical; hence, by Theorem~\ref{thm:bcq-dichotomy},
  $\shapley(D,q_1,f)$ can be calculated in polynomial time, given $D$
  and $f$. On the other hand, the query $q_2$ is not hierarchical. Thus, Theorem~\ref{thm:bcq-dichotomy} asserts that computing
  $\shapley(D,q_2,f)$ is $\fpsharpp$-complete.\qed
\end{exa}

In the rest of this subsection, we discuss the proof of
Theorem~\ref{thm:bcq-dichotomy}. While the tractability condition is
the same as that of Dalvi and Suciu~\cite{DBLP:conf/vldb/DalviS04}, it
is not clear whether and/or how we can use their dichotomy to
prove ours, in each of the two directions (tractability and
hardness). The difference is mainly in that they deal with a random
subset of probabilistically independent (endogenous) facts, whereas we
reason about random \e{permutations} over the facts.  We start by discussing the algorithm for computing the Shapley value in
the hierarchical case, and then we discuss the
proof of hardness for the non-hierarchical case.

\def\CF{\mathrm{CF}}
\def\Sat{\mathrm{Sat}}

\subsubsection{Tractability side} Let $D$ be a database, let $f$ be an
endogenous fact, and let $q$ be a Boolean query. The computation of
$\shapley(D,q,f)$ easily reduces to the problem of counting the
$k$-sets (i.e., sets of size $k$) of endogenous facts that, along with the exogenous facts,
satisfy $q$. More formally, the reduction is to the problem of
computing $|\Sat(D,q,k)|$ where $\Sat(D,q,k)$ is the set of all
subsets $E$ of $D\endo$ such that $|E|=k$ and $(D\exo\cup E)\models q$.
The reduction is based on the following formula, where we denote $m=|D\endo|$ and slightly
abuse the notation by viewing $q$ as a 0/1-numerical query, where
$q(D')=1$ if and only if $D'\models q$.

\begin{align}%\label{eq:shapley-bcq}
  &\shapley(D,q,f) = \sum_{E\subseteq
    (D\endo\setminus\set{f})}\hskip-0.5em\frac{|E|!  (m-|E|-1)!}{m!}
  \Big(q(D\exo\cup E\cup\set{f})-q(D\exo\cup E)\Big) \label{eq:shapley-bcq}\\
  &=\sum_{E\subseteq
      (D\endo\setminus\set{f})}\hskip-1.5em\frac{|E|!  (m-|E|-1)!}{m!}
    \Big(q(D\exo\cup E\cup\set{f})\Big)- \hskip-0.5em \sum_{E\subseteq
      (D\endo\setminus\set{f})}\hskip-1.5em\frac{|E|!  (m-|E|-1)!}{m!}
    \Big(q(D\exo\cup E)\Big)\notag\\
  & =
  \left(\sum_{k=0}^{m-1}\frac{k!(m-k-1)!}{m!}\times|\Sat(D',q,k)|\right)
    -\left(\sum_{k=0}^{m-1}\frac{k!(m-k-1)!}{m!}\times|\Sat(D\setminus\set{f},q,k)|\right)
    \notag
\end{align}
In the last expression, $D'$ is the same as $D$, except that $f$ is
viewed as \e{exogenous} instead of \e{endogenous}.  Hence, to prove
the positive side of Theorem~\ref{thm:bcq-dichotomy}, it suffices to
show the following.

\def\ptimesat{
Let $q$ be a hierarchical Boolean CQ without self-joins. There is a
  polynomial-time algorithm for computing the number $|\Sat(D,q,k)|$
  of subsets $E$ of $D\endo$ such that $|E|=k$ and $(D\exo\cup E)\models q$,
  given $D$ and $k$ as input.
}
\begin{thm}\label{thm:satk}
  \ptimesat
\end{thm}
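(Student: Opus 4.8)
The plan is to prove Theorem~\ref{thm:satk} by computing, via recursion on the structure of $q$, the \emph{generating polynomial}
$$P_{D,q}(z)\;\eqdef\;\sum_{E\subseteq D\endo} z^{|E|}\cdot\bigl[(D\exo\cup E)\models q\bigr],$$
whose coefficient of $z^k$ is exactly $|\Sat(D,q,k)|$. Since $\deg P_{D,q}\le|D\endo|$, this polynomial has only polynomially many coefficients, so it suffices to compute it (as a coefficient vector) in polynomial time and then read off the one for the given $k$. The recursion mirrors the classical evaluation of hierarchical queries over tuple-independent databases (Dalvi and Suciu~\cite{DBLP:conf/vldb/DalviS04}), but it tracks a polynomial in $z$ rather than a probability, so that counts of subsets \emph{by size} are preserved.

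The recursion distinguishes three cases. \textbf{(Disconnected.)} If $q$ splits, with respect to shared variables, into $t\ge 2$ connected components $q_1,\dots,q_t$, then self-join-freeness forces the $q_j$ to use pairwise disjoint relation symbols; writing $D_j$ for the restriction of $D$ to the relation symbols of $q_j$ and $n_0$ for the number of endogenous facts over relation symbols not occurring in $q$, the events $(D\exo\cup E)\models q_j$ depend on disjoint sets of facts, so $P_{D,q}(z)=(1+z)^{n_0}\prod_{j=1}^{t}P_{D_j,q_j}(z)$. \textbf{(Ground.)} If $q$ has no variables, then $(D\exo\cup E)\models q$ iff every (ground) atom of $q$ is a fact of $D$; hence $P_{D,q}(z)=0$ if some atom of $q$ is not in $D$, and otherwise $P_{D,q}(z)=z^{r}(1+z)^{\,|D\endo|-r}$, where $r$ is the number of atoms of $q$ that are endogenous facts of $D$ (distinct, by self-join-freeness). \textbf{(Connected, with variables.)} Here I first establish the structural lemma that a connected, hierarchical, self-join-free CQ has a variable $x$ occurring in \emph{every} atom: taking $x$ with $A_x$ of maximum size, if some atom lies outside $A_x$, connectedness yields an atom $\beta\notin A_x$ sharing a variable $z$ with an atom $\alpha\in A_x$; then $\alpha\in A_x\cap A_z$ and $\beta\in A_z\setminus A_x$, so the hierarchy condition forces $A_x\subsetneq A_z$, contradicting maximality. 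Fix such an $x$. For a constant $a$, let $q_a=q[x\mapsto a]$ — which is again hierarchical and self-join-free with one fewer variable, since substituting a constant changes neither the relation symbols nor the sets $A_y$ for $y\neq x$ — and let $D_a$ be the set of facts $R_i(\tup c)\in D$ over an atom $R_i(\tup t_i)$ of $q$ such that $\tup c[p]=a$ for every position $p$ at which $x$ occurs in $\tup t_i$. Because $x$ occurs in every atom, the sets $D_a$ are pairwise disjoint; moreover $(D\exo\cup E)\models q$ iff $(D\exo\cup E)\models q_a$ for some $a$, and the latter depends only on $E\cap D_a$. Writing $T_a(z)=(1+z)^{|(D_a)\endo|}$ and $n_0$ for the number of endogenous facts of $D$ lying in no $D_a$, the generating polynomial of the subsets $E$ satisfying \emph{no} $q_a$ is $(1+z)^{n_0}\prod_a\bigl(T_a(z)-P_{D_a,q_a}(z)\bigr)$, whence
$$P_{D,q}(z)=(1+z)^{|D\endo|}-(1+z)^{n_0}\prod_{a}\bigl(T_a(z)-P_{D_a,q_a}(z)\bigr),$$
and we recurse on each $q_a$ over $D_a$.

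For termination and the complexity bound, the ``connected with variables'' case branches into at most $|D|$ subproblems, each having strictly fewer variables, so the recursion tree has depth at most $|\var(q)|$ and branching at most $|D|$; as $q$ is fixed in data complexity, this yields $|D|^{O(1)}$ nodes, and at each node the combination step multiplies polynomials whose degrees sum to at most $|D\endo|$, which is polynomial-time. Combined with the reduction~\eqref{eq:shapley-bcq}, this gives the tractability side of Theorem~\ref{thm:bcq-dichotomy}. I expect the main obstacle to be the bookkeeping of the ``connected, with variables'' case: proving the structural lemma, checking that the substitution $q\mapsto q_a$ preserves both hierarchy and self-join-freeness together with the induced endogenous/exogenous split, and — most delicately — verifying that the fact sets $D_a$ genuinely partition the relevant facts so that the ``satisfies no $q_a$'' event factorizes over $a$. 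That factorization is the generating-function analogue of the probabilistic independence exploited by Dalvi and Suciu, and it is the crux of the argument.
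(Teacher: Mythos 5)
Your proposal is correct and is essentially the paper's proof: the same three-way case analysis (ground query, root variable, variable-disjoint decomposition) with recursion on the number of variables, where your generating polynomial $P_{D,q}(z)$ is just the size-indexed count vector $(|\Sat(D,q,k)|)_k$ and your complement-product $\bigl((1+z)^{|D\endo|}-(1+z)^{n_0}\prod_a(T_a(z)-P_{D_a,q_a}(z))\bigr)$ is the closed form of the paper's three-case dynamic program for $P_i^\ell$. The only genuine (and harmless) additions are that you prove the root-variable lemma the paper cites from Dalvi and Suciu, and that you carry the irrelevant endogenous facts through via $(1+z)^{n_0}$ factors where the paper discards them up front by appeal to Conitzer and Sandholm.
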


  To prove Theorem~\ref{thm:satk}, we show a polynomial-time algorithm
  for computing $|\Sat(D,q,k)|$ for $q$ as in the theorem.  The
  pseudocode is depicted in Figure~\ref{fig:sat}.

  We assume in the algorithm that $D\endo$ contains only facts that
  are homomorphic images of atoms of $q$ (i.e., facts $f$ such that
  there is a mapping from an atom of $q$ to $f$). In the terminology
  of Conitzer and Sandholm~\cite{Conitzer:2004:CSV:1597148.1597185}, regarding the computation of the Shapley value,
  the function defined by $q$ \e{concerns} only the subset $C$ of
  $D\endo$ consisting of these facts (i.e., the satisfaction of $q$ by
  any subset of $D$ does not change if we intersect with $C$), and so,
  the Shapley value of every fact in $D\endo\setminus C$ is zero and
  the Shapley value of any other fact is unchanged when ignoring
  $D\endo\setminus
  C$~\cite[Lemma~4]{Conitzer:2004:CSV:1597148.1597185}. Moreover,
  these facts can be found in polynomial time.  
  
  {
\begin{figure}[t]
\begin{malgorithm}{$\algname{\SAT}(D,q,k)$}{alg:sat}
\If{$\var(q)=\emptyset$} \label{line:sat-startbase}
\If{$\exists a\in\at(q)$ s.t.~$a\not\in D$}
\State \Return $0$
\EndIf
\State $A=\at(q)\cap D\endo$
\If{$|A| = k$}
%\If{$k<|A|$ or $k>|D\endo|$}
\State \Return $1$
\EndIf
\State \Return $0$ \label{line:sat-endbase}
\EndIf
\State $\result\gets 0$
\If{$q$ has a root variable}\label{line:sat-sharedstart}
\State $x\gets$ a root variable of $q$
\State $V_x\gets$ the set $\set{v_1,\dots,v_n}$ of values for $x$
%\State $D^{a_i}\gets D_{[x\rightarrow {a_i}]}$ for all
%$i\in\set{1,\dots,n}$\benny{Why do we need both notations $D^{a_i}$
 % and $D_{[x\rightarrow {a_i}]}$? Why not pick one?}
\ForAll{$i \in \set{1,\dots,|V_x|}$} \label{line:dynamic-start}
\ForAll{$j \in \set{0,\dots,k}$}
\State $f_{i,j}= \leftarrow \algname{\SAT}(D^{v_i},q_{[x\rightarrow v_i]},j)$ \label{line:dynamic-firstend}
\EndFor
\EndFor
\State $P_1^\ell=f_{1,\ell}$ for all $\ell\in\set{0,\dots,k}$
\ForAll{$i \in \set{2,\dots,|V_x|}$} \label{line:sat-externalfor}
\ForAll{$\ell \in \set{0,\dots,k}$}
\State $P_i^\ell \leftarrow 0 $
\ForAll{$j \in \set{0,\dots,\ell}$}
\State $P_i^\ell \gets P_i^\ell + P_{i-1}^{\ell-j}\cdot f_{i,j}+\big[\binom{\sum_{r=1}^{i-1}|D^{v_r}\endo| }{\ell-j} - P_{i-1}^{\ell-j}\big]\cdot f_{i,j} + P_{i-1}^{\ell-j}\cdot
\big[\binom{|D^{v_i}\endo| }{j} - f_{i,j}\big]$\label{line:dynamic-end}
\EndFor
\EndFor
\EndFor 
\State $\result \leftarrow P^k_n$ \label{line:sat-sharedend}
\Else
\State let $q=q_1\land q_2$ where $\var(q_1)\cap\var(q_2)=\emptyset$\label{line:sat-disjointstart}
\State let $D^1$ and $D^2$ be the restrictions of $D$ to the relations of $q_1$ and $q_2$, respectively
\ForAll{$k_1,k_2$ s.t.~$k_1+k_2=k$}
\State $\result\gets\result + \algname{\SAT}(D^1,q_1,k_1)\cdot \algname{\SAT}(D^2,q_2,k_2)$\label{line:sat-disjointend}
\EndFor
\EndIf
\State\Return $\result$
\end{malgorithm}
\caption{\label{fig:sat} An algorithm for computing $|\Sat(D,q,k)|$ where
$q$ is a hierarchical Boolean CQ without self-joins.}
\end{figure}
}

As expected for a hierarchical query, our algorithm is a recursive
procedure that acts differently in three different cases: \e{(a)} $q$
has no variables (only constants), \e{(b)} there is a \e{root}
variable $x$, that is, $x$ occurs in all atoms of $q$, or \e{(c)} $q$
consists of two (or more) subqueries that do not share any variables.
Since $q$ is hierarchical, at least one of these cases always
applies~\cite{DBLP:journals/jacm/DalviS12}.

  In the first case
  (lines~\ref{line:sat-startbase}-\ref{line:sat-endbase}), every atom
  $a$
  of $q$ can be viewed as a fact. Clearly, if one of the
  facts in $q$ is not present in $D$, then there is no subset $E$ of
  $D\endo$ of any size such that $(D\exo\cup E)\models q$, and the
  algorithm will return $0$. Otherwise, suppose that $A$ is the set of
  endogenous facts of $q$ (and the remaining atoms of $q$, if any, are
  exogenous). Due to our assumption that every fact of $D\endo$ is a homomorphic image of an atom of $q$, the single choice of a subset of facts that makes the query true is $A$; therefore, the algorithm returns $1$ if $k=|A|$ and $0$ otherwise. 
  
  %Then, a set in $\Sat(D,q,k)$ is obtained by selecting
  %all facts of $A$ and $k-|A|$ additional facts. Therefore, we have
  %that $\Sat(D,q,k)=0$ if $k<|A|$ or $k>|D\endo|$ and
  %$\Sat(D,q,k)=\binom{|D\endo|-|A|}{k-|A|}$ otherwise.

  Next, we consider the case where $q$ has a root variable $x$
  (lines~\ref{line:sat-sharedstart}-\ref{line:sat-sharedend}). We
  denote by $V_x$ the set $\set{v_1,\dots,v_n}$ of values
  %\benny{Problem - now $a$s are values and not atoms. Please fix the
    %notation here and in the rest of the paper, if relevant.} 
    that $D$
  has in attributes that correspond to an occurrence of $x$.
  %the
  %variable $x$ is mapped to in one or more homomorphisms from the
  %corresponding atom of $q$ to the database $D$.
  For example, if $q$ contains the atom $R(x,y,x)$ and $D$ contains a
  fact $R(\val{a},\val{b},\val{a})$, then $\val{a}$ is one of the
  values in $V_x$. We also denote by $q_{[x\rightarrow v_i]}$ the
  query that is obtained from $q$ by substituting $v_i$ for $x$, and
  by $D^{v_i}$ the subset of $D$ that consists of facts with the value
  $v_i$ in every attribute where $x$ occurs in $q$.
%  the database that contains every fact $f\in D$ that maps the
%  variable $x$ to the value $a_i$ in a homomorphism from the atom 
% corresponding to the relation of $f$ to the fact $f$. 

  We solve the problem for this case using a simple dynamic
  program. We denote by $P_i^\ell$ the number of subsets of size $\ell$ of
  $\bigcup_{r=1}^{i} D^{v_r}\endo$ that satisfy the query (together
  with the exogenous facts in $\bigcup_{r=1}^{i} D^{v_r}\exo$). Our
  goal is to find $P_n^k$, which is the number of subsets $E$ of size
  $k$ of $\bigcup_{r=1}^{n} D^{v_r}\endo$. Note that this union is
  precisely $D\endo$, due to our assumption that $D\endo$ contains only facts that can be obtained from atoms of $q$ via an assignment to the variables. First, we compute, for each value $v_i$,
  and for each $j\in\set{0,\dots,k}$, the number $f_{i,j}$ of subsets
  $E$ of size $j$ of $D^{v_i}\endo$ such that
  $(D^{v_i}\exo\cup E)\models q$, using a recursive call. In the
  recursive call, we replace $q$ with $q_{[x\rightarrow v_i]}$, as
  $D^{v_i}$ contains only facts that use the value $v_i$ for the
  variable $x$; hence, we can reduce the number of variables in $q$ by
  substituting $x$ with $v_i$.
  Then, for each $\ell\in\set{0,\dots,k}$ it clearly holds that
  $P_1^\ell=f_{1,\ell}$. For each $i\in\set{2,\dots,|V_x|}$ and
  $\ell \in \{0,\cdots,k\}$, we compute $P_i^\ell$ in the following
  way. Each subset $E$ of size $\ell$ of $\bigcup_{r=1}^{i} D^{v_r}\endo$
  contains a set $E_1$ of size $j$ of facts from $D^{v_i}\endo$ (for
  some $j\in\set{0,\dots,\ell}$) and a set $E_2$ of size $\ell-j$ of facts
  from $\bigcup_{r=1}^{i-1} D^{v_r}\endo$. If the subset $E$ satisfies
  the query, then precisely one of the following holds:
\begin{enumerate}
    \item $(D^{v_i}\exo\cup E_1)\models q$ and $(\bigcup_{r=1}^{i-1} D^{v_r}\exo\cup E_2)\models q$,
    \item $(D^{v_i}\exo\cup E_1)\models q$, but $(\bigcup_{r=1}^{i-1} D^{v_r}\exo\cup E_2)\not\models q$,
    \item $(D^{v_i}\exo\cup E_1)\not\models q$, but $(\bigcup_{r=1}^{i-1} D^{v_r}\exo\cup E_2)\models q$.
\end{enumerate}
Hence, we add to $P_i^\ell$ the value $P_{i-1}^{\ell-j}\cdot f_{i,j}$ that
corresponds to Case~(1), the value
$$\left(\binom{\sum_{r=1}^{i-1}|D^{v_r}\endo|}{\ell-j}-P_{i-1}^{\ell-j}\right)\cdot
f_{i,j}$$ that corresponds to Case~(2), and the value
$$P_{i-1}^{\ell-j}\cdot \left(\binom{|D^{v_i}\endo|}{j}-f_{i,j}\right)$$
that corresponds to Case~(3). Note that we have all the values
$P_{i-1}^{\ell-j}$ from the previous iteration of the for loop of
line~\ref{line:sat-externalfor}.

Finally, we consider the case where $q$ has two nonempty subqueries
$q_1$ and $q_2$ with disjoint sets of variables
(lines~\ref{line:sat-disjointstart}-\ref{line:sat-disjointend}). For
$j\in\set{1,2}$, we denote by $D^j$ the set of facts from $D$ that
appear in the relations of $q_j$. (Recall that $q$ has no self-joins;
hence, every relation can appear in either $q_1$ or $q_2$, but not in
both.) Every subset $E$ of $D$ that satisfies $q$ must contain a
subset $E_1$ of $D^1$ that satisfies $q_1$ and a subset $E_2$ of $D^2$
satisfying $q_2$. Therefore, to compute $|\Sat(D,q,k)|$, we consider
every pair $(k_1,k_2)$ of natural numbers such that $k_1+k_2=k$,
compute $|\Sat(D^1,q_1,k_1)|$ and $|\Sat(D^2,q_2,k_2)|$ via a
recursive call, and add the product of the two to the result.

{\color{black}The correctness and efficiency of $\algname{\SAT}$ is stated in the following lemma.

\begin{lem}\label{lemma:sat-correct}
 Let $q$ be a hierarchical Boolean CQ without self-joins. Then, $\algname{\SAT}(D,q,k)$ returns the number $|\Sat(D,q,k)|$
  of subsets $E$ of $D\endo$ such that $|E|=k$ and $D\exo\cup E\models q$,
  given $D$ and $k$ as input. Moreover, $\algname{\SAT}(D,q,k)$ terminates in polynomial time in $k$ and $|D|$.
\end{lem}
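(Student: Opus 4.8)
The plan is to prove both claims — correctness and polynomial running time — by induction on the number of variables of $q$, following the three-way case split already described in the text. Since $q$ is hierarchical, one of the three cases (no variables; a root variable occurring in all atoms; two variable-disjoint subqueries) always applies, and in each recursive call the query strictly loses at least one variable (substituting a root variable $x$ by a value in $q_{[x\to v_i]}$, or passing to a proper subquery $q_1$ or $q_2$), so the induction is well-founded. The base case ($\var(q)=\emptyset$, lines~\ref{line:sat-startbase}--\ref{line:sat-endbase}) is immediate from the discussion: once all atoms are ground facts, either some atom is missing from $D$ (answer $0$), or the unique $E\subseteq D\endo$ with $(D\exo\cup E)\models q$ is $A=\at(q)\cap D\endo$, so the count is $1$ exactly when $k=|A|$, using the standing assumption that every endogenous fact is a homomorphic image of an atom of $q$.

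For the root-variable case I would justify the dynamic program. Because $x$ occurs in every atom, a homomorphism satisfying $q$ must map $x$ to a single value $v$; hence $D\exo\cup E\models q$ with $E$ of size $\ell$ decomposes uniquely, value by value, into the contributions $E\cap D^{v_i}\endo$. This gives the recurrence: writing $P_i^\ell$ for the number of size-$\ell$ subsets of $\bigcup_{r\le i}D^{v_r}\endo$ satisfying $q$ with the corresponding exogenous facts, a size-$\ell$ subset splits as $E_1\subseteq D^{v_i}\endo$ of size $j$ and $E_2\subseteq\bigcup_{r<i}D^{v_r}\endo$ of size $\ell-j$, and $q$ is satisfied iff $q$ is satisfied \emph{somewhere}, i.e.\ by $E_1$ in isolation or by $E_2$ in isolation — these two events, intersected with the complement of the other, together with their intersection, give exactly the three disjoint cases (1)--(3) enumerated in the text, whose counts are $P_{i-1}^{\ell-j}f_{i,j}$, $\big(\binom{N_{i-1}}{\ell-j}-P_{i-1}^{\ell-j}\big)f_{i,j}$ and $P_{i-1}^{\ell-j}\big(\binom{|D^{v_i}\endo|}{j}-f_{i,j}\big)$ with $N_{i-1}=\sum_{r<i}|D^{v_r}\endo|$; summing over $j$ yields line~\ref{line:dynamic-end}. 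The values $f_{i,j}=|\Sat(D^{v_i},q_{[x\to v_i]},j)|$ are correct by the induction hypothesis, since $q_{[x\to v_i]}$ has fewer variables and $D^{v_i}$ collects exactly the facts agreeing with $v_i$ on all occurrences of $x$. For the variable-disjoint case, since $q$ is self-join-free the relation symbols of $q_1$ and $q_2$ are disjoint, so $D$ partitions as $D^1\uplus D^2$ and any $E\subseteq D$ splits accordingly; $D\exo\cup E\models q$ iff $D^1\exo\cup E^1\models q_1$ and $D^2\exo\cup E^2\models q_2$, and since these are counted over disjoint universes, $|\Sat(D,q,k)|=\sum_{k_1+k_2=k}|\Sat(D^1,q_1,k_1)|\cdot|\Sat(D^2,q_2,k_2)|$, which is line~\ref{line:sat-disjointend}, again correct by the induction hypothesis.

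For the running time, I would argue that the recursion tree has polynomially many nodes: each node either eliminates a variable or splits the query, so the depth is at most $|\var(q)|$ and the total number of subquery instances encountered is bounded by a polynomial in the size of $q$ (treated as a constant under data complexity, but the bound is explicitly polynomial). At each node the values handled are binomial coefficients and sums/products of previously computed counts, all bounded by $2^{|D|}$ hence of polynomial bit-length, and the two nested loops over $i\in\{1,\dots,|V_x|\}$ and $\ell,j\in\{0,\dots,k\}$ do $O(|D|\cdot k^2)$ arithmetic operations per node; with $k\le|D|$ this is polynomial in $|D|$ and $k$. Multiplying by the polynomial bound on the number of nodes gives the claimed polynomial time. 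The main obstacle is getting the root-variable recurrence exactly right — in particular verifying that cases (1)--(3) are genuinely disjoint and exhaustive among the size-$\ell$ subsets that satisfy $q$, and that the ``complementary'' counts $\binom{N_{i-1}}{\ell-j}-P_{i-1}^{\ell-j}$ and $\binom{|D^{v_i}\endo|}{j}-f_{i,j}$ really count the subsets of the right size that \emph{fail} the query on the respective side; everything else is a routine structural induction.
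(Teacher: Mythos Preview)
Your proposal is correct and follows essentially the same approach as the paper: the paper's proof of the lemma simply refers back to the preceding case-by-case discussion for correctness (base case, root-variable dynamic program with the three disjoint cases, and variable-disjoint product decomposition) and then gives the same recursion-depth argument for the polynomial bound. One small imprecision worth tightening: in the disjoint-subquery case, passing to $q_1$ or $q_2$ need not strictly reduce the number of \emph{variables} (one side could be variable-free), but it does strictly reduce the number of atoms, so the recursion is still well-founded with depth bounded by a function of the query; the paper makes the same simplification.
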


We have already established the correctness of the algorithm. Thus, we now consider the complexity claim of Lemma~\ref{lemma:sat-correct}. The number of recursive calls in each step is polynomial in $k$ and $|D|$. In particular, in the dynamic programming part of the algorithm (lines~\ref{line:dynamic-start}-\ref{line:dynamic-end}), we make $(k+1)\cdot |V_x|$ recursive calls. Clearly, it holds that $|V_x|\le |D|$. Furthermore, we make $2(k+1)$ recursive calls in lines~\ref{line:sat-disjointstart}-\ref{line:sat-disjointend}. Finally, in each recursive call, we reduce the number of variables in $q$ by at least one. Thus, the depth of the reduction is bounded by the number of variables in query $q$, which is a constant when considering data complexity.

{
\definecolor{Gray}{gray}{0.9}
\def\emprow{\multicolumn{3}{l}{}}
\begin{figure}[t]
  {\centering
\null\hfill
\begin{subfigure}[b]{0.1\linewidth}
\begin{tabular}{|c|c|} 
\cline{1-2}
\rowcolor{Gray}
\multicolumn{2}{l}{R}\\\cline{1-2}
$\att{A}$ & $\att{B}$ \\\cline{1-2}
$\val{1}$ & $\val{2}$\\
$\val{1}$ & $\val{3}$\\
$\val{2}$ & $\val{1}$\\
$\val{3}$ & $\val{1}$\\
\cline{1-2}
\end{tabular}
\end{subfigure}
\hfill
\begin{subfigure}[b]{0.1\linewidth}
\begin{tabular}{|c|c|} 
\cline{1-2}
\rowcolor{Gray}
\multicolumn{2}{l}{S}\\\cline{1-2}
$\att{A}$ & $\att{B}$ \\\cline{1-2}
$\val{1}$ & $\val{1}$\\
$\val{1}$ & $\val{5}$\\
$\val{2}$ & $\val{3}$\\
$\val{2}$ & $\val{4}$\\
\cline{1-2}
\end{tabular}
\end{subfigure}
\hfill
\begin{subfigure}[b]{0.1\linewidth}
\begin{tabular}{|c|c|} 
\cline{1-2}
\rowcolor{Gray}
\multicolumn{2}{l}{T}\\\cline{1-2}
$\att{A}$ & $\att{B}$\\\cline{1-2}
$\val{1}$ & $\val{1}$\\
$\val{2}$ & $\val{2}$\\
$\val{3}$ & $\val{3}$\\
$\val{5}$ & $\val{6}$\\
\cline{1-2}
\end{tabular}
\end{subfigure}
\hfill
\begin{subfigure}[b]{0.07\linewidth}
\begin{tabular}{|c|} 
\cline{1-1}
\rowcolor{Gray}
\multicolumn{1}{l}{U}\\\cline{1-1}
$\att{A}$\\\cline{1-1}
$\val{1}$\\
$\val{2}$\\
$\val{3}$\\
$\val{4}$\\
\cline{1-1}
\end{tabular}
\end{subfigure}\hfill\null}
\caption{\label{fig:sat-running} The database of Example~\ref{example:sat-running}. }
\end{figure}
}

\newcommand{\pluseq}{\mathrel{+}=}

\begin{exa}\label{example:sat-running}
We now illustrate the execution of $\algname{\SAT}(D,q,k)$ on the database $D$ of Figure~\ref{fig:sat-running}, the query $q()\dl R(x,y),S(x,z),T(w,w),U(w)$ and $k=4$. We assume that all facts in $D$ are endogenous. Since $q$ does not have a root variable, the condition of line~\ref{line:sat-sharedstart} does not hold. Hence, we start by considering the two disjoint sub-queries $q_1()\dl R(x,y),S(x,z)$ and $q_2()\dl T(w,w),U(w)$ in line~\ref{line:sat-disjointstart}, and the corresponding databases $D_1$ that contains the relations $R$ and $S$ and $D_2$ that contains the relations $T$ and $U$. Note that $q_1$ and $q_2$ indeed do not share any variables.

Each set of facts that satisfies $q$ contains four facts of the form $R(a,b)$, $S(a,c)$, $T(d,d)$ and $U(d)$ for some values $a,b,c,d$. Clearly, it holds that $\set{R(a,b),S(a,c)}\models q_1$ and $\set{T(d,d),U(d)}\models q_2$; thus, we compute $\algname{\SAT}(D,q,4)$ using $10$ (that is, $2(k+1)$) recursive calls to $\algname{\SAT}$.
\begin{align*}
\algname{\SAT}(D,q,4) &= \algname{\SAT}(D_1,q_1,0)\cdot \algname{\SAT}(D_2,q_2,4)\\
&+\algname{\SAT}(D_1,q_1,1)\cdot \algname{\SAT}(D_2,q_2,3)\\
&+\algname{\SAT}(D_1,q_1,2)\cdot \algname{\SAT}(D_2,q_2,2)\\
&+\algname{\SAT}(D_1,q_1,3)\cdot \algname{\SAT}(D_2,q_2,1)\\
&+\algname{\SAT}(D_1,q_1,4)\cdot \algname{\SAT}(D_2,q_2,0)
\end{align*}

Now, $q_1$ contains a root variable $x$; thus, in each recursive call with the query $q_1$, the condition of line~\ref{line:sat-sharedstart} holds. We will illustrate the execution of this part of the algorithm using $\algname{\SAT}(D_1,q_1,3)$. Note that in a homomorphism from $R(x,y)$ to $D_1$, the variable $x$ is mapped to one of three values, namely $\val{1}$, $\val{2}$, or $\val{3}$. Similarly, in a homomorphism from $S(x,z)$ to $D_1$, the value $x$ is mapped to either $\val{1}$ or $\val{2}$. Hence, it holds that $V_x=\set{\val{1},\val{2},\val{3}}$.

For each value $a_i$ in $V_x$ (where $a_1=\val{1}, a_2=\val{2}, a_3=\val{3}$), we consider the query $q_{[x\rightarrow a_i]}$ which is $R(a_i,y),S(a_i,z)$, and the database $D^{a_i}$ containing the facts that use the value $a_i$ for the variable $x$. That is, the database $D^{\val{1}}$ contains the facts $\set{R(\val{1},\val{2}), R(\val{1},\val{3}), S(\val{1},\val{1}), S(\val{1},\val{5})}$, the database $D^{\val{2}}$ contains the facts $\set{R(\val{2},\val{1}), S(\val{2},\val{3}), S(\val{2},\val{4})}$, and the database $D^{\val{3}}$ contains the fact $\set{R(\val{3},\val{1})}$. Then, for each one of the three values, and for each $j\in\set{0,\dots,3}$, we compute the number $f_{i,j}$ of subsets of size $j$ of $D^{a_i}$ that satisfy $q$, using the recursive call $\algname{\SAT}(D^{a_i},q_{[x\rightarrow a_i]},j)$. The reader can easily verify that the following holds.
\begin{align*}
    \centering
    & f_{1,0}=0 \quad f_{1,1}=0 \quad f_{1,2}=4 \quad f_{1,3}=4\\
    & f_{2,0}=0 \quad f_{2,1}=0 \quad f_{2,2}=2 \quad f_{2,3}=1\\
    & f_{3,0}=0 \quad f_{3,1}=0 \quad f_{3,2}=0 \quad f_{3,3}=0
\end{align*}
Next, we compute, for each $i\in\set{1,\dots,3}$ and $l\in\set{0,\dots,3}$, the number $P_i^l$ of subsets of size $l$ of $\bigcup_{r=1}^i D^{a_r}$ that satisfy $q$. We begin with $a_i$ (i.e., the value $\val{1}$), in which case it holds that $P_1^l=f_{1,l}$. Hence, we have that:
\begin{align*}
    \centering
    & P_1^0=P_1^1=0 \quad\quad P_1^2=P_1^3=4
\end{align*}
 Next, for each $l\in\set{0,\dots,3}$, we compute the number $P_2^l$ of subsets of $D^{\val{1}}\cup D^{\val{2}}$ that satisfy $q$. Each such subset contains $j$ facts from $D^{\val{2}}$ and $l-j$ facts for $D^{\val{1}}$ for some $j\in\set{0,\dots,l}$. Recall that $D^{\val{1}}$ contains four facts and $D^{\val{2}}$ contains three facts. Hence, we have the following computations for $l=0$.
 \begin{align*}
    \centering
    & P_2^0 = 0\\
    & P_2^0 \pluseq P_1^0\cdot f_{2,0} + \left[\binom{4}{0}-P_1^0\right]\cdot f_{2,0} + P_1^0\cdot \left[\binom{3}{0}-f_{2,0}\right]=0+0\cdot 0+1\cdot 0 +0\cdot 1=0
\end{align*}
In the first line we initialize $P_2^0$. Then, in the second line, we consider $j=0$, which is the only possible $j$ in this case. Next, for $l=1$, we compute the following.
 \begin{align*}
    \centering
    & P_2^1 = 0\\
    & P_2^1 \pluseq P_1^1\cdot f_{2,0} + \left[\binom{4}{1}-P_1^1\right]\cdot f_{2,0} + P_1^1\cdot \left[\binom{3}{0}-f_{2,0}\right]=0+0\cdot 0+4\cdot 0 +0\cdot 1=0\\
    & P_2^1 \pluseq P_1^0\cdot f_{2,1} + \left[\binom{4}{0}-P_1^0\right]\cdot f_{2,1} + P_0^1\cdot \left[\binom{3}{1}-f_{2,1}\right]=0+0\cdot 0+1\cdot 0 +0\cdot 3=0
\end{align*}
Here, in the second line, we consider $j=0$ (i.e., choosing zero facts from $D^{\val{2}}$ and one fact from $D^{\val{1}}$), and in the third line we consider $j=1$ (i.e., choosing one fact from $D^{\val{2}}$ and zero facts from $D^{\val{1}}$). Next, we have $l=2$.
 \begin{align*}
    \centering
    & P_2^2 = 0\\
    & P_2^2 \pluseq P_1^2\cdot f_{2,0} + \left[\binom{4}{2}-P_1^2\right]\cdot f_{2,0} + P_1^2\cdot \left[\binom{3}{0}-f_{2,0}\right]=0+4\cdot 0+2\cdot 0 +4\cdot 1=4\\
    & P_2^2 \pluseq P_2^1\cdot f_{2,1} + \left[\binom{4}{1}-P_1^1\right]\cdot f_{2,1} + P_1^1\cdot \left[\binom{3}{1}-f_{2,1}\right]=4+0\cdot 0+4\cdot 0 +0\cdot 3=4\\
    &P_2^2 \pluseq P_2^0\cdot f_{2,2} + \left[\binom{4}{0}-P_1^0\right]\cdot f_{2,2} + P_1^0\cdot \left[\binom{3}{2}-f_{2,2}\right]=4+0\cdot 2+1\cdot 2 +0\cdot 1=6
\end{align*}
Finally, we consider $l=3$.
 \begin{align*}
    \centering
    & P_2^3 = 0\\
    & P_2^3 \pluseq P_1^3\cdot f_{2,0} + \left[\binom{4}{3}-P_1^3\right]\cdot f_{2,0} + P_1^3\cdot \left[\binom{3}{0}-f_{2,0}\right]=0+4\cdot 0+0\cdot 0 +4\cdot 1=4\\
    & P_2^3 \pluseq P_1^2\cdot f_{2,1} + \left[\binom{4}{2}-P_1^2\right]\cdot f_{2,1} + P_1^2\cdot \left[\binom{3}{1}-f_{2,1}\right]=4+4\cdot 0+2\cdot 0 +4\cdot 3=16\\
    & P_2^3 \pluseq P_1^1\cdot f_{2,2} + \left[\binom{4}{1}-P_1^1\right]\cdot f_{2,2} + P_1^1\cdot \left[\binom{3}{2}-f_{2,2}\right]=16+0\cdot 2+4\cdot 2 +0\cdot 1=24\\
    & P_2^3 \pluseq P_1^0\cdot f_{2,3} + \left[\binom{4}{0}-P_1^0\right]\cdot f_{2,3} + P_1^0\cdot \left[\binom{3}{3}-f_{2,3}\right]=24+0\cdot 1+1\cdot 1 +0\cdot 0=25
\end{align*}
We conclude that:
\begin{align*}
    \centering
    & P_2^0=P_2^1=0 \quad\quad P_2^2=6 \quad\quad P_2^3=25
\end{align*}

Now, we can compute $P_3^l$ for each $l\in\set{0,\dots,3}$ in a similar way, using the above values and the values $f_{3,j}$ that we have computed before. We omit the computations here. The final results are the following.

\begin{align*}
    \centering
    & P_3^0=P_3^1=0 \quad\quad P_3^2=6 \quad\quad P_2^3=31
\end{align*}

Then, $\algname{\SAT}(D_1,q_1,3)$ returns $P_2^3$ which is the number of subset of size $3$ of $D_1$ that satisfy the query.

Finally, we illustrate the base case of the algorithm (that is, lines~\ref{line:sat-startbase}-\ref{line:sat-endbase}). To do that, we use the recursive call $\algname{\SAT}(D_2,q_2,3)$ from the first step of the execution. Recall that $q_2()\dl T(w,w), U(w)$ and $D_2$ contains all the facts in $T$ and $U$. The query $q_2$ contains a single variable $w$. In a homomorphism from $T(w,w)$ to $D_2$, this variable is mapped to one of three values, namely $\val{1}$, $\val{2}$, or $\val{3}$. Note that there is no homomorphism from $T(w,w)$ to the fact $T(\val{5},\val{6})$; hence, the values $\val{5}$ and $\val{6}$ are not in $V_w$. In addition, in a homomorphism from $U(w)$ to $D_2$, the variable $w$ is mapped to one of $\val{1}$, $\val{2}$, $\val{3}$, or $\val{4}$; thus, $V_w=\set{\val{1},\val{2},\val{3},\val{4}}$.

In every recursive call, we will substitute one of the values in $V_w$ for $w$. One of the recursive calls will be $\algname{\SAT}(D_2^{\val{1}},q_2',2)$, where $q_2'()\dl T(\val{1},\val{1}),U(\val{1})$. Here, $D_2^{\val{1}}$ contains every atom of $q$, and $k=|A|$; hence, the recursive call will return $1$. On the other hand, the result of $\algname{\SAT}(D_2^{\val{1}},q_2',3)$ will be zero; as there are only two facts in $D_2^\val{1}$, while $k=3$. The result of $\algname{\SAT}(D_2^{\val{1}},q_2',1)$ will also be zero, since in this case $k=1$ and $|A|=2$; thus, $k<|A|$.
Finally, for the recursive call $\algname{\SAT}(D_2^{\val{4}},q_2'',2)$, where $q_2''()\dl T(\val{4},\val{4}),U(\val{4})$, the result will be zero, as the fact $T(\val{4},\val{4})$ is not in the database.\qed
\end{exa}
}

\subsubsection{Hardness side} We now give the proof of the hardness
side of Theorem~\ref{thm:bcq-dichotomy}. Membership in
$\fpsharpp$ is straightforward since, as aforementioned in
Equation~\eqref{eq:shapley-bcq}, the Shapley value can be computed in
polynomial time given an oracle to the problem of counting the number
of subsets $E\subseteq D\endo$ of size $k$ such that
$(D\exo\cup E)\models q$, and this problem is in $\sharpp$.
%so we omit
%the discussion on that. 
Similarly to Dalvi and
Suciu~\cite{DBLP:conf/vldb/DalviS04}, our proof of hardness consists
of two steps. First, we prove the $\fpsharpp$-hardness of computing
$\shapley(D,\cqrst,f)$, where $\cqrst$ is given in~\eqref{eq:cqrst}.
Second, we reduce the computation of $\shapley(D,\cqrst,f)$ to the
problem of computing $\shapley(D,q,f)$ for any non-hierarchical CQ $q$
without self-joins. The second step is the same as that of Dalvi and
Suciu~\cite{DBLP:conf/vldb/DalviS04}, and we will give the proof here for completeness. The proof of the first step---hardness of
computing $\shapley(D,\cqrst,f)$ (stated by Lemma~\ref{lemma:hardnessbipartite}), is considerably more involved than the corresponding proof
of Dalvi and Suciu~\cite{DBLP:conf/vldb/DalviS04} that computing the
probability of $\cqrst$ in a tuple-independent probabilistic database
(TID) is $\fpsharpp$-hard. This is due to the coefficients of the Shapley value that do not seem to easily factor out.

\def\hardnessqrst{
Computing $\shapley(D,\cqrst,f)$ is $\fpsharpp$-complete.
}

\def\bisat{\mathsf{\#biSAT}}

\begin{prop}\label{lemma:hardnessbipartite}
\hardnessqrst
\end{prop}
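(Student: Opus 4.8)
The plan is to reduce from a known $\sharpp$-hard counting problem — the natural candidate is counting independent sets in a bipartite graph, or equivalently $\bisat$ (counting satisfying assignments of a positive bipartite 2-DNF/monotone formula), which is the same problem Dalvi and Suciu used for the hardness of $\cqrst$ in TIDs. Given a bipartite graph $(U \cup W, \mathcal{E})$, the standard encoding sets $R^D = \{R(u) : u \in U\}$, $S^D = \{S(u,w) : (u,w) \in \mathcal{E}\}$, $T^D = \{T(w) : w \in W\}$, so that a subset $E$ of the facts satisfies $\cqrst$ iff it contains some $R(u)$, some $T(w)$, and the connecting $S(u,w)$. The trouble, as the excerpt flags, is that the Shapley value is not a single count but a weighted sum $\sum_k \binom{\cdot}{\cdot}^{-1}$-style combination of the numbers $|\Sat(D,\cqrst,k)|$, so one evaluation of $\shapley$ gives only one linear combination of the quantities we actually want.

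The key technical device I would use is \emph{padding to recover all the coefficients}. Concretely, I would augment the database with a controlled number of "dummy" endogenous facts that trivially satisfy the query on their own (e.g.\ add fresh constants $a_\ell, b_\ell$ and facts $R(a_\ell), S(a_\ell,b_\ell), T(b_\ell)$, or more cheaply a single always-satisfying gadget repeated), parametrized by an integer $t$ ranging over $0,1,\dots,N$ for $N$ polynomially large. For each $t$ we add a fact $f$ of the designated form and compute $\shapley(D_t, \cqrst, f)$; because the Shapley value of $f$ is, via Equation~\eqref{eq:shapley-bcq}, an explicit linear expression in the numbers $|\Sat|$ with coefficients that are rational functions of $t$, letting $t$ vary yields a system of linear equations. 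I would argue the corresponding coefficient matrix (a Vandermonde-like or generalized-Vandermonde matrix) is nonsingular, so from polynomially many Shapley queries we can solve for the individual counts $|\Sat(D, \cqrst, k)|$ — and in particular for $\sum_k |\Sat(D,\cqrst,k)|$, which is exactly the number of independent sets (or $\bisat$ count) of the original bipartite instance, up to the trivial contribution of exogenous/forced facts. This is the place I expect the real work: choosing the padding gadget so that the coefficients come out as a provably invertible system, and doing the arithmetic to show nonsingularity (clearing the common factorial denominators, identifying the matrix as Vandermonde in a suitable variable). One has to be careful that the gadget facts are homomorphic images of atoms of $\cqrst$ so the "concerns" reduction and the structure of $\Sat$ behave as expected, and that $f$ itself is chosen so that $q(D\exo \cup E \cup \{f\}) - q(D\exo \cup E)$ isolates the contribution we can analyze.

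Assembling the pieces: membership in $\fpsharpp$ is immediate from Equation~\eqref{eq:shapley-bcq} as already noted in the excerpt, so only hardness remains. I would (i) set up the bipartite encoding and the family $\{D_t\}$ of padded databases, (ii) write $\shapley(D_t,\cqrst,f)$ explicitly in terms of $|\Sat(D,\cqrst,k)|$ and $t$ using~\eqref{eq:shapley-bcq}, (iii) prove the resulting linear system over $t = 0,\dots,N$ has full rank by a Vandermonde-type argument, (iv) conclude that a polynomial-time machine with a $\shapley(D,\cqrst,\cdot)$ oracle recovers $\sum_k |\Sat(D,\cqrst,k)|$ and hence the $\bisat$/independent-set count, establishing $\fpsharpp$-hardness. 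The main obstacle, to reiterate, is step (iii) together with the design in step (i): making the interpolation matrix provably invertible despite the awkward factorial coefficients of the Shapley value is precisely the difficulty that distinguishes this proof from the TID argument of Dalvi and Suciu, where the corresponding polynomial identity factors cleanly.
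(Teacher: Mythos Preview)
Your high-level strategy matches the paper's: a Turing reduction from counting independent sets in a bipartite graph, via a parametrized family of databases whose Shapley values yield a full-rank linear system in size-stratified counts. The paper even cites the same antecedents (Dalvi--Suciu's $R(x,y),R(y,z)$ argument and Aziz et al.) for this interpolation technique.

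The execution, however, differs in two places that matter. First, the paper's gadget is not ``self-satisfying dummy triples.'' For each $j=1,\dots,n+1$ (with $n=|V|$), the database $D_j$ adds a single new endogenous fact $f=R(\val{0})$ together with $j$ new endogenous right-side facts $T(\val{0}_1),\dots,T(\val{0}_j)$, connected to $f$ and to nothing else by exogenous $S$-facts. The point of this design is that $f$ flips the query in a permutation $\sigma$ precisely when $\sigma_f$ restricted to the original vertices is an independent set of $g$ \emph{and} at least one $T(\val{0}_i)$ is already present; working with the complement of the Shapley value then yields an equation whose unknowns are exactly the numbers $|\is(g,k)|$. An auxiliary instance $D_0$ (where $f$ is instead connected to all original right vertices) is used first to extract a correction term $P_2$ counting the non-independent-set prefixes, which is then subtracted out of each $D_j$ equation. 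Your ``add self-satisfying triples'' padding does not have this clean separation: once a dummy pair lands in $\sigma_f$ the query is already true, so the event ``$f$ flips the answer'' no longer isolates the independent-set structure of $g$, and it is not clear the resulting system is linear in the $|\is(g,k)|$ at all rather than in some messier mixture.

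Second, the coefficient matrix the paper obtains is not Vandermonde. After factoring out $k!$ from each column and reversing column order, it becomes the matrix with entries $a_{i,j}=(i+j+1)!$, and the paper invokes a determinant evaluation of Bacher (giving $\prod_{i=0}^{n-1} i!(i+1)!$) to establish nonsingularity. So the step you flag as ``identify the matrix as Vandermonde in a suitable variable'' does not go through as stated; a different, and less standard, invertibility argument is needed.
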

\begin{proof}
{\color{black}The proof is by a (Turing)
reduction from the problem of computing the number $|\is(g)|$ of
independent sets of a given bipartite graph $g$, which is the same
(via immediate reductions) as the problem of computing the number of
satisfying assignments of a bipartite monotone $2$-DNF formula, which
we denote by $\bisat$.  Dalvi and Suciu~\cite{DBLP:conf/vldb/DalviS04}
also proved the hardness of $\cqrst$ (for the problem of query
evaluation over TIDs) by reduction from $\bisat$. Their reduction is a
simple construction of a single input database, followed by a
multiplication of the query probability by a number. It is not at all
clear to us how such an approach can work in our case and, indeed, our
proof is more involved.  Our reduction takes the general approach that
Dalvi and Suciu~\cite{DBLP:journals/jacm/DalviS12} used (in a
different work) for proving that the CQ $q()\dl R(x,y),R(y,z)$ is hard
over TIDs: solve several instances of the problem for the construction
of a full-rank set of linear equations. The problem itself, however,
is quite different from ours.  This general technique has also been
used by Aziz et al.~\cite{DBLP:conf/stacs/AzizK14} for proving the
hardness of computing the Shapley value for a \e{matching game} on
unweighted graphs, which is again quite different from our problem.

\begin{figure}
\centering
\input{bipartite.pspdftex}
\caption{Constructions in the reduction of the proof of
  Lemma~\ref{lemma:hardnessbipartite}. Relations $R/1$ and $T/1$
  consist of endogenous facts and $S/2$ consists of exogenous facts.}
\label{fig:reduction}
\end{figure}

In more detail, the idea is as follows. Given an input bipartite graph
$g=(V,E)$ for which we wish to compute $|\is(g)|$, we construct $n+2$
different input instances $(D_j,f)$, for $j=0,\dots,n+1$, of the problem of computing
$\shapley(D_j,\cqrst,f)$, where $n=|V|$. Each instance provides us with
an equation over the numbers $|\is(g,k)|$ of independent sets of size
$k$ in $g$ for $k=0,\dots,n$. We then show that the set of equations
constitutes a non-singular matrix that, in turn, allows us to extract
the $|\is(g,k)|$ in polynomial time (e.g., via Gaussian
elimination). This is enough, since
$|\is(g)|=\sum_{k=0}^{n}|\is(g,k)|$.

Our reduction is illustrated in Figure~\ref{fig:reduction}.  Given the
graph $g$ (depicted in the leftmost part), we construct $n+2$ graphs
by adding new vertices and edges to $g$. For each such graph, we build
a database that contains an endogenous fact $R(v)$ for every left
vertex $v$, an endogenous fact $T(u)$ for every right vertex $u$, and an
exogenous fact $S(v,u)$ for every edge $(v,u)$.  In each constructed database
$D_j$, the fact $f=R(\val{0})$ represents a new left node, and we compute
$\shapley(D_j,\cqrst,f)$.  In $D_0$, the node of $f$ is connected to
every right vertex (i.e., we add an exogenous fact $S(\val{0},u)$ for every right vertex $u$). We use this database to compute a specific value (from the Shapley value of $f$), as we explain next. 

Instead of directly computing the Shapley value of $f$, we compute the complement of the Shapley value. To do that, we consider the permutations $\sigma$ where $f$ does not affect the query result. This holds in one of two cases:
\begin{enumerate}
    \item No fact of $T$ appears before $f$ in $\sigma$,
    \item At least one pair $\set{R(v),T(u)}$ of facts, such that there is a fact $S(u,v)$ in $D_0$, appears in $\sigma$ before $f$.
\end{enumerate}
The number of permutations where the first case holds is:
$$P_0^1=\frac{(n+1)!}{n_T+1}$$
where $n_T$ is the number of vertices on the right-hand side of the graph $g$ (namely, the number of facts in $T$). This holds since each one of the facts of $T$ and the fact $f$ have an equal chance to be selected first (among these facts) in a random permutation. We are looking for the permutations where $f$ is chosen before any fact of $T$; hence, we are looking at $1/(n_T+1)$ of all permutations.

Now, we compute the number of permutations $\sigma$ where the second case holds. To do that, we have to count the permutations $\sigma$ where $\sigma_f$ corresponds to a set of vertices from $g$ (the original graph) that is not an independent set. Let us denote by $|\ds(g,k)|$ the number of subsets of vertices of size $k$ from $g$ that are not independent sets. Then, the number of permutations satisfying the above is:
$$P_0^2=\sum_{k=2}^n |\ds(g,k)|\cdot k!\cdot (n-k)!$$
Recall that the fact $f$ corresponds to a new vertex that does not occur in the original bipartite graph $g$; hence, for each $k$, we have $k$ facts that appear before $f$ in the permutation and $n+1-k-1=n-k$ facts that appear after $f$.
We can now express the Shapley value of $f$ in terms of $P_0^1$ and $P_0^2$:
$$\shapley(D_0,\cqrst,f)=1-\frac{P_0^1+P_0^2}{(n+1)!}$$ 
Then, the value $P_0^2$ can be computed from $\shapley(D_0,\cqrst,f)$ using the following formula.
$$P_0^2=(1-\shapley(D_0,\cqrst,f))\cdot(n+1)!-P_0^1$$
We will use this value later in our proof.

Next, for $j=1,\dots,n+1$, we construct a
database $D_j$ that is obtained from $g$ by adding $f=R(\val{0})$ and facts $T(0_1),\dots,T(0_j)$ of $j$ new
right nodes, all connected to $f$ (by adding an exogenous fact $S(\val{0},0_i)$ for each $i\in\set{1,\dots,j})$.
We again compute the complement of the Shapley value of $f$, for each database $D_j$. The permutations where $f$ does not affect the query result are those satisfying one of two properties:
\begin{enumerate}
    \item At least one pair $\set{R(v),T(u)}$ of facts for $u,v\in V$, such that there exists a fact $S(v,u)$ in $D_j$, appears in $\sigma$ before $f$,
    \item No pair $\set{R(v),T(u)}$ of facts for $u,v\in V$, such that there exists a fact $S(v,u)$ in $D_j$, as well as none of the facts $T(0_1),\dots,T(0_j)$ appear in $\sigma$ before $f$.
\end{enumerate}
Recall that $V$ is the set of vertices of the original bipartite graph $g$.

Note that if the first condition holds, then it does not matter if we choose a fact from the set $\set{T(0_1),\dots,T(0_j)}$ before choosing $f$ or not (that is, the fact $f$ will not affect the query result regardless of the positions of these facts). Hence, we first ignore these facts, and compute the number of permutations of the rest of the facts that satisfy the first condition:
\[\sum_{k=2}^n |\ds(g,k)|\cdot k!\cdot (n-k)!\]
From each such permutation $\sigma$, we can then generate $m_j$ permutations of all the $n+j+1$ facts in $D_j$ by considering all the $m_j$ possibilities to add the facts of $\set{T(0_1),\dots,T(0_j)}$ to the permutation. Note that this is the same $m_j$ for each permutation, and it holds that $m_j=\binom{n+j+1}{j}\cdot j!$ (i.e., we select $j$ positions for the facts of $\set{T(0_1),\dots,T(0_j)}$ and place them in these position in one of $j!$ possible permutations, while placing the rest of the facts in the remaining positions in the order defined by $\sigma$). Moreover, using this procedure we cover all the permutations of the facts in $D_j$ that satisfy the first condition, since for each one of them there is a single corresponding permutation of the facts in $D_j\setminus\set{T(0_1),\dots,T(0_j)}$. Hence, the number of permutations of the facts in $D_j$ that satisfy the first property is
\[m_j\cdot\sum_{k=2}^n |\ds(g,k)|\cdot k!\cdot (n-k)!=m_j\cdot P_0^2\]
Recall that we have seen earlier that the value $P_0^2$ can be computed from $\shapley(D_0,\cqrst,f)$.

Next, we compute the number of permutations that satisfy the second property:
$$P_j=\sum_{k=0}^n |\is(g,k)|\cdot k!(n+j-k)!$$
This holds since each permutation $\sigma$ where $\sigma_f$ does not contain any fact $T(0_j)$ and any pair $\set{R(v),T(u)}$ of facts such that there is a fact $S(u,v)$ in $D_j$, corresponds to an independent set of $g$.
Hence, for each $j=1,\dots,n+1$ we get an equation of the form:
$$\shapley(D_j,\cqrst,f)=1-\frac{m_j\cdot P_0^2+P_j}{(n+j+1)!}$$
And we can compute $P_j$ from $\shapley(D_j,\cqrst,f)$ in the following way.
$$P_j=(n+j+1)!\big(1-\shapley(D_j,\cqrst,f)\big)-m_j\cdot P_0^2$$

From these equations we now
extract a system $Ax=y$ of $n+1$ equations over $n+1$ variables (i.e.,
$|\is(g,0)|,\dots,|\is(g,n)|$), where each $S_j$ stands for
\revthree{$1-\shapley(D_j,\cqrst,f)$}.
\revone{
\begin{gather*}
  \left(\!\! {\begin{array}{cccc}
        0!(n+1)! & 1!n! & \dots & n!1! \\
        0!(n+2)! & 1!(n+1)! & \dots & n!2! \\
        \vdots & \vdots & \vdots & \vdots \\
        0!(2n+1)! & 1!(2n)! & \dots & n!(n+1)!
      \end{array} }\!\! \right)\!\!\!
  \left( {\begin{array}{c}
  |\is(g,0)| \\
   |\is(g,1)| \\
   \vdots \\
   |\is(g,n)|
 \end{array} } \right)\!
= \!\left(\!\! {\begin{array}{c}
      (n+2)!S_1-m_1\cdot P_0^2 \\
      (n+3)!S_2-m_2\cdot P_0^2 \\
      \vdots \\
      (2n+2)!S_{n+1}-m_{n+1}\cdot P_0^2
  \end{array} }\!\! \right)\!
\end{gather*}}
By an elementary algebraic manipulation of $A$ (i.e., dividing each column $j$ by the constant $j!$ and reversing the order of the columns), we obtain the matrix
with the coefficients $a_{i,j}=(i+j+1)!$ that
Bacher~\cite{determinants2002} proved to be non-singular (and, in
fact, that $\prod_{i=0}^{n-1} i!(i+1)!$ is its determinant).  We then
solve the system as discussed earlier to obtain $|\is(g,k)|$ for each $k$, and, consequently, compute the value $\is(g)=\sum_{k=0}^n \is(g,k)$.
}
\end{proof}

{\color{black}
Finally, we show that computing $\shapley(D,q,f)$ is hard for any non-hierarchical Boolean CQ $q$ without self-joins, by constructing a reduction from the problem of computing $\shapley(D,\cqrst,f)$. As aforementioned, our reduction is very similar to the corresponding reduction of Dalvi and Suciu~\cite{DBLP:conf/vldb/DalviS04}, and we give it here for completeness. We will also use this result in Section~\ref{sec:measures}.

\begin{lemma}\label{lemma:reduction-from-cqrst}
Let $q$ be a non-hierarchical Boolean CQ without self-joins. Then,
computing $\shapley(D,q,f)$ is $\fpsharpp$-complete
\end{lemma}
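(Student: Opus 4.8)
We reduce from the problem of computing $\shapley(D,\cqrst,f)$, which is $\fpsharpp$-hard by Proposition~\ref{lemma:hardnessbipartite}; membership in $\fpsharpp$ holds for every Boolean CQ by the oracle argument following~\eqref{eq:shapley-bcq}. By the proof of Proposition~\ref{lemma:hardnessbipartite}, it suffices to reduce from $\cqrst$-instances $(D,f)$ in which every fact of $R$ and of $T$ is endogenous, every fact of $S$ is exogenous, and $f$ is an $R$-fact. Since $q$ is non-hierarchical and self-join-free, fix existential variables $x,y$ with $A_x\not\subseteq A_y$, $A_y\not\subseteq A_x$ and $A_x\cap A_y\neq\emptyset$, and partition the atoms of $q$ into the \emph{$x$-atoms} (those in $A_x\setminus A_y$), the \emph{$y$-atoms} ($A_y\setminus A_x$), the \emph{$xy$-atoms} ($A_x\cap A_y$), and the \emph{constant atoms} (those mentioning neither $x$ nor $y$); the first three classes are nonempty.

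Given such an instance $(D,f)$, the plan is to build a $q$-instance $(D',f')$ by ``padding'' $\cqrst$ into $q$. Choose a fresh constant $\star$ not occurring in $D$, and substitute $\star$ for every variable of $q$ other than $x$ and $y$, so that each atom now mentions only $x$, $y$, and constants. Single out one $x$-atom $R^{\ast}$ and one $y$-atom $T^{\ast}$. Into $D'$ put: for each constant $v$ with $R(v)\in D$, an \emph{endogenous} fact obtained from $R^{\ast}$ by writing $v$ in every $x$-position and $\star$ in every other position (the fact obtained from $f=R(v_0)$ is $f'$), and, for each other $x$-atom, the analogous facts over the \emph{same} set of constants but declared \emph{exogenous}; symmetrically, for each constant $u$ with $T(u)\in D$, an endogenous fact from $T^{\ast}$ and exogenous facts from the other $y$-atoms; for each $xy$-atom and each $S(v,u)\in D$, the exogenous fact obtained by writing $v$ in the $x$-positions, $u$ in the $y$-positions, and $\star$ elsewhere; and, for each constant atom, its single all-constant fact, declared exogenous. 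The construction runs in polynomial time, and because $q$ is self-join-free the relation symbols of distinct atoms differ, so these groups of facts do not interfere.

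It remains to show $\shapley(D',q,f')=\shapley(D,\cqrst,f)$. The endogenous facts of $D'$ are exactly the $R^{\ast}$- and $T^{\ast}$-copies, so there is a size-preserving bijection $h\colon D'\endo\to D\endo$ sending the $R^{\ast}$-copy of $v$ to $R(v)$ and the $T^{\ast}$-copy of $u$ to $T(u)$, with $h(f')=f$. One then checks that $D\exo\cup E\models\cqrst$ iff $D'\exo\cup h^{-1}(E)\models q$ for every $E\subseteq D\endo$: the other $x$-atoms (resp.\ $y$-atoms) are exogenously populated over exactly the constants used by $R^{\ast}$ (resp.\ $T^{\ast}$), so an assignment $x\mapsto v$ satisfies all $x$-atoms in $D'\exo\cup h^{-1}(E)$ precisely when $R(v)\in E$, an assignment $y\mapsto u$ satisfies all $y$-atoms precisely when $T(u)\in E$, every $xy$-atom is satisfied by $(v,u)$ precisely when $S(v,u)\in D\exo$, and the constant atoms are always satisfied; hence a homomorphism from $q$ into $D'\exo\cup h^{-1}(E)$ exists iff some $v,u$ satisfy $R(v)\in E$, $T(u)\in E$, $S(v,u)\in D\exo$, which is exactly satisfaction of $\cqrst$ by $D\exo\cup E$. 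Consequently the cooperative games underlying the two Shapley values are isomorphic via $h$, so the values agree, and with Proposition~\ref{lemma:hardnessbipartite} we conclude that computing $\shapley(D,q,f)$ is $\fpsharpp$-complete.

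The step I expect to require the most care is this last equivalence — specifically, confirming that the auxiliary exogenous atoms are genuinely ``transparent'' even when $x$ or $y$ occurs several times inside a single atom, and that no unintended homomorphism into $D'$ arises (which is precisely why $\star$ must be fresh and why self-join-freeness of $q$ is used); the remaining ingredients — that the Shapley value is preserved under isomorphism of cooperative games and that the construction is polynomial — are immediate.
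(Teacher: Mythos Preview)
Your proof is correct and follows essentially the same approach as the paper: both identify variables $x,y$ witnessing non-hierarchicality, substitute a fresh constant for the remaining variables, populate the atoms of $A_x\setminus A_y$, $A_y\setminus A_x$, $A_x\cap A_y$, and the rest from the $R$-, $T$-, $S$-facts and a single dummy fact respectively, designate one atom in each of the first two classes to carry the endogenous status, and conclude by a bijection between endogenous facts that preserves satisfaction. The only minor difference is that you restrict upfront to $\cqrst$-instances of the specific form produced by Proposition~\ref{lemma:hardnessbipartite} (all $R,T$ endogenous, all $S$ exogenous), whereas the paper phrases the construction for arbitrary endogenous/exogenous partitions of the $\cqrst$-instance; this does not affect the argument, and your verification of the satisfaction equivalence is in fact more explicit than the paper's.
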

\begin{proof}
We build a reduction from the problem of computing $\shapley(D,\cqrst,f)$ to the problem of computing $\shapley(D,q,f)$. Since $q$ is not hierarchical, there exist two variables $x,y\in \var(q)$, such that $A_x \cap A_y \neq \emptyset$, while $A_x\not\subseteq A_y$ and $A_y\not\subseteq A_x$; hence, we can choose three atoms $\alpha_x,\alpha_y$ and $\alpha_{(x,y)}$ in $q$ such that:
\begin{itemize}
\item $x\in \var(\alpha_x)$ and $y\notin \var(\alpha_x)$
\item $y\in \var(\alpha_y)$ and $x\notin \var(\alpha_y)$
\item $x,y\in \var(\alpha_{x,y})$
\end{itemize}
Recall that $\var(\alpha)$ is the set of variables that appear in the atom $\alpha$.

Given an input database $D$ to the first problem, we build an input database $D'$ to our problem in the following way. Let $\val{c}$ be an arbitrary constant that does not occur in $D$. For each fact $R(\val{a})$ and for each atom $\alpha\in A_x\setminus A_y$, we generate a fact $f$ over the relation corresponding to $\alpha$ by assigning the value $\val{a}$ to the variable $x$ and the value $\val{c}$ to the rest of the variables in $\alpha$. We then add the corresponding facts to $D'$. We define each new fact in the relation of $\alpha_x$ to be endogenous if and only if the original fact from $R$ is endogenous, and we define the rest of the facts to be exogenous.

Similarly, for each fact $T(\val{b})$ and for each atom $\alpha\in A_y\setminus A_x$, we generate a fact $f$ over the relation corresponding to $\alpha$ by assigning the value $\val{b}$ to the variable $y$ and the value $\val{c}$ to the rest of the variables in $\alpha$. Moreover, for each fact $S(\val{a},\val{b})$ and for each atom $\alpha\in A_x\cap A_y$, we generate a fact $f$ over the relation corresponding to $\alpha$ by assigning the value $\val{a}$ to $x$, the value $\val{b}$ to $y$ and the value $\val{c}$ to the rest of the variables in $\alpha$. In both cases, we define the new facts in $\alpha_x$ and $\alpha_{x,y}$ to be endogenous if and only if the original fact is endogenous, and we define the rest of the facts to be exogenous. Finally, for each atom $\alpha$ in $q$ that does not use the variables $x$ and $y$ (that is, $\alpha\not\in A_x\cup A_y$), we add a single exogenous fact $R_\alpha(\val{c},\dots,\val{c})$ to the relation $R_\alpha$ corresponding to $\alpha$.

We will now show that the Shapley value of each fact $R(\val{a})$ in $D$ w.r.t~$q_\rst$ is equal to the Shapley value of the corresponding fact $f$ over the relation of $\alpha_x$ in $D'$ (i.e., the fact in the relation of $\alpha_x$ that has been generated using the value $\val{a}$ that occurs in $R(\val{a})$). The same holds for a fact $T(\val{b})$ and its corresponding fact in the relation of $\alpha_y$ in $D'$, and for a fact $S(\val{a},\val{b})$ and its corresponding fact in the relation of $\alpha_{x,y}$ in $D'$. 

By definition, the Shapley value of a fact $f$ is the probability to select a random permutation $\sigma$ in which the addition of the fact $f$ changes the query result from $0$ to $1$ (i.e., $f$ is a counterfactual cause for the query w.r.t.~$\sigma_f\cup D\exo$). 
From the construction of $D'$, it holds that the number of endogenous facts in $D$ is the same as the number of endogenous facts in $D'$; hence, the total number of permutations of the facts in $D$ is the same as the total number of permutations of the facts in $D'$. It is left to show that the number of permutations of the facts in $D$ that satisfy the above condition is the same as the number of permutations of the facts in $D'$ that satisfy the above condition w.r.t.~the corresponding fact $f'$.

From the construction of $D'$ it is straightforward that a subset $E$ of $D\endo$ is such that $E\cup D\exo\models\cqrst$ if and only if the subset $E'$ of $D'\endo$ that contains for each fact $f\in E$ the corresponding fact $f'\in D'$ is such that $E'\cup D'\exo\models q$. Therefore, it also holds that if a fact $f$ is a counterfactual cause for $\cqrst$ w.r.t.~$E\cup D\exo$, the corresponding fact $f'$ is a counterfactual cause for $q$ w.r.t.~$E'\cup D'\exo$.
Thus, the number of permutations of the endogenous facts in $D$ in which $f$ affects the result of $\cqrst$ is equal to the number of permutations of the endogenous facts in $D'$ in which $f'$ changes the result of $q$. As aforementioned, the total number of permutations is the same for both $D$ and $D'$, and we conclude, from the definition of the Shapley value, that $\shapley(D,\cqrst,f)=\shapley(D',q,f')$.
\end{proof}
}

\subsection{Aggregate Functions over Conjunctive Queries}\label{sec:agg}
Next, we study the complexity of aggregate-relational queries, where
the internal relational query is a CQ. We begin with hardness. The
following theorem generalizes the hardness side of
Theorem~\ref{thm:bcq-dichotomy} and states that it is
$\fpsharpp$-complete to compute $\shapley(D,\alpha,f)$ whenever
$\alpha$ is of the form $\gamma\bracs{q}$, as defined in
Section~\ref{sec:preliminaries}, and $q$ is a non-hierarchical CQ
without self-joins.  The only exception is when $\alpha$ is a
\e{constant} numerical query (i.e., $\alpha(D)=\alpha(D')$ for
all databases $D$ and $D'$); in that case, $\shapley(D,\alpha,f)=0$
always holds.

\def\agghard{ Let $\alpha=\gamma\bracs{q}$ be a fixed
  aggregate-relational query where $q$ is a non-hierarchical CQ
  without self-joins. Computing $\shapley(D,\alpha,f)$, given $D$ and
  $f$ as input, is $\fpsharpp$-complete, unless $\alpha$ is constant.}
\begin{thm}\label{thm:agg-hard}
\agghard
\end{thm}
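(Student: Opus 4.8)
The plan is to reduce from the hardness of computing $\shapley(D,\cqrst,f)$ (Proposition~\ref{lemma:hardnessbipartite}), or more conveniently from the already-generalized Lemma~\ref{lemma:reduction-from-cqrst} which gives hardness for every non-hierarchical self-join-free CQ $q$. Fix such a $q$ and a non-constant $\gamma$. The core idea is that since $q$ is non-hierarchical, the Boolean query problem $\shapley(D,q,f)$ is $\fpsharpp$-hard, and we want to simulate the $0/1$ behavior of $q$ using the aggregate query $\gamma[q]$. First I would observe that, because $\gamma$ is non-constant and computable in polynomial time, there exist two finite relations $r_0, r_1 \subseteq \const^k$ with $\gamma(r_0) \neq \gamma(r_1)$; by a padding/renaming argument on constants we may take these to be "witnessable" as $q(D)$ for suitable databases. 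The subtlety is that $q(D)$ ranges over all relations expressible as CQ outputs, and we need enough control over which answer tuples appear to toggle $\gamma$ between two distinct values as a single endogenous fact $f$ is added or removed.

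The key construction: given an input instance $(D,f)$ for the Boolean problem $\shapley(D,q,f)$, build $(D',f)$ for the aggregate problem so that for every $E \subseteq D'\endo$, the value $\gamma[q](E \cup D'\exo)$ is an affine function $a \cdot q(E \cup D\exo) + b$ of the Boolean value, with $a \neq 0$. Concretely, I would take $q$ over a renamed schema and add to every relation a block of exogenous "dummy" facts that force a fixed nonempty set of answer tuples $U$ to always be present in $q(D')$ regardless of $E$ (by plugging fresh constants into the positions that the original variables used, so these answers are independent of the endogenous content), while the original construction contributes either the empty set of "real" answers (when $q$ is false on $E\cup D\exo$) or one additional answer tuple not in $U$ (when true). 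Then $\gamma[q](E\cup D'\exo) \in \{\gamma(U), \gamma(U')\}$ for the two relations $U \subsetneq U'$, and if these two numbers differ we are done; a bit of care with choosing the extra tuple (iterating over polynomially many candidates, using that $\gamma$ is non-constant, to find one that changes the value) handles the case where a particular choice happens to leave $\gamma$ unchanged. By linearity of the Shapley value (the first fundamental property in Section~\ref{sec:shapley}), $\shapley(D',\gamma[q],f) = a\cdot \shapley(D,q,f) + b \cdot \shapley(D', \mathbf{1}, f)$, and the second term is $0$ since a constant game assigns every player Shapley value $0$; hence $\shapley(D,q,f) = \shapley(D',\gamma[q],f)/a$, giving the reduction. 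Membership in $\fpsharpp$ follows exactly as in Equation~\eqref{eq:shapley-bcq}: the Shapley value is a polynomially-weighted sum over $k$ of $|\mathrm{Sat}|$-style counts, and counting subsets $E$ with a given $|E|$ and a given value of $\gamma[q](E \cup D\exo)$ is in $\sharpp$ since $\gamma$ is poly-time computable — actually one must be slightly more careful here, as $\gamma[q]$ is real-valued, but since we only need the weighted sum $\sum_E w(|E|)\gamma[q](E\cup D\exo)$ and each summand is poly-time computable, this is an $\fpsharpp$ computation (a poly-time Turing machine issuing $\sharpp$ oracle queries that count, for each target rational value, the number of $E$ achieving it — there are at most exponentially many but they can be handled by computing $\sum_E \gamma[q](E\cup D\exo)$ restricted to $|E|=k$ via counting over a suitable bit-decomposition, or simply by noting the overall quantity is a difference of two $\sharpp$-like counts after clearing denominators).

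The main obstacle I anticipate is the membership direction being genuinely clean: for Boolean $q$ the reduction to counting satisfying $k$-subsets is transparent, but for a general poly-time $\gamma$ taking arbitrary real values one cannot literally write the Shapley value as a $\sharpp$ function. The resolution is to exploit that $\gamma$ is fixed and poly-time, so on databases of size $n$ it takes only polynomially-many bits, and to push the summation $\sum_{E : |E|=k} \gamma[q](E \cup D\exo)$ into the oracle by counting, for each bit position, the number of $E$ for which that bit is set — or, more simply, to argue that the entire expression in~\eqref{eq:shapley-bcq} (suitably generalized) is computable by a single poly-time machine with access to a $\sharpp$ oracle that, on input a database, a size $k$, and a target rational $v$ (with polynomially-bounded description), returns $|\{E \subseteq D\endo : |E| = k,\ \gamma[q](E\cup D\exo) = v\}|$; since $q(E\cup D\exo)$ and hence $\gamma$ of it is poly-time checkable, this counting problem is in $\sharpp$, and there are only polynomially many distinct values $v$ that $\gamma[q]$ can attain (bounded by $2^{\mathrm{poly}(n)}$ in magnitude but the machine need not enumerate them — it can instead compute $\sum_E \gamma[q](\cdot)$ directly as $\sum_{\text{bit } b} 2^b \cdot \#\{E : \text{bit } b \text{ of } \gamma[q](E\cup D\exo) \text{ is } 1\}$). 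For the hardness direction the only real work is verifying that the padding construction can be carried out for an arbitrary non-hierarchical self-join-free $q$ — that one can always steer the answer set of $q$ on $D'$ between two relations on which $\gamma$ differs — and here the freedom to choose fresh constants for all non-$\{x,y\}$ positions, combined with the reduction already built in Lemma~\ref{lemma:reduction-from-cqrst}, makes this routine.
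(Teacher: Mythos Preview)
Your high-level strategy matches the paper's: reduce from a hard Boolean Shapley problem by arranging that $q(E\cup D'\exo)$ toggles between exactly two relations on which $\gamma$ takes different values, then invoke linearity to read off the Boolean Shapley value. But your sketch glosses over the step that actually makes this work, and as written the construction does not go through.

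You write that ``the original construction contributes either the empty set of real answers \dots\ or one additional answer tuple,'' reducing from ``the Boolean problem $\shapley(D,q,f)$.'' But $q$ here is a $k$-ary CQ, not Boolean; an input database for the Boolean closure of $q$ can produce many distinct answers under $q$, so the answer set will not toggle between just two states. The paper handles this by first taking a minimal database $\widetilde{D}$ with $\alpha(\widetilde{D})\neq\alpha(\emptyset)$, fixing an answer $\tup a_1\in q(\widetilde{D})$, and substituting $\tup a_1$'s constants for the head variables of $q$ to obtain a Boolean query $q'$. It then isolates a non-hierarchical \emph{connected component} $q_1'$ of $q'$ and reduces from $\shapley(D,q_1',f)$: the input $D$ now consists only of facts matching the atoms of $q_1'$ (hence carrying $\tup a_1$'s constants in head-variable positions), so the endogenous part can contribute only answers agreeing with $\tup a_1$ on $q_1$'s head variables; the remaining components are satisfied unconditionally by exogenous facts copied from $\widetilde{D}$. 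This is what forces $q(E\cup D'\exo)$ to be exactly one of a fixed $A\subsetneq q(\widetilde{D})$ or all of $q(\widetilde{D})$. Your ``iterating over polynomially many candidates'' to separate $\gamma(U)$ from $\gamma(U')$ is also off: since $\alpha$ is fixed, any witness is a one-time offline constant, and the paper's minimality of $\widetilde{D}$ is precisely the device that guarantees $\gamma(A)=\gamma(\emptyset)\neq\gamma(q(\widetilde{D}))$ without search. Your treatment of membership in $\fpsharpp$ (via bit-decomposition of the poly-time computable $\gamma$) is more explicit than the paper's, which does not spell it out.
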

{\color{black}
\begin{proof}
Since $\alpha$ is not a constant function, there exists a database $\widetilde{D}$, such that $\alpha(\widetilde{D})\neq\alpha(\emptyset)$. Let $\widetilde{D}$ be a minimal such database; that is, for every database $D$ such that $q(D)\subset q(\widetilde{D})$ it holds that $\alpha(D)=\alpha(\emptyset)$. Let $q(\widetilde{D})=\set{\tup a_1,\dots,\tup a_n}$. We replace the head variables in $q$ with the corresponding constants from the answer $\tup a_1$. We denote the result by $q'$. 

We start with the following observation. The query $q'$ is a non-hierarchical Boolean CQ (recall that the definition of hierarchical queries considers only the existential variables, which are left intact in $q'$). We can break the query $q'$ into connected components $q_1',...,q_m'$, such that $\var(q_i')\cap \var(q_j')=\emptyset$ for all $i\neq j$ (it may be the case that there is only one connected component). Since $q'$ is not hierarchical, we have that $q_i'$ is not hierarchical for at least one $i\in\set{1,\dots,m}$. We assume, without loss of generality, that $q_1'$ is not hierarchical. Then, Theorem~\ref{thm:bcq-dichotomy} implies that computing $\shapley(D,q_1',f)$ is $\fpsharpp$-complete. Therefore, we construct a reduction from the problem of computing $\shapley(D,q_1',f)$ to the problem of computing $\shapley(D',\alpha,f)$.

Let $x_1,\dots,x_k$ be the head variables of $q$. If a tuple $\tup a=(v_1,\dots,v_k)$ is in $q(D)$ for some database $D$, then for every connected component $q_i$ of $q$, there is a homomorphism from $q_i$ to $D$, such that each head variable $x_j$ is mapped to the corresponding value $v_j$ from $\tup a$. On the other hand, if this does not hold for at least one of the connected components, then $(v_1,\dots,v_k)$ is not in $q(D)$.

Given an input database $D$ to the first problem, we build an input database $D'$ to our problem, as we explain next. \revone{As in the proof of Theorem~\ref{thm:satk}, we assume, without loss of generality, that $D$ contains only facts $f$ such that there is a homomorphism from an atom of $q_1'$ to $f$.}
%For simplicity, we assume that $D$ contains only facts $f$ such that there is a homomorphism from an atom of $q_1'$ to $f$. This is not a limiting assumption since we can ignore the rest of the facts when computing the Shapley value of a fact in $D$ w.r.t.~$q_1'$, as these facts do not affect the query result~\cite{Conitzer:2004:CSV:1597148.1597185}. 
To construct $D'$, we first add a subset of the facts of $\widetilde{D}$ to $D\exo'$ (recall that $\widetilde{D}$ is a minimal database satisfying $\alpha(\widetilde{D})\neq\alpha(\emptyset)$). For each relation $R$ that occurs in $q_i'$ for $i=\set{2,\dots,m}$, we copy all the facts from $R^{\widetilde{D}}$ to $R^{D\exo'}$. As explained above, for each answer $\tup a_i=(v_1,\dots,v_k)$ in $q(\widetilde{D})$ and for each connected component $q_i$, there is a homomorphism from $q_i$ to $\widetilde{D}$ such that each head variable $x_j$ that appears in $q_i$ is mapped to the value $v_j$; hence, the same holds for the database $D'$ and every connected component in $\set{q_2,\dots,q_m}$. Therefore, in order to have all the tuples $\set{\tup a_1,\dots,\tup a_n}$ in $q(D')$, we only need to add additional facts to the relations that appear in $q_1$ to satisfy this connected component.

Now, let $x_{j_1},\dots,x_{j_r}$ be the head variables that appear in $q_1$. For each tuple $\tup a_i$ that does not agree with $\tup a_1$ on the values of these variables (i.e., the value of at least one $x_{j_k}$ is different in $\tup a_1$ and $\tup a_i$), we generate a set of exogenous facts as follows. Assume that $\tup a_i$ uses the value $v_{j_k}$ for the head variable $x_{j_k}$. We replace each variable $x_{j_k}$ in $q_1$ with the value $v_{j_k}$. Then, we assign a new distinct value to each one of the existential variables of $q_1$. We then add the corresponding facts to $D\exo'$ (e.g., if $q_1$ now contains the atom $R(\val{a},\val{b},\val{c})$, then we add the fact $R(\val{a},\val{b},\val{c})$ to $D\exo'$). At this point, it is rather straightforward that each $\tup a_i$ that does not agree with $\tup a_1$ on the values of the head variables of $q_1$ appears in $q(D')$; however, $\tup a_1$ and each tuple $\tup a_i$ that uses the same values as $\tup a_1$ for the head variables of $q_1$ are not yet in $q(D')$. Since we assumed that $\widetilde{D}$ is minimal, we know that $\alpha(D\exo')=\alpha(\emptyset)$.

Next, we add all the facts of $D$ to $D'$. Each fact of $D\exo$ is added to $D\exo'$, and each fact of $D\endo$ is added to $D\endo'$. We prove that the following holds:
  \[\shapley(D,q_1',f)=\frac{\shapley(D',\alpha,f)}{\alpha(\widetilde{D})-\alpha(\emptyset)}\]
%Then, if $\shapley(D',\alpha,f)$ can be computed in polynomial time, and since $\alpha(\widetilde{D})-\alpha(\emptyset)$ is not zero, we can compute $\shapley(D,q_1',f)$ in polynomial time, which is a contradiction to the fact that this is a $\fpsharpp$-complete problem.

Let $A=\set{\tup a_{k_1},\dots,\tup a_{k_t}}$ be the set of answers that do not agree with $\tup a_1$ on the values of the head variables $x_{j_1},\dots,x_{j_r}$ of $q_1$. As explained above, we have that $A\subseteq q(D\exo')$. Moreover, since $q_1'$ was obtained from $q_1$ by replacing the head variables with the corresponding values from $\tup a_1$, and since we removed from $D$ every fact that does not agree with $q_1'$ on those values, the only possible answer of $q_1$ on $D$ is $(v_{j_1},\dots,v_{j_r})$ where $v_{j_i}$ is the value in $\tup a_1$ corresponding to the head variable $x_{j_i}$ of $q_1$. Since all the answers in $q(\widetilde{D})\setminus A$ agree with $\tup a_1$ on the values of all these variables, we have that in each permutation $\sigma$ of the endogenous facts in $D'$, one of the following holds:
\begin{enumerate} 
    \item $q(\sigma_f\cup D\exo')=A$ and $q(\sigma_f\cup D\exo'\cup\set{f})=A$,
    \item $q(\sigma_f\cup D\exo')=A$ and $q(\sigma_f\cup D\exo'\cup\set{f})=q(\widetilde{D})$.
\end{enumerate}
Clearly, the contribution of each permutation that satisfies the first condition to the Shapley value of $f$ is zero (as we assumed that $\alpha(A)=\alpha(\emptyset)$ for every $A\subset \set{\tup a_1,\dots, \tup a_n}$), while the contribution of each permutation that satisfies the second condition to the Shapley value of $f$ is $\alpha(\widetilde{D})-\alpha(\emptyset)$ (as we assumed that $\alpha(\widetilde{D})\neq\alpha(\emptyset)$).

Let $X_f$ be a random variable that gets the value $1$ if $f$ adds the answer $\tup a_1$ to the result of $q$ in the permutation and $0$ otherwise. Due to the aforementioned observations and by the definition of the Shapley value, the following holds:
  \[\shapley(D',\alpha,f)=(\alpha(\widetilde{D})-\alpha(\emptyset))\cdot
  E(X_f)\]
Note that $D$ and $D'$ contain the same endogenous facts. Moreover, the fact $f$ adds the answer $\tup a_1$ to the result of $q$ in a permutation $\sigma$ of the endogenous facts in $D'$ if and only if $f$ changes the result of $q_1'$ from $0$ to $1$ in the same permutation $\sigma$ of the endogenous facts of $D$. This holds since $f$ changes the result of $q_1'$ in $\sigma$ if and only if there exist a set of facts in $\sigma_f\cup D\exo\cup{f}$ (that contains $f$) that satisfies $q_1'$, which, as explained above, happens if and only if the same set of facts adds the answer $\tup a_1$ to the result of $q$ in $\sigma$. Therefore, the Shapley value of a fact $f$ in $D$ is:
  \[\shapley(D,q_1',f)= E(X_f)\]
where $X_f$ is the same random variable that we introduced above, and that concludes our proof.
\end{proof}
}
For instance, it follows from Theorem~\ref{thm:agg-hard} that,
whenever $q$ is a non-hierarchical CQ without self-joins, it is
$\fpsharpp$-complete to compute the Shapley value for the
aggregate-relational queries $\qcnt\bracs{q}$,
$\qsum\angs{\varphi}\bracs{q}$, $\qmax\angs{\varphi}\bracs{q}$, and
$\qmin\angs{\varphi}\bracs{q}$, unless $\varphi(\tup c)=0$ for all databases
$D$ and tuples $\tup c\in q(D)$. Additional examples follow.

\begin{exa}
  Consider the numerical query $\alpha_3$ from
  Example~\ref{example:agg}. Since $q_4$ is not hierarchical,
  Theorem~\ref{thm:agg-hard} implies that computing
  $\shapley(D,\alpha_4,f)$ is $\fpsharpp$-complete. Actually,
  computing $\shapley(D,\alpha,f)$ is $\fpsharpp$-complete for any
  non-constant aggregate-relational query over $q_4$. Hence, computing
  the Shapley value w.r.t.~$\qcnt\bracs{q_4}$ (which counts the
  papers in $\rel{Citations}$ with an author from California) or
  w.r.t.~$\qmax\angs{[2]}\bracs{q_4}$ (which calculates the number of
  citations for the most cited paper by a Californian) is
  $\fpsharpp$-complete as well.\qed
\end{exa}

%To prove hardness in Theorem~\ref{thm:agg-hard}, we break $q$
%into connected components $q_1,\dots,q_m$, such that
%$\var(q_i)\cap\var(q_j)=\emptyset$ for all
%$i,j\in\set{1,\dots,m}$. Since $q$ is non-hierarchical, at least one
%of these connected components is non-hierarchical. We assume, without
%loss of generality, that this is $q_1$. Next, since $\alpha$
%is not a constant function, there exists a database $\widetilde{D}$
%such that $\alpha(\widetilde{D})\neq\alpha(\emptyset)$. We select
%one answer $\tup a$ from $q(\widetilde{D})$ and substitute the free
%variables of $q_1$ with the corresponding constants from $\tup a$ to
%obtain the Boolean CQ $q_1'$.  Theorem~\ref{thm:bcq-dichotomy} states
%that computing $\shapley(D,q_1',f)$ is $\fpsharpp$-complete.  We then
%reduce the problem of computing $\shapley(D,q_1',f)$ to the problem of
%computing $\shapley(D,\alpha,f)$, and show that
%$$\shapley(D,q_1',f)=\frac{\shapley(D',\alpha,f)}{\alpha(\widetilde{D})-\alpha(\emptyset)}\]
%where $D'$ is a database obtained by combining facts from $D$ with facts from $\widetilde{D}$. As usual, the full proof is given in the extended version of the paper~\cite{DBLP:journals/corr/abs-1904-08679}.

Interestingly, it turns out that Theorem~\ref{thm:agg-hard} captures
precisely the hard cases for computing the Shapley value w.r.t.~any
summation over CQs without self-joins. In particular, the following
argument shows that $\shapley(D,\qsum\angs{\varphi}\bracs{q},f)$ can be
computed in polynomial time if $q$ is a hierarchical CQ without self-joins. Let $q=q(\tup x)$ be an arbitrary CQ. For $\tup a\in q(D)$, let
$q_{[\tup x\rightarrow \tup a]}$ be the Boolean CQ obtained from $q$
by substituting every head variable $x_j$ with the value of $x_j$ in
$\tup a$. Hence, we have that $\qsum\angs{\varphi}\bracs{q}(D)=\sum_{\tup
  a\in q(D)} \left[\varphi(\tup a)\cdot q_{[\tup x\rightarrow \tup a]}(D)\right]$. The
linearity of the Shapley value (stated as a fundamental property in
Section~\ref{sec:shapley}) implies that:
\begin{equation}\label{eq:additive-sum}
\shapley(D,\qsum\angs{\varphi}\bracs{q},f)
=
\sum_{\tup a\in q(D)}
\varphi(\tup a)\cdot
\shapley(D,q_{[\tup x\rightarrow \tup a]},f)\,.
\end{equation}
Then, from Theorem~\ref{thm:bcq-dichotomy} we conclude that if $q$ is
a hierarchical CQ with self-joins, then  $\shapley(D,q_{[\tup
  x\rightarrow \tup a]},f)$ can be computed in polynomial time for every $\tup a\in q(D)$. Hence,
we have the following corollary of Theorem~\ref{thm:bcq-dichotomy}.

\def\sumdichotomy{ Let $q$ be a hierarchical CQ without self-joins.
  If $\alpha$ is an aggregate-relational query
  $\qsum\angs{\varphi}\bracs{q}$, then $\shapley(D,\alpha,f)$ can be
  computed in polynomial time, given $D$ and $f$ as input. In particular,
  $\shapley(D,\qcnt\bracs{q},f)$ can be computed in polynomial time.
}

\begin{cor}\label{cor:sum-dichotomy}
\sumdichotomy
\end{cor}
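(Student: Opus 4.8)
The plan is to reduce the computation of $\shapley(D,\qsum\angs{\varphi}\bracs{q},f)$ to polynomially many instances of the Boolean problem already settled in Theorem~\ref{thm:bcq-dichotomy}, following exactly the argument sketched in the paragraph preceding the statement. First I would write $q=q(\tup x)$ and, for every answer $\tup a\in q(D)$, let $q_{[\tup x\rightarrow\tup a]}$ denote the Boolean CQ obtained from $q$ by substituting each head variable by its value in $\tup a$. A tuple $\tup a$ contributes $\varphi(\tup a)$ to $\qsum\angs{\varphi}\bracs{q}(D')$ precisely when $D'\models q_{[\tup x\rightarrow\tup a]}$, so as numerical queries we have $\qsum\angs{\varphi}\bracs{q}=\sum_{\tup a\in q(D)}\varphi(\tup a)\cdot q_{[\tup x\rightarrow\tup a]}$, and the linearity property of the Shapley value stated in Section~\ref{sec:shapley} yields Equation~\eqref{eq:additive-sum}.

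Second, I would check that every term on the right-hand side of~\eqref{eq:additive-sum} is efficiently computable. The set $q(D)$ is computable in polynomial time (CQ evaluation in data complexity) and has only polynomially many tuples. Each $q_{[\tup x\rightarrow\tup a]}$ is again a CQ without self-joins, since replacing head variables by constants leaves every relation symbol occurring at most once in the body; and it is again hierarchical, because hierarchicality is a condition on the \emph{existential} variables $y,y'$ and the atom-sets $A_y$, none of which changes when a head variable is replaced by a constant (no atom is removed, and the set of atoms containing any fixed existential variable is unchanged). Hence Theorem~\ref{thm:bcq-dichotomy} applies to each $q_{[\tup x\rightarrow\tup a]}$ and produces $\shapley(D,q_{[\tup x\rightarrow\tup a]},f)$ in polynomial time.

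Finally, since $\varphi$ is assumed polynomial-time computable, the right-hand side of~\eqref{eq:additive-sum} is a sum of polynomially many polynomial-time computable rationals, so $\shapley(D,\qsum\angs{\varphi}\bracs{q},f)$ is computable in polynomial time; taking $\varphi=\mathbf{1}$ gives the statement for $\qcnt\bracs{q}$. I do not expect a real obstacle here: the only point requiring a line of care is the observation that substituting head variables preserves both self-join-freeness and hierarchicality, which is exactly what makes Theorem~\ref{thm:bcq-dichotomy} applicable to every $q_{[\tup x\rightarrow\tup a]}$.
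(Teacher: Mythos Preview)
Your proposal is correct and follows essentially the same approach as the paper: decompose $\qsum\angs{\varphi}\bracs{q}$ as $\sum_{\tup a\in q(D)}\varphi(\tup a)\cdot q_{[\tup x\rightarrow\tup a]}$, apply the linearity of the Shapley value to obtain Equation~\eqref{eq:additive-sum}, and invoke Theorem~\ref{thm:bcq-dichotomy} on each summand. You actually spell out more carefully than the paper why each $q_{[\tup x\rightarrow\tup a]}$ remains self-join-free and hierarchical, which is a welcome addition.
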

Together with Theorem~\ref{thm:agg-hard}, we get a full dichotomy for
$\qsum\angs{\varphi}\bracs{q}$ over CQs without self-joins.

The complexity of computing $\shapley(D,\alpha,f)$ for other
aggregate-relational queries remains an open problem for the general
case where $q$ is a hierarchical CQ without self-joins. We can,
however, state a positive result for $\qmax\angs{\varphi}\bracs{q}$ and
$\qmin\angs{\varphi}\bracs{q}$ for the special case where $q$ consists of
a single atom (i.e., aggregation over a single relation).

\def\maxsingleatom{ Let $q$ be a CQ with a single atom. Then,
   $\shapley(D,\qmax\angs{\varphi}\bracs{q},f)$ and
  $\shapley(D,\qmin\angs{\varphi}\bracs{q},f)$ can be computed in polynomial
  time.  }
\begin{prop}\label{prop:max}
\maxsingleatom
\end{prop}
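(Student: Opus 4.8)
The plan is to reduce the computation of $\shapley(D, \qmax\angs{\varphi}\bracs{q}, f)$, where $q$ has a single atom $R(\tup t)$, to a polynomial number of counting problems over the endogenous facts, each of which can be solved directly by combinatorial counting. First I would observe that, since $q$ is a single-atom CQ, the answer set $q(D)$ is simply the projection (and/or selection, if $\tup t$ contains constants or repeated variables) of $R^D$ onto the head variables. Consequently, the feature values $\varphi(\tup a)$ for $\tup a \in q(D)$ can be computed in polynomial time, and there are at most $|R^D|$ distinct such values. Let $w_1 > w_2 > \cdots > w_p$ be the distinct values of $\varphi$ over $q(E \cup D\exo)$ ranging over all $E \subseteq D\endo$ (equivalently, over $q(D)$), together with $w_0 = 0$ to account for the empty answer case $\qmax = 0$. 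For a subset $E \subseteq D\endo$, the value $\qmax\angs{\varphi}\bracs{q}(E \cup D\exo)$ equals the largest $w_i$ that is "achieved," i.e., such that some answer tuple $\tup a$ with $\varphi(\tup a) = w_i$ is produced by $E \cup D\exo$.

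The key step is to rewrite the Shapley value using the layered structure of the $\qmax$ game. Writing $v(E) = \qmax\angs{\varphi}\bracs{q}(E \cup D\exo) - \qmax\angs{\varphi}\bracs{q}(D\exo)$, I would express $v$ as a telescoping sum of indicator games: for each threshold $w_i$, let $g_i(E) = 1$ if $E \cup D\exo$ produces some answer with feature value $\ge w_i$, and $0$ otherwise. Then $\qmax\angs{\varphi}\bracs{q}(E\cup D\exo) = \sum_{i=1}^{p}(w_i - w_{i-1})\cdot g_i(E)$, so by the linearity of the Shapley value (the first fundamental property in Section~\ref{sec:shapley}) it suffices to compute $\shapley(D, g_i, f)$ for each $i$, and then take the appropriate linear combination (subtracting the constant $\qmax\angs{\varphi}\bracs{q}(D\exo)$ changes nothing, as constants have Shapley value $0$). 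Now $g_i$ is a $0/1$ game, so $\shapley(D,g_i,f) = \Pr[f \text{ is pivotal for } g_i]$ over a uniformly random permutation. Since $q$ is a single atom, $g_i$ depends only on whether the set of facts "hits" the set $H_i$ of $R$-facts whose induced answer tuple has feature value $\ge w_i$; $g_i(E) = 1$ iff $(E \cup D\exo) \cap H_i \ne \emptyset$, which is a monotone property. The fact $f$ is pivotal for $g_i$ in a permutation exactly when: (a) $f \in H_i$; (b) no exogenous fact lies in $H_i$ (otherwise $g_i \equiv 1$); and (c) $f$ appears in the permutation before every other endogenous fact of $H_i$. By the standard "first among a set" argument already used in the proof of Proposition~\ref{lemma:hardnessbipartite} (the computation of $P_1$), condition (c) has probability $1/|H_i \cap D\endo|$, independent of the rest of the permutation. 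So $\shapley(D, g_i, f) = 1/|H_i \cap D\endo|$ if $f \in H_i$ and $H_i \cap D\exo = \emptyset$, and $0$ otherwise — all computable in polynomial time.

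Assembling the pieces: $\shapley(D, \qmax\angs{\varphi}\bracs{q}, f) = \sum_{i : f \in H_i,\ H_i \cap D\exo = \emptyset}(w_i - w_{i-1}) / |H_i \cap D\endo|$. The argument for $\qmin$ is symmetric, using instead the "at most $w_i$" thresholds $w_1 < w_2 < \cdots$ and telescoping from the largest value downward (with care for the $q(D) = \emptyset \Rightarrow \qmin = 0$ convention, handled exactly as in the $\qmax$ case by treating $0$ as one of the threshold values). The main obstacle — and it is a mild one — is bookkeeping around when repeated variables or constants appear in the single atom $\tup t$, and around the empty-answer default value $0$: one must check that the sets $H_i$, the distinct feature values $w_i$, and the value $\qmax\angs{\varphi}\bracs{q}(D\exo)$ are all computed consistently so that the telescoping identity holds exactly. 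None of this raises a complexity issue, since everything is a polynomial-time manipulation of $R^D$; the only subtlety is getting the boundary conventions right.
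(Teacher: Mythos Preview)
Your approach is correct and differs substantively from the paper's. The paper argues directly over permutations: it groups permutations by the value $v_{i_r}$ that $\qmax$ attains on the prefix $\sigma_f\cup D\exo$, counts those permutations via binomial sums over the endogenous facts with feature value equal to or strictly below $v_{i_r}$, and weights each count by the marginal gain $\varphi(\tup a_f)-v_{i_r}$. You instead write $\qmax$ as a telescoping linear combination of threshold games $g_i$ and invoke the linearity of the Shapley value; each $g_i$ becomes a ``first to hit $H_i$'' game with Shapley value $1/|H_i\cap D\endo|$ or $0$. Your route is shorter and yields a cleaner closed form, bypassing the paper's double summation over prefix sizes and binomial coefficients. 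Two minor points worth fixing when you write it out: the ordering should be increasing ($w_0<w_1<\cdots<w_p$) for the telescoping identity $\qmax=\sum_i(w_i-w_{i-1})g_i$ to hold as written; and the identification $g_i(E)=[(E\cup D\exo)\cap H_i\neq\emptyset]$ is valid only for thresholds $w_i>0$, since for $w_i\le 0$ the empty-answer default $\qmax=0\ge w_i$ makes $g_i$ true without a hit. Both are precisely the boundary bookkeeping you already flag, and the paper's own argument glosses over the analogous case where $q(D\exo)$ is empty.
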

{\color{black}
\begin{proof}
Since $q$ consists of a single atom, each fact adds at most one answer to the query result in each permutation. Let $\tup a_f\in q(D)$ be the answer corresponding to the fact $f$ (i.e., $q(\set{f})=\set{\tup a_f}$). Let $\sigma$ be a permutation. First, if there exists an exogenous fact $f'$ such that $\varphi(q(\set{f'}))>\varphi(\{\tup a_f\})$, then $f$ will never affect the maximum value, and $\shapley(D,\qmax\angs{\varphi}[q],f)=0$. Hence, from now on we assume that this is not the case. If $\tup a_f$ already appears in the query result before adding the fact $f$ (that is, $\tup a_f\in q(\sigma_f)$), then clearly $f$ does not affect the maximum value. If $\tup a_f$ is added to the query result only after adding $f$ in the permutation (that is, $q(\sigma_f\cup\set{f}\cup D\exo)\setminus q(\sigma_f\cup D\exo)=\set{\tup a_f}$), then $f$ only affects the maximum value if $\varphi(\{\tup a_f\})>\max_{\tup a\in q(\sigma_f\cup D\exo)}\varphi(\{\tup a\})$. In this case, it holds that 
  \[v(\sigma_f\cup\set{f}\cup D\exo)-v(\sigma_f\cup D\exo)=\varphi(\{\tup
  a_f\})-\max_{\tup a\in q(\sigma_f\cup D\exo)} \varphi(\{\tup a\})\]

Let $V=\set{v_1,\dots,v_m}$ be the set of values associated with the answers in $q(D)$ (that is, $V$ contains every value $v_j$ such that $\varphi(\{\tup a\})=v_j$ for some $\tup a\in q(D)$). Note that it may be the case that $\varphi(\{\tup a_1\})=\varphi(\{\tup a_2\})$ for $\tup a_1\neq \tup a_2$; hence, it holds that $|V|\le|q(D)|$. For each value $v_j$ we denote by $n_{v_j}^<$ the number of endogenous facts $f'$ in the database that correspond to an answer $\tup a$ (i.e., $q(\set{f'})=\set{\tup a}$) such that $\varphi(\{\tup a\})<v_j$, and by $n_{v_j}^=$ the number of endogenous facts in the database that correspond to an answer $\tup a$ such that $\varphi(\{\tup a\})=v_j$. We also denote by $n_{v_j}^\le$ the number $n_{v_j}^<+n_{v_j}^=$.

Let $\set{v_{i_1},\dots,v_{i_k}}$ be the set of values in $V$ such that 
  \[\big(\max_{\tup a\in q(D\exo)}\varphi(\{\tup a\})\big)\le
  v_{i_r}<\varphi(\{\tup a_f\})\]
Let us assume, without loss of generality, that $\max_{\tup a\in q(D\exo)}\varphi(\{\tup a\})=v_{i_1}$.
For each $r\in\set{2,\dots,k}$ and for each $t\in\set{1,\dots,n_{v_{i_r}}^\le}$, we compute the number of
permutations $\sigma$ in which $\sigma_f$ contains $t$ facts, and it holds that $\max_{\tup a\in q(\sigma_f\cup D\exo)}\varphi(\{\tup a\})=v_{i_r}$:
  \[P_r^t=t!\cdot (N-t-1)!\sum_{\ell=1}^{\min{(n_{v_{i_r}}^=,t)}} {n_{v_{i_r}}^=
  \choose \ell}{n_{v_{i_r}}^<\choose t-\ell}\]
(That is, we choose at least one fact $f'$ such that $\varphi(q(\set{f'}))=v_{i_r}$ and then we choose the rest of the facts among the facts $f''$ such that $\varphi(q(\set{f''}))<v_{i_r}$).
We count the number of such permutations separately for $v_{i_1}$, because in this case, we do not have to choose at least one endogenous fact $f'$ such that $\varphi(q(\set{f'}))=v_{i_1}$ (as this is already the maximum value on the exogenous facts). Hence, the number of permutations in this case is:
  \[P_1^t=t!\cdot (N-t-1)!{n_{v_{i_1}}^{\le}\choose t}\]

The contribution of each such permutation to the Shapley value of $f$ is:
  \[v(S\cup\set{f})-v(S)=\varphi(\{\tup a_f\})-v_{i_r}\]
Thus, the total contribution of the permutations $\sigma$ such that $\max_{\tup a\in q(\sigma_f)}\varphi(\{\tup a\})=v_{i_r}$ to the Shapley value of $f$ is $(\varphi(\{\tup a_f\})-v_{i_r})\sum_{t=1}^{n_{v_{i_r}}^{\le}}P_r^t$.

Finally, the Shapley value of $f$ is:
  \[\shapley(D,\qmax\angs{\varphi}[q],f)=\frac{1}{|D\endo|!}\sum_{r=1}^{k}\left\{
    (\varphi(\{\tup a_f\})-v_{i_r})\sum_{t=1}^{n_{v_{i_r}}^{\le}}P_r^t\right\}\]

A similar computation works for $\qmin\angs{\varphi}[q]$. The main difference is that now we are looking to minimize the value; hence, instead of considering $\varphi(\{\tup a_f\})-\max_{\tup a\in q(\sigma_f\cup D\exo)} \varphi(\{\tup a\})$, we now use $\varphi(\{\tup a_f\})-\min_{\tup a\in q(\sigma_f\cup D\exo)} \varphi(\{\tup a\})$, and we only consider permutations where $\min_{\tup a\in q(\sigma_f\cup D\exo)} \varphi(\{\tup a\})>\varphi(\{\tup a_f\})$.
\end{proof}
}

As an example, if $\alpha$ is the query
$\qmax\angs{[2]}[q]$, where $q$ is given by $q(x,y)\dl\rel{Citations}(x,y)$, then we can compute in polynomial
time $\shapley(D,\alpha,f)$, determining the responsibility of each
publication (in our running example) to the maximum number of
citations.

\revtwo{The arguments in the proof of Proposition~\ref{prop:max} heavily rely on the assumption that each fact adds at most one answer to the query result; hence, we can refer to \e{the} answer associated with a certain fact. Moreover, this answer is independent of the permutation. However, this assumption does not hold for general queries, where the addition of a fact can add multiple answers, and the added set of answers depends on the other facts in the permutation.
Hence, the proof does not easily generalize to maximum and minimum over hierarchical queries consisting of more than one atom. We also cannot use here the linearity of expectation that was used to obtain a dichotomy for summation. Therefore, a complete classification of the complexity for general aggregate queries remain an open problem.}

\subsection{Approximation}\label{sec:approx}
In computational complexity theory, a conventional feasibility notion
of arbitrarily tight approximations is via the \e{Fully
  Polynomial-Time Approximation Scheme}, FPRAS for short.  Formally,
an FPRAS for a numeric function $f$ is a randomized algorithm
$A(x,\epsilon,\delta)$, where $x$ is an input for $f$ and
$\epsilon,\delta\in(0,1)$, that returns an $\epsilon$-approximation of
$f(x)$ with probability $1-\delta$ (where the probability is over the
randomness of $A$) in time polynomial in $x$, $1/\epsilon$ and
$\log(1/\delta)$. To be more precise, we distinguish between an
\e{additive} (or \e{absolute}) FPRAS:
\[\Pr{f(x)-\epsilon\leq A(x,\epsilon,\delta)\leq  f(x)+\epsilon)}\geq 1-\delta\,,\]
and a \e{multiplicative} (or \e{relative}) FPRAS:
\[\Pr{\frac{f(x)}{1+\epsilon}\leq A(x,\epsilon,\delta)\leq (1+\epsilon)f(x)}\geq 1-\delta\,.\]

%\subparagraph*{Boolean CQs}
Using the Chernoff-Hoeffding bound, we easily get an additive FPRAS of
$\shapley(D,q,f)$ when $q$ is \e{any} Boolean query
computable in polynomial time, by simply taking \revtwo{the average value}
over $O(\log(1/\delta)/\epsilon^2)$ trials of the following
experiment:
\begin{enumerate}
    \item Select a random permutation $(f_1,\dots,f_n)$ over the set of all endogenous facts.
      % (hence, $D\endo=\set{f_1,\dots,f_n}$).
    \item Suppose that $f=f_i$, and let
      $D_{i-1}=D\exo\cup\set{f_1,\dots,f_{i-1}}$. \revtwo{Return $q(D_{i-1}\cup\set{f})-q(D_{i-1})$.}
      %If $q(D_{i-1})$ is false
      %and $q(D_{i-1}\cup\set{f})$ is true, then report ``success;''
      %otherwise, ``failure.''
\end{enumerate}
In general, an additive FPRAS of a function $f$ is not necessarily a
multiplicative one, since $f(x)$ can be very small. For example, we
can get an additive FPRAS of the satisfaction of a propositional
formula over Boolean i.i.d.~variables by, again, sampling the
averaging, but there is no multiplicative FPRAS for such formulas
unless $\mbox{BPP}=\mbox{NP}$. Nevertheless, the situation is
different for $\shapley(D,q,f)$ when $q$ is a CQ, since the Shapley
value is never too small (assuming data complexity).
\begin{prop}\label{prop:large-shap}
  Let $q$ be a fixed Boolean CQ. There is a polynomial $p$ such that for
  all databases $D$ and endogenous facts $f$ of $D$ it is the case
  that $\shapley(D,q,f)$ is either zero or at least $1/(p(|D|))$.
\end{prop}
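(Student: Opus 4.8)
The plan is to exploit the monotonicity of conjunctive queries together with the observation that, if a fact $f$ contributes at all, then it already contributes to a \emph{small} coalition, whose Shapley coefficient is only polynomially small in $|D|$. Write $m=|D\endo|$ and recall from~\eqref{eq:shapley-bcq} that
\[
\shapley(D,q,f)=\sum_{E\subseteq D\endo\setminus\set{f}}\frac{|E|!\,(m-|E|-1)!}{m!}\Big(q(D\exo\cup E\cup\set{f})-q(D\exo\cup E)\Big)\,,
\]
where we view $q$ as a $0/1$-valued query. Since $q$ is a CQ it is monotone, so each difference $q(D\exo\cup E\cup\set{f})-q(D\exo\cup E)$ lies in $\set{0,1}$ and every summand is nonnegative. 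Hence, if $\shapley(D,q,f)\neq 0$, there is some $E\subseteq D\endo\setminus\set{f}$ with $q(D\exo\cup E\cup\set{f})=1$ and $q(D\exo\cup E)=0$; that is, $f$ is a counterfactual cause for $q$ w.r.t.~$D\exo\cup E\cup\set{f}$.

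The next step is to shrink $E$ to a small witness. Let $k$ be the number of atoms of $q$, a constant since $q$ is fixed. Fix a homomorphism $h$ from $q$ to $D\exo\cup E\cup\set{f}$ and let $I$ be its image (the set of facts obtained by applying $h$ to the atoms of $q$), so that $|I|\le k$. The image $I$ must contain $f$, since otherwise $I\subseteq D\exo\cup E$ and $h$ would witness $D\exo\cup E\models q$, a contradiction. Put $E_0\eqdef I\cap E$; then $|E_0|\le k-1$, and $I\subseteq D\exo\cup E_0\cup\set{f}$, so $D\exo\cup E_0\cup\set{f}\models q$. Moreover $E_0\subseteq E$, so by monotonicity $D\exo\cup E_0\not\models q$ follows from $D\exo\cup E\not\models q$. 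Thus $f$ is a counterfactual cause w.r.t.~$D\exo\cup E_0\cup\set{f}$, with a coalition $E_0$ of size at most $k-1$.

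Finally, read off the bound from the single summand indexed by $E_0$. Its coefficient equals $\frac{|E_0|!\,(m-|E_0|-1)!}{m!}=\frac{1}{m\binom{m-1}{|E_0|}}$, and since $0\le|E_0|\le k-1$ we have $\binom{m-1}{|E_0|}\le m^{k-1}$, so this coefficient is at least $1/m^{k}$. The corresponding difference equals $1$ and all other summands are nonnegative, hence $\shapley(D,q,f)\ge 1/m^{k}\ge 1/|D|^{k}$, and taking $p(n)\eqdef n^{k}$ completes the argument. The only point needing care is the shrinking step: a naive ``some summand is positive'' argument only yields the exponentially small bound $1/\big(m\binom{m-1}{\lfloor(m-1)/2\rfloor}\big)$, so obtaining a genuine polynomial really depends on producing the constant-size coalition $E_0$.
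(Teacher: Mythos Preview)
Your proof is correct and follows essentially the same approach as the paper: find a coalition of size at most $k-1$ (where $k$ is the number of atoms of $q$) for which $f$ is a counterfactual cause, and then lower-bound the Shapley value by the single nonnegative summand indexed by that coalition. The paper is somewhat terser---it simply takes a \emph{minimal} $S\subseteq D\endo$ with $f$ counterfactual and asserts $|S|\le k$---whereas you spell out explicitly the homomorphism-image argument that justifies the shrinking; the resulting bounds ($1/m^k$ versus $(m-k)!/m!$) are equivalent up to the choice of polynomial.
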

\begin{proof}
  We denote $m=|D\endo|$. If there is no subset $S$ of $D\endo$ such
  that $f$ is a counterfactual cause for $q$ w.r.t.~$S$, then
  $\shapley(D,q,f)=0$. Otherwise, let $S$ be a minimal such set (i.e.,
  for every $S'\subset S$, we have that $(S'\cup D\exo)\not\models q$). Clearly, it holds that $S\le k$, where $k$ is the number of atoms of $q$. The probability to choose a permutation $\sigma$, such
  that $\sigma_f$ is exactly
  $S\setminus\set{f}$ is $\frac{(|S|-1)!(m-|S|)!}{m!}\ge\frac{(m-k)!}{m!}$ (recall that $\sigma_f$ is the set of facts that appear before $f$ in $\sigma$). Hence, we have
  that $\shapley(D,q,f)\ge \frac{1}{(m-k+1)\cdot ...\cdot m}$,
  and that concludes our proof.
\end{proof}
It follows that whenever $\shapley(D,q,f)=0$, the above additive approximation is also zero, and when $\shapley(D,q,f)>0$, the additive FPRAS also provides a multiplicative FPRAS. Hence, we have the following.
\begin{cor}\label{cor:fpras}
For every fixed Boolean CQ, the Shapley value has both an additive and a multiplicative FPRAS.
\end{cor}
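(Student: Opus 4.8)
The plan is to read off the corollary from the Monte Carlo experiment described just above, together with Proposition~\ref{prop:large-shap}. The additive FPRAS is essentially immediate: letting $A$ be the fraction of ``success'' outcomes over $N=\Theta(\epsilon^{-2}\log(1/\delta))$ independent trials of that experiment, each trial is an independent Bernoulli variable whose mean is exactly $\shapley(D,q,f)$ (this is precisely the probabilistic reading of~\eqref{eq:shapley-generic}), and the Chernoff--Hoeffding bound gives $\Pr{\,|A-\shapley(D,q,f)|\le\epsilon\,}\ge 1-\delta$; each trial runs in polynomial time since it only samples a permutation and evaluates the fixed CQ $q$ twice. So the real content is the multiplicative bound.

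For the multiplicative FPRAS I would first note that, because a CQ is monotone, every marginal contribution $v(\sigma_f\cup\set{f})-v(\sigma_f)$ lies in $\set{0,1}$, so $\shapley(D,q,f)$ is the expectation of a nonnegative $0/1$ random variable and equals $0$ precisely when no permutation makes $f$ a counterfactual cause --- i.e.\ precisely when the experiment never reports ``success.'' Hence the estimator $A$ returns the value $0$ \emph{exactly} whenever $\shapley(D,q,f)=0$, which already satisfies the multiplicative requirement in that degenerate case. When $\shapley(D,q,f)>0$, Proposition~\ref{prop:large-shap} supplies a fixed polynomial $p$ with $\shapley(D,q,f)\ge 1/p(|D|)$; since a $(1+\epsilon)$-multiplicative approximation follows from an additive error of at most $\frac{\epsilon}{1+\epsilon}\shapley(D,q,f)$, it suffices to run the same Monte Carlo scheme with precision $\epsilon_0\eqdef\frac{\epsilon}{(1+\epsilon)\,p(|D|)}$, that is, with $N=\Theta\!\big(p(|D|)^2(1+\epsilon)^2\epsilon^{-2}\log(1/\delta)\big)$ trials. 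As $q$ is fixed, $p$ is a fixed polynomial, so $N$ is polynomial in $|D|$, $1/\epsilon$ and $\log(1/\delta)$; thus the same estimator is an additive FPRAS when run with precision $\epsilon$ and a multiplicative FPRAS when run with precision $\epsilon_0$.

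I do not anticipate a genuine obstacle: the quantitative heart of the argument is Proposition~\ref{prop:large-shap}, and the two remaining points --- arguing that the sampled estimator is identically $0$ in the zero case (which hinges on monotonicity of CQs) and bookkeeping $\epsilon_0$ so the sample size stays polynomial --- are routine. The only mild subtlety worth stating explicitly is that ``multiplicative FPRAS'' forces the output to be exactly $0$ when the true value is $0$, which is why the sampling estimator, rather than some arbitrary additive approximation, must be used. The very same reasoning in fact extends to unions of CQs and, more generally, to any fixed monotone polynomial-time-evaluable query for which an inverse-polynomial lower bound on the nonzero Shapley values is available.
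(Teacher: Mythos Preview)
Your proposal is correct and follows essentially the same approach as the paper: the Monte Carlo estimator yields an additive FPRAS via Chernoff--Hoeffding, and Proposition~\ref{prop:large-shap} upgrades it to a multiplicative one (with the zero case handled because the sampling estimator is identically zero when the Shapley value is). Your write-up is in fact more careful than the paper's one-line justification, explicitly noting that monotonicity forces all marginals to vanish when the Shapley value is zero and working out the precision $\epsilon_0$ needed to keep the sample size polynomial.
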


\paragraph*{Approximation for Aggregate Queries}
\revtwo{The Chernoff-Hoeffding bound applies to the additive approximation of 
any function with a ``bounded domain,'' that is, where the gap between the maximal and minimal value is polynomial in the size of the input. Hence, we immediately conclude that there is an additive FPRAS for $\qcnt$. We also get an additive FPRAS for sum, average, median (or any quantile), min and max in the case where the values are from a bounded domain.} 

What about multiplicative approximation for aggregate queries? 
Interestingly, Corollary~\ref{cor:fpras} generalizes to a
multiplicative FPRAS for summation (including counting) over CQs.  By
combining Corollary~\ref{cor:fpras} with
Equation~\eqref{eq:additive-sum}, we immediately obtain a
multiplicative FPRAS for $\shapley(D,\qsum\angs{\varphi}[q],f)$, in the case where all the
features $\varphi(\tup a)$ in the summation have the same sign (i.e.,
they are either all negative or all non-negative). In particular,
there is a multiplicative FPRAS for $\shapley(D,\qcnt[q],f)$.

\begin{cor}\label{cor:fpras-sum}
  For every fixed CQ $q$, $\shapley(D,\qsum\angs{\varphi}[q],f)$ has a
  multiplicative FPRAS if either $\varphi(\tup a)\geq 0$ for all $\tup
  a\in q(D)$ or $\varphi(\tup a)\leq 0$ for all $\tup a\in q(D)$.
  %\benny{Replaced strict inequalities with weak inequalities.}
\end{cor}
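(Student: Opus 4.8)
The plan is to flesh out the one-line hint already given in the text: combine the linearity decomposition~\eqref{eq:additive-sum} with the per-query multiplicative FPRAS of Corollary~\ref{cor:fpras}. Recall that~\eqref{eq:additive-sum} gives $\shapley(D,\qsum\angs{\varphi}[q],f)=\sum_{\tup a\in q(D)}\varphi(\tup a)\cdot\shapley(D,q_{[\tup x\to\tup a]},f)$, where each $q_{[\tup x\to\tup a]}$ is a Boolean CQ. Since $q$ is fixed, $q(D)$ can be computed in polynomial time, $N\eqdef|q(D)|$ is polynomially bounded in $|D|$, and $\varphi(\tup a)$ is computable in polynomial time by assumption. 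For $\tup a\in q(D)$ abbreviate $s_{\tup a}\eqdef\shapley(D,q_{[\tup x\to\tup a]},f)$; each $s_{\tup a}$ is non-negative because a CQ is monotone, so every marginal contribution in the sum defining the Shapley value is non-negative.

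First I would treat the case $\varphi(\tup a)\ge 0$ for all $\tup a\in q(D)$. Discard the answers with $\varphi(\tup a)=0$; for each remaining $\tup a$, run the multiplicative FPRAS of Corollary~\ref{cor:fpras} on the Boolean CQ $q_{[\tup x\to\tup a]}$ with error parameter $\epsilon$ and confidence parameter $\delta/N$, obtaining an estimate $\hat s_{\tup a}$ with $\Pr{s_{\tup a}/(1+\epsilon)\le\hat s_{\tup a}\le (1+\epsilon)s_{\tup a}}\ge 1-\delta/N$; then output $\sum_{\tup a}\varphi(\tup a)\,\hat s_{\tup a}$. By a union bound, with probability at least $1-\delta$ all $N$ estimates lie simultaneously in their multiplicative windows, and in that event — this is the only place the same-sign hypothesis is used — the weighted sum satisfies
\[
\frac{1}{1+\epsilon}\sum_{\tup a}\varphi(\tup a)\, s_{\tup a}\ \le\ \sum_{\tup a}\varphi(\tup a)\,\hat s_{\tup a}\ \le\ (1+\epsilon)\sum_{\tup a}\varphi(\tup a)\, s_{\tup a},
\]
so the output is a multiplicative $\epsilon$-approximation of $\shapley(D,\qsum\angs{\varphi}[q],f)$. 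The running time is the sum over $\tup a$ of the running times of the per-query FPRAS, each polynomial in $|D|$, $1/\epsilon$ and $\log(N/\delta)=\log(1/\delta)+O(\log|D|)$, hence the total is polynomial in $|D|$, $1/\epsilon$ and $\log(1/\delta)$, as required. For the case $\varphi(\tup a)\le 0$ for all $\tup a\in q(D)$, apply this to $\varphi'\eqdef-\varphi\ge 0$ and negate the output: since $\shapley(D,\qsum\angs{\varphi}[q],f)=-\shapley(D,\qsum\angs{\varphi'}[q],f)$ by linearity of the Shapley value, and a multiplicative approximation of a non-negative quantity negates to a multiplicative approximation of its negation, this yields a multiplicative FPRAS. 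Taking $\varphi\equiv\mathbf 1$ gives the special case $\shapley(D,\qcnt[q],f)$.

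The argument is essentially bookkeeping, so there is no genuine obstacle; the one step that actually depends on the hypothesis — and would fail for mixed-sign $\varphi$ — is the displayed chain of inequalities. With cancellation, $\sum_{\tup a}\varphi(\tup a)\,s_{\tup a}$ can be exponentially smaller than $\sum_{\tup a}|\varphi(\tup a)|\,s_{\tup a}$, so per-term multiplicative errors need not aggregate into a multiplicative error of the sum; this is also why we only claim a multiplicative FPRAS under the same-sign restriction, whereas~\eqref{eq:additive-sum} on its own would not even give an additive FPRAS when $\varphi$ can take large values of both signs.
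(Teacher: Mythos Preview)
Your proposal is correct and follows exactly the approach the paper indicates: decompose via Equation~\eqref{eq:additive-sum} and apply the per-term multiplicative FPRAS of Corollary~\ref{cor:fpras}, using the same-sign hypothesis to ensure that per-term multiplicative errors combine into a multiplicative error for the sum. You have simply fleshed out the details (union bound with confidence $\delta/N$, the running-time accounting, and the reduction of the all-nonpositive case to the all-nonnegative case) that the paper leaves implicit in the phrase ``we immediately obtain.''
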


Observe that the above FPRAS results allow the CQ $q$ to have self-joins. This is in contrast to the complexity results we established in
the earlier parts of this section, regarding exact evaluation. In
fact, an easy observation is that Proposition~\ref{prop:large-shap}
continues to hold when considering \e{unions of conjunctive queries}
(UCQs). Therefore, Corollaries~\ref{cor:fpras} and~\ref{cor:fpras-sum}
remain correct in the case where $q$ is a UCQ.

The existence of additive and multiplicative approximations for other aggregate queries remains an open problem.

\def\causale#1{\mathrm{CE}(#1)}
\def\Li{\mathrm{\Phi}}

\section{Related Measures}\label{sec:measures}
\revtwo{
In this section, we discuss our work in comparison to some alternative measures for the responsibility of tuples to database queries. 
}

\subsection*{Causal responsibility}
Causality and causal responsibility~\cite{Pearl:2009:CMR:1642718,Halpern:2016:AC:3003155} have been applied in data management, defining a fact as a {\em cause}  for a query result as follows: For an instance $D=D\exo\cup D\endo$, a fact ${f \in D\endo}$ is an {\em actual cause} for a Boolean CQ $q$,
if there exists ${\Gamma \subseteq D\endo}$, called a {\em contingency set} for $f$,  such that $f$ is a  counterfactual cause for $q$ in ${D\smallsetminus \Gamma}$~\cite{DBLP:journals/pvldb/MeliouGMS11}. 
 The { responsibility} of an actual cause $f$ for $q$ is defined by ${\rho(f) \ := \ \frac{1}{|\Gamma| + 1}}$, where
${|\Gamma|}$ is the
size of a smallest contingency set for $f$. If $f$ is not an actual cause, then $\rho(f)$ is zero~\cite{DBLP:journals/pvldb/MeliouGMS11}.
 Intuitively, facts with higher responsibility provide stronger explanations.\footnote{These notions can be applied to any monotonic query (i.e., whose answer set can only grow when the database grows, e.g., UCQs and Datalog queries)~\cite{DBLP:journals/mst/BertossiS17,DBLP:journals/ijar/BertossiS17}.}

\begin{exa}\label{ex:cause} Consider the database of our running example, and the query $q_1$ from Example~\ref{example:cq}. The fact $\auf_1$ is an actual cause with minimal contingency set $\Gamma =\{\auf_2,\auf_3,\auf_4\}$. So, its responsibility is $\frac{1}{4}$. Similarly, $\auf_2$, $\auf_3$ and $\auf_4$ are actual causes with responsibility $\frac{1}{4}$.
\end{exa}

\begin{exa} \label{ex:datalogq2} Consider the database $G$ and the query $p_{ab}$ from Example~\ref{ex:datalogq}. All facts in $G$ are actual causes since every fact appears in a path from $a$ to $b$.
It is easy to verify that all the facts in $D$ have the same causal responsibility, {$\frac{1}{3}$}, which may be considered as counter-intuitive given that $e_1$ provides a direct path from $a$ to $b$.
\end{exa}

As shown in Example~\ref{ex:datalogq}, the Shapley value gives a more
intuitive degree of contribution of facts to the query result than
causal responsibility.  Actually, Example~\ref{ex:datalogq} was used
in~\cite{DBLP:conf/tapp/SalimiBSB16} as a motivation to introduce an
alternative to the notion of causal responsibility, that of \e{causal
  effect}.
  
\subsection*{Causal effect}
To quantify the contribution of a fact to the query result, Salimi et
al.~\cite{DBLP:conf/tapp/SalimiBSB16} view the database as a
tuple-independent probabilistic database where the probability of each
endogenous fact is $0.5$ and the probability of each exogenous fact is
$1$ (i.e., it is certain).  The \e{causal effect} of a fact $f \in
D\endo$ on a numerical query $\alpha$ (in particular, a Boolean query) is a difference of expected values
\cite{DBLP:conf/tapp/SalimiBSB16}:
$$\causale{D,\alpha,f} \eqdef\,\, \mathbb{E}(\alpha(D)\mid f) - 
\mathbb{E}(\alpha(D)\mid \neg f)\,.
$$
where $f$ is the event that the fact $f$ is present in the database,
and $\neg f$ is the event that the fact $f$ is absent from the
database.

\begin{exa}\label{ex:causalEff2} Consider again the database of
  our running example, and the query $q_1$ from
  Example~\ref{example:cq}. We compute $\causale{D,q_1,\auf_1}$. It
  holds that: $\mathbb{E}(q_1\mid \neg\auf_1) = 0 \cdot P(q_1 =0\mid
  \neg\auf_1) + 1 \cdot P(q_1 =1\mid \neg\auf_1) =
  1-P(\neg\auf_2\wedge\neg\auf_3\wedge\neg\auf_4) = \frac{7}{8}$.
  Similarly, we have that $\mathbb{E}(q_1\mid \auf_1) = P(q_1 =1\mid
  \auf_1) = 1$.  Then, $\causale{D,q_1,\auf_1} = 1-\frac78
  =\frac{1}{8}$. Using similar computations we obtain that
  $\causale{D,q_1,\auf_2}=\causale{D,q_1,\auf_3}=\causale{D,q_1,\auf_4}=\frac18$.

  For $G$ and $p_{ab}$ of Example~\ref{ex:datalogq}, we have that
  $\causale{G,p_{ab},e_1} = 0.65625$, $\causale{G,p_{ab},e_2} =
  \causale{G,p_{ab},e_3} = 0.21875$, $\causale{G,p_{ab},e_4} =
  \causale{G,p_{ab},e_5} = \causale{G,p_{ab},e_6} = 0.09375$.  \qed
\end{exa}

Although the values in the two examples above are different from the Shapley values computed in Example~\ref{ex:bcq-ptime} and Example~\ref{ex:datalogq}, respectively, if we order the facts according to their contribution to the query result, we will obtain the same order in both cases. Note that unlike the Shapley value, for causal effect the sum of the values over all facts is not equal to the query result on the whole database. In the next example we consider aggregate queries.

\begin{exa} Consider the query $\alpha_1$ of
  Example~\ref{example:agg}. If $\auf_1$ is in the database, then the
  result can be either $20$, $28$, or $40$. If $\auf_1$ is absent,
  then the query result can be either $0$, $8$, or $20$. By computing
  the expected value in both cases, we obtain that
  $\causale{D,\alpha_1,\auf_1}=20$. Similarly, it holds that
  $\causale{D,\alpha_1,\auf_2}=\causale{D,\alpha_1,\auf_4}=1$,
  and $\causale{D,\alpha_1,\auf_3}=14$.\qed
\end{exa}

%A natural question is whether there is connection between the Shapley value and %causal effect, at least for BCQs. We assume all facts are endogenous, i.e. $D = %D\endo$.

Interestingly, the causal effect coincides with a well known
wealth-distribution function in cooperative games, namely the
\e{Banzhaf Power Index}
(BPI)~\cite{Leech1990,10.2307/3689345,Kirsch2010}. This measure is
defined similarly to the definition of the Shapley value in
Equation~\eqref{eq:shapley-AB}, except that we replace the ratio
$\frac{|B|!\cdot (|A|-|B|-1)!}{|A|!}$ with $\frac{1}{2^{|A|-1}}$.

\def\propbanzhaf{ Let $\alpha$ be a numerical query, $D$ be a database, and
  $f\in D\endo$. Then,
$$\causale{D,\alpha,f} = \frac{1}{2^{|D\endo|-1}} \cdot \sum_{E \subseteq (D\endo\setminus \{f\})} \left[\alpha(D\exo\cup E \cup \{f\}) - \alpha(D\exo\cup E)\right]$$
Hence, the causal effect coincides with the BPI.  }
\begin{prop} \label{prop:banzhaf} 
\propbanzhaf
\end{prop}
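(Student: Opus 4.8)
The plan is to simply unfold the definitions and exploit the independence built into the probabilistic semantics. Write $m=|D\endo|$ and fix $f\in D\endo$. In the tuple-independent database used to define the causal effect, each endogenous fact is present independently with probability $1/2$ and each exogenous fact is present with probability $1$; thus the random instance is $D\exo\cup F$ where $F$ is a uniformly random subset of $D\endo$, i.e.\ each member of $D\endo$ is an independent fair coin. The key (and only) observation is that conditioning on the event $f$ (respectively $\neg f$) leaves the remaining $m-1$ endogenous facts independent and fair, so under the conditioning $F=E\cup\{f\}$ (respectively $F=E$) with $E$ uniform over the $2^{m-1}$ subsets of $D\endo\setminus\{f\}$.

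Carrying this out, first I would compute
$$\mathbb{E}(\alpha(D)\mid f)=\frac{1}{2^{m-1}}\sum_{E\subseteq (D\endo\setminus\{f\})}\alpha(D\exo\cup E\cup\{f\})\,,$$
and symmetrically
$$\mathbb{E}(\alpha(D)\mid \neg f)=\frac{1}{2^{m-1}}\sum_{E\subseteq (D\endo\setminus\{f\})}\alpha(D\exo\cup E)\,.$$
Subtracting these two expressions, and noting that both sums range over the same index set $\{E : E\subseteq D\endo\setminus\{f\}\}$, gives
$$\causale{D,\alpha,f}=\frac{1}{2^{m-1}}\sum_{E\subseteq (D\endo\setminus\{f\})}\bigl[\alpha(D\exo\cup E\cup\{f\})-\alpha(D\exo\cup E)\bigr]\,,$$
which is exactly the asserted formula.

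Finally I would spell out why this coincides with the BPI of $f$: by Equation~\eqref{eq:shapley-AB} the BPI in a cooperative game $v$ is obtained from the Shapley formula by replacing the coefficient $\tfrac{|B|!\,(|A|-|B|-1)!}{|A|!}$ with $\tfrac{1}{2^{|A|-1}}$. Taking $A=D\endo$ and $v(E)=\alpha(E\cup D\exo)-\alpha(D\exo)$ as in Definition~\ref{def:shapley}, the marginal $v(E\cup\{f\})-v(E)$ equals $\alpha(D\exo\cup E\cup\{f\})-\alpha(D\exo\cup E)$ because the additive constant $\alpha(D\exo)$ telescopes away; hence the displayed right-hand side is precisely the BPI of $f$ in this game. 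There is essentially no obstacle here — the proof is a short conditioning argument; the only point requiring (trivial) care is that conditioning on one coordinate does not disturb the independence of the others, and that the normalization is $2^{m-1}$, not $2^{m}$, since one fact's status is fixed.
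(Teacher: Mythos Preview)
Your proof is correct and follows essentially the same approach as the paper: both compute $\mathbb{E}(\alpha(D)\mid f)$ and $\mathbb{E}(\alpha(D)\mid \neg f)$ by observing that the conditioned distribution is uniform over the $2^{m-1}$ subsets of $D\endo\setminus\{f\}$, then subtract. Your extra paragraph spelling out why the resulting expression is the BPI (via the game $v$ of Definition~\ref{def:shapley} and the telescoping of the $\alpha(D\exo)$ constant) is a nice addition that the paper leaves implicit.
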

{\color{black}
\begin{proof}
The following holds.
\begin{align*}
    \causale{D,\alpha,f}&=E(\alpha(D)\mid f)-E(\alpha(D)\mid\neg f)\\
    %&=\big[1\cdot P(q=1\mid f)+0\cdot P(q=0\mid f)\big]-\big[1\cdot P(q=1\mid \neg f)+0\cdot P(q=0\mid \neg f)\big]\\
    %&=P(q=1\mid f)-P(q=1\mid \neg f)\\
    &\stackrel{(*)}{=}\sum_{E \subseteq (D\endo\setminus \{f\})} \frac{1}{2^{|D\endo|-1}} \cdot\alpha(D\exo\cup E \cup \{f\}) - \sum_{E \subseteq (D\endo\smallsetminus \{f\})} \frac{1}{2^{|D\endo|-1}} \cdot\alpha(D\exo\cup E)\\
    &=\frac{1}{2^{|D\endo|-1}} \cdot \sum_{E \subseteq (D\endo\setminus \{f\})} \alpha(D\exo\cup E \cup \{f\}) - \alpha(D\exo\cup E)
\end{align*}
The transition $(*)$ is correct since every endogenous fact in the
probabilistic database has probability $0.5$ and they are all independent; hence, all the possible
worlds have the same probability $\frac{1}{2^{|D\endo|-1}}$. (Recall
that we condition on $f$ being either present or absent from the
database, and all exogenous facts are certain; thus, the probability
of each possible world depends only on the facts in
$D\endo\setminus\set{f}$.) Then, for each $E\subseteq
D\endo\setminus\set{f}$, it holds that $\alpha(D\exo\cup E \cup
\{f\})$ is the value of the query on the possible world that contains
all the exogenous facts, the fact $f$, and all the endogenous facts in
$E$, but does not contain the endogenous facts in $D\endo\setminus
(E\cup\set{f})$. Hence, $\sum_{E \subseteq (D\endo\setminus \{f\})}
\frac{1}{2^{|D\endo|-1}} \cdot\alpha(D\exo\cup E \cup \{f\})$ is by
definition the expected value $E(\alpha(D)\mid f)$. Similarly,
$\sum_{E \subseteq (D\endo\smallsetminus \{f\})}
\frac{1}{2^{|D\endo|-1}} \cdot\alpha(D\exo\cup E)$ is the expected
value $E(\alpha(D)\mid\neg f)$, and that concludes our proof.
\end{proof}
}

\revtwo{In the next section we will discuss in more detail the complexity of the causal effect.}

\revtwo{
\subsection*{SHAP score}
One of the instantiations of the Shapley value is the \e{SHAP score} that has been used in the context of machine learning
for explaining the prediction of a model~\cite{NIPS2017_8a20a862}. This score could be applied to the attribution of responsibility to tuples in query answering, and it would give a measure that is similar in spirit, yet technically different, from our application of the Shapley value. Both apply the Shapley value in a cooperative game where the players are the endogenous facts. The difference is in the definition of the cooperative game. In our case, the players of a coalition, as a subinstance of the database, occur in the database, while the others are excluded from the computation of the wealth function, which is the answer of the query on the resulting subinstance. In the case of the SHAP score as applied to our setting, we could view the database as a tuple-independent probabilistic database where the facts of the coalition are deterministic (probability one) and the others are probabilistic (say with the probability $\frac12$), and the wealth function is the \e{expectation} of the query answer on the resulting probabilistic database. 

The complexity of the SHAP score (in a more abstract setting than  query answering) has been studied by Van den Broeck et
al.~\cite{van2021tractability} and by Arenas et
al.~\cite{arenas2021tractability,arenas2021complexity}.\footnote{The referenced work has been published later than the conference version of this article~\cite{DBLP:conf/icdt/LivshitsBKS20}.}
Immediate algorithmic approaches that one can derive to compute the SHAP score for Boolean CQs and UCQs are \e{(a)} to 
reduce the problem to probabilistic query answering~\cite{van2021tractability} \e{and (b)} to compile the \e{provenance} of the query into a \e{deterministic and decomposable
circuit} and apply 
 to the circuit
the Shapley computation of  Arenas et
al.~\cite{arenas2021tractability,arenas2021complexity}. Yet, albeit the similarity between our Shapley value and the SHAP score, they are different enough that we do not see an immediate way of directly translating results (e.g., via simple reductions) from one to the other. It is sensible, though, that we could apply similar techniques, namely reduction to/from probabilistic query answering and knowledge compilation, to derive our results and perhaps more general results. We leave this investigation to future research.
}

{\color{black}
\subsection{The Complexity of the Causal-Effect Measure (and Banzhaf Power Index)}\label{sec:banzhaf}
We now show that the complexity results obtained in this work for the exact computation of the Shapley value also apply to the causal effect (and BPI). These results are, in fact, easier to obtain, via a connection to probabilistic databases~\cite{DBLP:series/synthesis/2011Suciu}. The extension of the approximation results will be discussed later.

\begin{prop}
Theorem~\ref{thm:bcq-dichotomy}, Theorem~\ref{thm:agg-hard}, Corollary~\ref{cor:sum-dichotomy}, and Proposition~\ref{prop:max} hold for the Banzhaf Power Index.
\end{prop}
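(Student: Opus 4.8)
The plan is to revisit each of the four Shapley-value statements and show that its proof adapts to the Banzhaf Power Index (BPI), in most cases becoming \emph{simpler}. The conceptual reason, which I would exploit throughout, is Proposition~\ref{prop:banzhaf}: the BPI equals the causal effect, i.e.\ a difference of probabilities in the tuple-independent probabilistic database in which every endogenous fact has probability $1/2$. Writing $m=|D\endo|$ and, for a Boolean CQ $q$, $\mathcal S(D,q)=\set{E\subseteq D\endo\mid (D\exo\cup E)\models q}$, Proposition~\ref{prop:banzhaf} together with monotonicity yields the exact analogue of Equation~\eqref{eq:shapley-bcq},
\[
\mathrm{BPI}(D,q,f)=\frac{1}{2^{m-1}}\Big(|\mathcal S(D',q)|-|\mathcal S(D\setminus\set{f},q)|\Big),
\]
where $D'$ is $D$ with $f$ turned exogenous. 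This already places the BPI in $\fpsharpp$ (both counts are in $\sharpp$), and, since $|\mathcal S(D,q)|=\sum_{k=0}^{m}|\Sat(D,q,k)|$, Theorem~\ref{thm:satk} (the algorithm $\algname{\SAT}$ and Lemma~\ref{lemma:sat-correct}) computes the right-hand side in polynomial time whenever $q$ is hierarchical and self-join-free; this gives the tractable side of the Boolean dichotomy for the BPI. (Equivalently, $\mathbb E(q(D)\mid f)$ and $\mathbb E(q(D)\mid\neg f)$ are probabilities of a hierarchical CQ over a tuple-independent database, computable in polynomial time.)

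For the hard side of the Boolean dichotomy I would keep the two-step structure of Theorem~\ref{thm:bcq-dichotomy}, but the first step collapses to a \emph{single} oracle call rather than a full-rank linear system. Given a bipartite graph $g=(V,E)$ for which we want $|\is(g)|$, build a database with an endogenous fact $R(v)$ per left vertex, an endogenous fact $T(u)$ per right vertex, an exogenous fact $S(v,u)$ per edge, and additionally the endogenous fact $f=R(\val{0})$ together with an exogenous fact $S(\val{0},u)$ for every right vertex $u$. Conditioning on $f$, $\cqrst$ holds iff some $T(u)$ is present, so $\mathbb E(\cqrst\mid f)=1-1/2^{n_T}$ with $n_T=|T|$; conditioning on $\neg f$, $\cqrst$ holds iff the present vertices are not an independent set, so $\mathbb E(\cqrst\mid\neg f)=1-|\is(g)|/2^{|V|}$. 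Hence $\mathrm{BPI}(D,\cqrst,f)=|\is(g)|/2^{|V|}-1/2^{n_T}$, and one oracle call recovers $|\is(g)|$; since computing $|\is(g)|$ (equivalently $\bisat$) is $\fpsharpp$-hard, so is computing $\mathrm{BPI}(D,\cqrst,f)$. For the second step I would reuse the reduction of Lemma~\ref{lemma:reduction-from-cqrst} essentially verbatim: it builds $D'$ with a bijection between $D\endo$ and $D'\endo$ (so $|D'\endo|=|D\endo|$ and the weight $1/2^{m-1}$ is preserved) under which $E\cup D\exo\models\cqrst$ iff $E'\cup D'\exo\models q$ and $f$ is counterfactual w.r.t.\ $E\cup D\exo$ iff $f'$ is counterfactual w.r.t.\ $E'\cup D'\exo$; reading ``subset'' in place of ``permutation'' throughout that proof gives $\mathrm{BPI}(D,\cqrst,f)=\mathrm{BPI}(D',q,f')$.

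For the aggregate hardness (Theorem~\ref{thm:agg-hard}) I would reuse the construction of $D'$ in that proof unchanged. That proof already observes that in each trial either $f$ adds the designated answer $\tup a_1$---contributing $\alpha(\widetilde{D})-\alpha(\emptyset)$---or $f$ contributes $0$, and that the event ``$f$ adds $\tup a_1$'' coincides with ``$f$ flips $q_1'$ from false to true'', both depending only on the set of facts preceding $f$. Since the BPI of a fact is exactly the probability of the latter event when the set of predecessors is a uniformly random subset of $D\endo\setminus\set{f}$, and since the BPI is linear in $\alpha$ (like the Shapley value), we get $\mathrm{BPI}(D',\alpha,f)=(\alpha(\widetilde{D})-\alpha(\emptyset))\cdot\mathrm{BPI}(D,q_1',f)$; together with the Boolean hardness above this yields $\fpsharpp$-hardness of $\mathrm{BPI}(D,\alpha,f)$ for every non-constant $\alpha=\gamma\bracs{q}$ with $q$ non-hierarchical and self-join-free, while membership in $\fpsharpp$ follows exactly as in Theorem~\ref{thm:agg-hard}. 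For Corollary~\ref{cor:sum-dichotomy}, linearity of the BPI plus $\qsum\angs{\varphi}\bracs{q}=\sum_{\tup a\in q(D)}\varphi(\tup a)\cdot q_{[\tup x\rightarrow\tup a]}$ gives the analogue of Equation~\eqref{eq:additive-sum}, and the hierarchical tractability above finishes the argument (including $\qcnt\bracs{q}$). Finally, for Proposition~\ref{prop:max} I would re-run its computation, replacing the count of predecessor-sets $\sigma_f$ of a fixed size $t$ that achieve a given running maximum---weighted by $t!(N-t-1)!/N!$---by the count of \emph{all} subsets $E\subseteq D\endo\setminus\set{f}$ achieving that maximum, weighted uniformly by $1/2^{m-1}$; for a single-atom query there are at most $|q(D)|$ distinct threshold values and, for each, that count is a simple product of powers of two, so the procedure stays polynomial (symmetrically for $\qmin$).

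I expect the only step requiring genuine care to be bookkeeping rather than mathematics: verifying that everywhere the Shapley proofs spoke of a \emph{permutation} they in fact only used the \emph{set of predecessors} of $f$, so that switching to the uniform-over-subsets weighting of the BPI is sound. This needs to be checked line by line in the reduction of Lemma~\ref{lemma:reduction-from-cqrst}, in the reduction of Theorem~\ref{thm:agg-hard}, and in the recomputation for Proposition~\ref{prop:max}; the remaining ingredients---the single-call $\cqrst$ reduction and the $\algname{\SAT}$-based tractability---are routine.
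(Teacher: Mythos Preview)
Your proposal is correct, and the overall structure matches the paper's: establish the Boolean dichotomy for the BPI, then observe that the proofs of Theorem~\ref{thm:agg-hard}, Corollary~\ref{cor:sum-dichotomy}, and Proposition~\ref{prop:max} go through verbatim once ``permutation'' is replaced by ``subset'' (linearity holding for BPI just as for Shapley). Where you differ from the paper is in how you handle both sides of the Boolean dichotomy. For tractability, the paper does not reuse $\algname{\SAT}$; it simply invokes Dalvi--Suciu's polynomial-time algorithm for hierarchical CQs over tuple-independent probabilistic databases, computing $\mathbb E(q\mid f)$ and $\mathbb E(q\mid\neg f)$ directly. For hardness of $\cqrst$, the paper also goes through Dalvi--Suciu: it reduces \emph{from} their $\fpsharpp$-hardness of evaluating $\cqrst$ over a TID with probabilities in $\{1,1/2\}$, by adjoining three fresh facts $R(\val a),T(\val b)$ (probability~$1$) and $f=S(\val a,\val b)$ (probability~$1/2$), so that $\mathbb E(\cqrst\mid f)=1$ and hence the original TID probability is $1-\causale{D',\cqrst,f}$. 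Your route instead stays inside the paper's own toolkit: $\algname{\SAT}$ for tractability, and a direct single-call reduction from $|\is(g)|$ for hardness (your computation $\mathrm{BPI}=|\is(g)|/2^{|V|}-1/2^{n_T}$ is correct). Both approaches are valid; the paper's is shorter because it outsources to known TID results, while yours is more self-contained and makes explicit why the BPI case is \emph{easier} than Shapley (one oracle call versus a full-rank linear system).
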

\begin{proof}
We start with Theorem~\ref{thm:bcq-dichotomy}---the dichotomy for Boolean self-join-free CQs. The positive side of the dichotomy (polynomial-time computation for hierarchical queries) is obtained via a reduction to query evaluation over probabilistic databases. Recall that Dalvi and Suciu~\cite{10.1145/2395116.2395119} showed that this problem is solvable in polynomial time for hierarchical Boolean CQs without self-joins. For a Boolean query, we have that $\causale{D,\alpha,f}$ is the probability that $q$ is true when $f$ is present in the database minus the probability that $q$ is true when $f$ is absent from the database. The first probability can be computed via a reduction to query evaluation over probabilistic databases by defining the probability of $f$ to be $1$ (and leaving the probabilities of the rest of the facts intact), and the second probability can be computed by a similar reduction after removing $f$ from the database altogether.

As for the negative side of the dichotomy (hardness of computation for non-hierarchical queries), we again use the result of Dalvi and Suciu~\cite{10.1145/2395116.2395119}. This time, we construct a reduction from their problem to our problem for the query $\cqrst$. They showed that query evaluation over probabilistic databases is $\fpsharpp$-complete for $\cqrst$, even when the probabilities of all the facts in the database are either $1$ or $0.5$. Hence, given an input database $D$ to their problem, we construct an input database $D'$ to our problem by adding all the facts of $D$, as well as two facts $R(\val{a})$ and $T(\val{b})$ with probability $1$ and a fact $S(\val{a},\val{b})$ with probability $0.5$, where $\val{a}$ and $\val{b}$ are two fresh constants that do not appear in $D$. Clearly, if $S(\val{a},\val{b})$ is present in the database, then the probability of $q$ being true is $1$. On the other hand, if $S(\val{a},\val{b})$ is absent, then the probability of $q$ being true is exactly the probability of $q$ being true on the original $D$. Therefore, to obtain this probability, we simply need to compute the value $1-\causale{D',q,S(\val{a},\val{b})}$. To prove hardness for any other non-hierarchical query we again use the result of Lemma~\ref{lemma:reduction-from-cqrst} that clearly also applies to the BPI. An important observation here is that in the reduction of Lemma~\ref{lemma:reduction-from-cqrst} for the BPI we preserve the probabilities of the facts ($0.5$ for endogenous facts and $1$ for exogenous facts).

Since the Shapley value and the BPI entail the same complexity for Boolean CQs, it is rather straightforward that
Theorem~\ref{thm:agg-hard} and Corollary~\ref{cor:sum-dichotomy} hold for the BPI as well, and the proof is almost identical (with the main difference being the fact that for the BPI we consider subsets of facts rather than permutations); hence, we do not give the proofs again here. Note that in the proof of the tractability side of Corollary~\ref{cor:sum-dichotomy} we do not use any special property of the Shapley value, but rather the linearity of expectation, and so this proof applies to the BPI as well. Finally, the proof of Proposition~\ref{prop:max} is also very similar for the BPI, where, again, instead of counting permutation, we count subsets of facts.
\end{proof}

Note that the dichotomy of Dalvi and Suciu~\cite{10.1145/2395116.2395119} for query evaluation over probabilistic databases also applies to queries with self-joins. \revtwo{The tractable cases of their dichotomy immediately translate to the causal effect measure, as in the case of self-join-free queries.} However, the probabilities used in the hardness proofs for such queries are not necessarily $0.5$ and $1$; hence, unlike the case of self-join-free queries, the hardness results of Dalvi and Suciu~\cite{10.1145/2395116.2395119} for queries with self-joins cannot be directly translated to complexity results for the causal-effect measure that, as aforementioned, assumes a probability of $0.5$ for endogenous facts and a probability of $1$ for exogenous facts.

\revtwo{In a recent work~\cite{DBLP:conf/pods/KenigS21}, Kenig and Suciu showed that the hard cases of the dichotomy for queries with self-joins remain hard even when all the probabilities are $0.5$ or $1$. However, even this stronger result does not immediately translate to a dichotomy for the causal effect measure. In particular, for self-join-free queries, we could provide an alternative proof for non-hierarchical queries, that does not go through the query $q_\mathsf{RST}$; however, this proof is slightly more involved. To use the same idea that we use in the proof of hardness for $q_\mathsf{RST}$ for other non-hierarchical queries $q$, we first need to identify a connected component $q'$ of $q$ that is non-hierarchical, and then construct the reduction from probabilistic query answering for the query $q'$. This method cannot be used in general for queries with self-joins, as a homomorphism from the $q$ to $D$ might map several atoms of $q$ to the same fact of $D$. We could, however, use this technique to obtain hardness for some queries with self-joins. In particular, we can prove hardness for connected queries that do not use constants. A similar idea has been used in~\cite{DBLP:conf/pods/ReshefKL20} in the proof of Theorem 5.1. A complete classification of the complexity for queries with self-join remains an open problem.}

Finally, we discuss the approximate computation of the causal-effect measure. Similarly to case of the Shapley value, we can easily obtain an additive approximation via Monte Carlo Sampling. Using this method we can obtain an additive approximation of $E(\alpha(D)\mid f)$, as well as an additive approximation of $E(\alpha(D)\mid\neg f)$. Clearly, by subtracting the second value from the first one we obtain an additive approximation of $E(\alpha(D)\mid f)-E(\alpha(D)\mid\neg f)$ (with approximation factor $2\epsilon$). The story is different for multiplicative approximation, as we cannot obtain a multiplicative approximation of $x-y$ from such approximation of $x$ and $y$, unless the property of Proposition~\ref{prop:large-shap} holds. The following example shows that this is not the case for the causal-effect measure.

\begin{exa}
Consider the query $\cqrst$ and a database consisting of $n+1$ triplets of facts: $R(a_i),S(a_i,b_i),T(b_i)$ for $i\in\set{1,\dots,n+1}$, where all the facts in $R$ and $T$ are exogenous and the facts of $S$ are endogenous. Let $f=S(a_1,b_1)$. Clearly, we have that $q(D\exo\cup E\cup\set{f})-q(D\exo\cup E)=1$ only if none of the other endogenous facts appears in $E$ (i.e., $E=\emptyset$). Therefore, we have that $\causale{D,q,f}=\frac{1}{2^n}$.\qed
\end{exa}

Note that the above example does not preclude the existence of a multiplicative approximation for the causal-effect measure (and BPI). This is left open for future investigation. 
}

\section{Conclusions}\label{sec:conclusions}
We introduced the problem of quantifying the contribution of database
facts to query results via the Shapley value. We investigated the
complexity of the problem for Boolean CQs and for aggregates over
CQs. Our dichotomy in the complexity of the problem establishes that
computing the exact Shapley value is often intractable. Nevertheless,
we also showed that the picture is far more optimistic when allowing
approximation with strong precision guarantees.
{\color{black} Finally, we showed that all of our complexity results for the Shapley value, except for the existence of a multplicative approximation, also hold for the causal-effect measure (and Banzhaf power index).}
%\benny{Say something about Banzhaf}

Many questions, some quite fundamental, remain for future
investigation. While we have a thorough understanding of the complexity
for Boolean CQs without self-joins, very little is known in the
presence of self-joins. For instance, the complexity is open even for
the simple query $q()\dl R(x,y),R(y,z)$.  We also have just a partial
understanding of the complexity for aggregate functions over CQs,
beyond the general hardness result for non-hierarchical queries
(Theorem~\ref{thm:agg-hard}). In particular, it is important to
complete the complexity analysis for maximum and minimum, and to
investigate other common aggregate functions such as average, median,
percentile, and standard deviation. 
{\color{black}An additional question that is still open is whether there exists a multiplicative approximation for the causal-effect measure.}
Another direction is to
investigate whether and how properties of the database, such as low
treewidth, can reduce the (asymptotic and empirical) running time of
computing the Shapley value. Interestingly, the implication of a low
treewidth to Shapley computation has been studied for a different
problem~\cite{DBLP:conf/ijcai/GrecoLS15}.
%\benny{Say something about the open Banzhaf approximation (the gap property does not apply, on the one hand, and zeroness testing is tractable on the other hand.}

\section*{Acknowledgments}
  \noindent The work of Ester Livshits, Benny Kimelfeld and Moshe Sebag was supported by the Israel Science Foundation (ISF), grants 1295/15 and 768/19, and the Deutsche Forschungsgemeinschaft (DFG) project 412400621 (DIP program). The work of Ester
Livshits was also supported by the Technion Hiroshi Fujiwara Cyber
Security Research Center and the Israel Cyber Bureau. Leopoldo Bertossi is a member of RelationalAI's Academic Network, and has been partially funded by  the ANID - Millennium Science Initiative Program - Code ICN17-002.

%% in general the use of bibtex is encouraged

\bibliography{main}
\bibliographystyle{alpha}

\end{document}